\DeclareRobustCommand{\rvdots}{%
  \vbox{
    \baselineskip4\p@\lineskiplimit\z@
    \kern-\p@
    \hbox{.}\hbox{.}\hbox{.}
  }}
\newcommand{\ot}{\leftarrow}
\newcommand{\qopt}{{\rm QOPT}}  
\newcommand{\sfda}{{\rm SFDA}}  
\newcommand{\spe}{{\rm SPE}}    
\newcommand{\SPEM}{\textup{\texttt{SPEM}}\xspace}
\newcommand{\ACCEPT}{\texttt{ACCEPT}\xspace}
\newcommand{\REJECT}{\texttt{REJECT}\xspace}
\newcommand{\OR}{\texttt{OR}\xspace}
\newcommand{\AND}{\texttt{AND}\xspace}
\newcommand{\NOT}{\texttt{NOT}\xspace}
\newcommand{\TRUE}{\textsc{True}\xspace}
\newcommand{\FALSE}{\textsc{False}\xspace}
\newcommand{\BRANCHING}{\texttt{BRANCHING}}
\newcommand{\argmax}{\mathop{\rm arg\,max}}
\newcommand{\QSAT}{\texttt{QUANTIFIED 3SAT}\xspace}
\newtheorem{theorem}{Theorem}[section]
\newtheorem{lemma}[theorem]{Lemma}
\newtheorem{proposition}[theorem]{Proposition}
\newtheorem{property}[theorem]{Property}
\newtheorem{claim}[theorem]{Claim}
\newtheorem{observation}[theorem]{Observation}
\newtheorem{corollary}[theorem]{Corollary}
\newtheorem{conjecture}[theorem]{Conjecture}
\theoremstyle{definition}
\newtheorem{definition}[theorem]{Definition}
\newtheorem{example}[theorem]{Example}
\newtheorem{problem}[theorem]{Problem}
\theoremstyle{remark}
\newtheorem{remark}[theorem]{Remark}
\begin{document}

\title{Subgame Perfect Equilibria of Sequential Matching Gamest\thanks{A preliminary version~\cite{KYY2018} appeared in EC'18.}}
\author{Yasushi Kawase\thanks{Tokyo Institute of Technology and RIKEN AIP Center. Email: \texttt{kawase.y.ab@m.titech.ac.jp}}
\and Yutaro Yamaguchi\thanks{Osaka University and RIKEN AIP Center. Email: \texttt{yutaro\_yamaguchi@ist.osaka-u.ac.jp}}
\and Yu Yokoi\thanks{National Institute of Informatics. Email: \texttt{yokoi@nii.ac.jp}}}
\date{\empty}
\maketitle

\begin{abstract}
We study a decentralized matching market in which firms sequentially 
make offers to potential workers. For each offer, the worker can choose ``accept'' or ``reject,'' but the decision is irrevocable. 
The acceptance of an offer guarantees her job at the firm,
but it may also eliminate chances of better offers from other firms in the future. 
We formulate this market as a perfect-information extensive-form game played by the workers. 
Each instance of this game has a unique subgame perfect equilibrium (SPE), 
which does not necessarily lead to a stable matching and has some perplexing properties.

We show a dichotomy result that characterizes the complexity of computing the SPE\@.
The computation is tractable if each firm makes offers to at most two workers or each worker receives offers from at most two firms.
In contrast, it is PSPACE-hard even if both firms and workers are related to at most three offers.
We also study engineering aspects of this matching market.
It is shown that, for any preference profile, we can design an
offering schedule of firms so that the worker-optimal stable matching 
is realized in the SPE.
\end{abstract}

%\keywords{Stable Matching, Subgame Perfect Equilibrium, PSPACE-completeness}

\section{Introduction}
Imagine a decentralized job market consisting of firms and workers. 
Each firm has one position to fill and has a preference ordering over potential workers. 
Additionally, each worker has a preference on positions. 
To fill a position, each firm first makes an offer to its favorite worker, 
and if the offer is rejected, then the next offer is made to the second best worker, and so on. 
The offers of the firms are not synchronized with each other. 
That is, firms act as if they simulate an asynchronous version of 
the firm-oriented deferred acceptance algorithm~\cite{MW70,MW71}, 
which is known to find the position-optimal \emph{stable matching}
just like the original synchronous version \cite{GS62}. 
In contrast to these algorithms, in which each worker can keep a tentative contract and 
decline it when she gets a better offer, 
the current market does not allow tentative contracts. 
Once a worker receives an offer from some firm, 
she must decide immediately whether to accept or not and cannot change the decision later. 
Thus, the acceptance of an offer guarantees her job at that position,
but it may also eliminate chances of better offers from other firms in the future. 
It is assumed that the offers are made in accordance with a prescribed schedule, i.e.,
there is a linear order on the set of all possible offers. 
All the workers know this order, but whether a worker will get each scheduled offer 
from each firm or not depends on the actions of other workers.
Thus, this market has the sequential structure of decision problems encountered by the strategic workers.

\smallskip

Our model deals with any order of offers that is consistent 
with every firm's preference. 
In particular, we call a market {\em position-based} 
if all the offers by the same firm are successively placed in the order,
i.e., there is a linear order on the set of firms according to which each firm makes all its offers.
This case represents, for instance, an academic job market in which different positions 
have different hiring seasons. Job-seeking researchers know these seasons, and they can guess 
each position's preference ordering over the candidate researchers 
and also each researcher's preference ordering over the positions.
Once a researcher accepts an offer, she cannot decline it because unilaterally rescinding a contract 
damages her reputation and adversely affects her future career.
The position-based case can also be interpreted as a job market
in which institutes have public invitations in different seasons and
researchers strategically decide whether to apply or not for each invitation.
\smallskip

We formulate this market as a perfect-information extensive-form game among the workers,
which we call the {\em sequential matching game} (a formal definition will be given in Section~\ref{sec:model}).
Each round of the game corresponds to an offer from a firm. 
The offers are made in some fixed order. 
In each round, the worker who receives the offer is the player who takes an action, where the possible actions are \ACCEPT and \REJECT. If the worker chooses \ACCEPT, then the firm and the worker are matched and leave the market, and we move to a subgame in which they are removed. If the worker chooses \REJECT, then they stay in the market, and we move to a subgame in which that worker is eliminated from the preference list of the firm. The game ends if there is no firm with a nonempty list. Each strategy profile uniquely defines a matching, or an assignment obtained at the end of the game. Each worker's preference over outcome matchings depends only on her own assignment. 

This paper investigates a subgame perfect equilibrium (SPE) of this sequential matching game, i.e., an action profile that represents the best actions of the workers in all rounds under the assumption that all the other workers will take their best actions in the future. For every offer, different actions result in different outcomes for the worker: by rejecting an offer, it is impossible for her to obtain the same assignment afterwards. Hence, the best action is defined uniquely at each round by backward induction. Therefore, any instance of the sequential matching game has a unique SPE. To clarify the setting, we provide \mbox{a small example here}.

\begin{example}\label{ex:intro}
There are three research institutes $p_1$, $p_2$, and $p_3$ each of which has one position to fill.
Three job-seeking researchers $q_1$, $q_2$, and $q_3$ are awaiting offers. 
The institutes have preferences on acceptable candidates, and 
researchers have \mbox{preferences on possible institutes as follows:}
\begin{align*}
  &p_1:~ q_1                                 && q_1:~ p_2 \succ_{q_1} p_1\\
  &p_2:~ q_2 \succ_{p_2} q_1 \succ_{p_2} q_3 && q_2:~ p_3 \succ_{q_2} p_2\\
  &p_3:~ q_3 \succ_{p_3} q_2                 && q_3:~ p_2 \succ_{q_3} p_3.
\end{align*}
Each researcher prefers being matched (with an acceptable institute) to being unmatched. 
The institute $p_{1}$ starts scouting first, 
and then $p_{2}$ and $p_{3}$ follow in order. Thus, the following order is defined on the offers.
\[\text{The offering order}: (p_{1}, q_{1}),\, (p_{2}, q_{2}),\, (p_{2}, q_{1}),\, (p_{2}, q_{3}),\, (p_{3}, q_{3}),\, (p_{3}, q_{2}),\]
where offers related to previously matched institutes or researchers are skipped. 

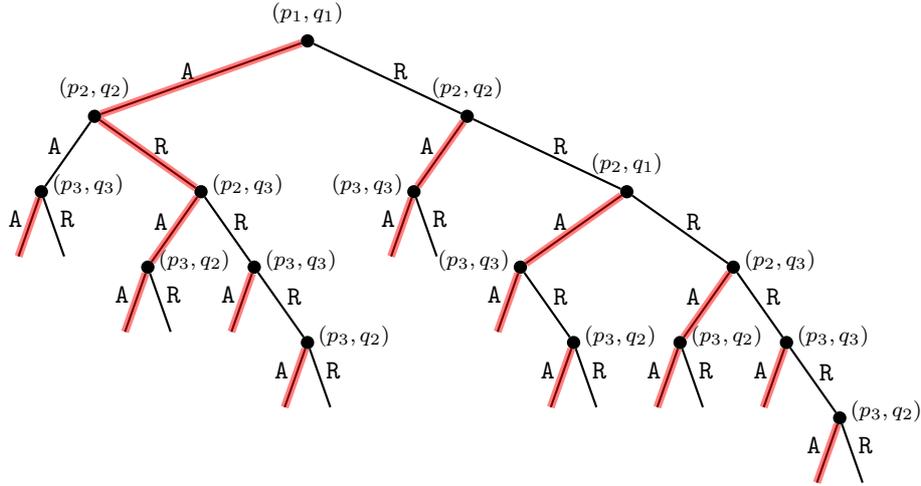
\begin{figure}[htbp]
\centering
\scalebox{1}{
\begin{tikzpicture}[thick,xscale=0.7,yscale=1.0,
    p/.style={circle,fill=black,inner sep=0pt,outer sep=0pt,minimum size=5pt}]
  % sigma(1)
  \node[p,label={[]{\scriptsize$(p_1,q_1)$}}] at (6.5,6)  (1-1) {}; 
  % sigma(2)
  \node[p,label={[]{\scriptsize$(p_2,q_2)$}}] at (2.5,5)  (2-1) {}; 
  \node[p,label={[]{\scriptsize$(p_2,q_2)$}}] at (9.5,5)  (2-2) {}; 
  % sigma(3)
  \node[p,label={[]{\scriptsize$(p_2,q_1)$}}] at (12.5,4)  (3-4) {}; 
  % sigma(4)
  \foreach \x/\i/\j in {4.5/2/4,14.5/5/3}
    \node[p,label={[right]{\scriptsize$(p_2,q_3)$}}] at (\x,\j) (4-\i) {};
  % sigma(5)
  \foreach \x/\i/\j in {1.5/1/4,5.5/3/3,15.5/7/2}
    \node[p,label={[right]{\scriptsize$(p_3,q_3)$}}] at (\x,\j) (5-\i) {};
  \foreach \x/\i/\j in {8.5/4/4,10.5/5/3}
    \node[p,label={[left]{\scriptsize$(p_3,q_3)$}}] at (\x,\j) (5-\i) {};
  % sigma(6)
  \foreach \x/\i/\j in {3.5/3/3,6.5/5/2,11.5/9/2,13.5/10/2,16.5/12/1}
    \node[p,label={[right]{\scriptsize$(p_3,q_2)$}}] at (\x,\j) (6-\i) {};

  \foreach \i/\j in {1/3,2/3,3/2,4/2,5/2,6/1,7/1,8/3,9/3,10/2,11/1,12/1,13/1,14/1,15/1,16/0,17/0} \node[] at (\i,\j) (7-\i) {};

  % accept
  \foreach \u/\v in {1-1/2-1,2-1/5-1,2-2/5-4,3-4/5-5,4-2/6-3,4-5/6-10,5-1/7-1,5-3/7-5,5-4/7-8,5-5/7-10,5-7/7-15,6-3/7-3,6-5/7-6,6-9/7-11,6-10/7-13,6-12/7-16}
    \draw (\u) -- (\v) node [pos=.5,font=\small,xshift=-5pt,yshift=3pt] {\texttt{A}};
  % reject
  \foreach \u/\v in {1-1/2-2,2-1/4-2,2-2/3-4,3-4/4-5,4-2/5-3,4-5/5-7,5-1/7-2,5-3/6-5,5-4/7-9,5-5/6-9,5-7/6-12,6-3/7-4,6-5/7-7,6-9/7-12,6-10/7-14,6-12/7-17}
    \draw (\u) -- (\v) node [pos=.5,font=\small,xshift=5pt,yshift=3pt] {\texttt{R}};
  % SPE
  \foreach \u/\v in {1-1/2-1,2-1/4-2,2-2/5-4,3-4/5-5,4-2/6-3,4-5/6-10,5-1/7-1,5-3/7-5,5-4/7-8,5-5/7-10,5-7/7-15,6-3/7-3,6-5/7-6,6-9/7-11,6-10/7-13,6-12/7-16}
    \draw[line width=3pt,red,opacity=.5] (\u) -- (\v);  
\end{tikzpicture}}
\vspace*{-4mm}
\caption{The tree representation of the game. The bold red edges indicate the SPE\@.}\label{fig:tree}
\end{figure}
When the first offer $(p_{1}, q_{1})$ is made, $q_{1}$ has two possible actions. 
If she selects \ACCEPT, then she is assigned to $p_{1}$ in the outcome matching.
If she selects \REJECT, she has to anticipate what happens after the rejection (see Fig.~\ref{fig:tree}).  
After $q_{1}$'s rejection, $p_{1}$ leaves the market because it has no other candidate.
Then $p_{2}$ makes an offer to $q_{2}$, and $q_{2}$ may select \ACCEPT or \REJECT.
Under the assumption that all workers similarly take the best actions by 
anticipating other workers' future actions, it is concluded that 
$q_{1}$ would be unmatched in the outcome if she rejects $(p_{1}, q_{1})$.
Thus, \ACCEPT is her best action in the first round.
In this manner, we can define the best action for each round.
Fig.~\ref{fig:tree} shows all possible rounds, and 
red edges represent the best actions. 
The SPE of this instance then results in an assignment 
that matches pairs $\{(p_{1}, q_{1}),\, (p_{2}, q_{3}),\, (p_{3}, q_{2})\}$.
Note that this does not coincide with a unique stable matching
$\{(p_{1}, q_{1}),\, (p_{2}, q_{2}),\, (p_{3}, q_{3})\}$ under 
this preference profile.
\end{example}

Recently, a sequential game on a matching instance has been studied actively under various settings~\cite{RV1991,APR1998,AR2000,AR2005,Pais2008,SW2008,HW2011}. 
In many of those settings, it has been shown that the SPE leads to a stable matching.
However, such a characterization of the SPE does not hold in our setting. 
Indeed, the outcome matching of the above example is unstable. 
Moreover, we will see in Section~\ref{subsec:bo} that 
the outcome matching in our setting may violate not only the standard stability but also weaker stabilities such as
\emph{vNM-stability}~\cite{Ehlers2007} and \emph{essential stability}~\cite{KT2016}.
Besides, unlike many other models, the outcome matching of the SPE changes drastically depending on the order of offers in our model. 
We also see in Section~\ref{subsec:bo} that not only the matching itself but also the set of matched firms and workers may change, i.e., 
there is no ``rural hospital theorem''-like property.
These distinctive features indicate the difficulty in capturing SPEs in our model.

This paper studies SPEs of the sequential matching game from two different perspectives: equilibrium computation and indirect market regulation.
In the first part \mbox{(Sections~\ref{sec:restrict} and \ref{sec:hardness})}, we reveal the complexity of computing the outcome matching of the SPE for a given preference profile and a given offering order.
That is, we analyze how hard it is to predict the market outcome assuming that all the workers follow their best strategies. 
In the second part (Sections~\ref{sec:imp_qopt} and \ref{sec:imp}), we see our model from an engineering perspective. 
For a given preference profile, we consider designing an offering order 
so that the outcome matching of the SPE admits a certain socially desirable property, e.g., stability. 
Here we describe the details of each part.

\subsection*{Equilibrium Computation}
In the first part, we consider the complexity of computing an SPE in our model. Note that representing an SPE itself obviously requires exponential time since the tree representation has exponential size. 
Therefore, it is more reasonable to consider the following decision problem: given an instance and a history (a possible sequence of actions), decide whether the next player selects \ACCEPT or not in the SPE\@. 
Note that each subgame of a given instance is again an instance of the sequential matching game. 
Then, we would rather consider \mbox{the following equivalent problem.} 

\begin{problem}[\SPEM]\label{prob:spem}
Given an instance of the sequential matching game, decide whether the first offered worker selects \ACCEPT or not in the SPE\@. 
\end{problem}

We call this decision problem \SPEM. Note that the outcome matching of the SPE, which we call {\em the SPE matching}, can be obtained by solving \SPEM repeatedly. Conversely, \SPEM is solved if we can compute the SPE matching. We classify subclasses of \SPEM by the maximum length of the preference lists of firms and of workers. For positive integers $s$ and $t$, the problem $(s,t)$-\SPEM is the restriction of \SPEM in which the preference list of each firm and that of each worker have at most $s$ and $t$ entries, respectively. 
Moreover, $(s,\infty)$-\SPEM and $(\infty,t)$-\SPEM denote the restrictions in which only one side has a limitation on the list length. 
We provide the following dichotomy theorem that completely characterizes
the computational complexity of $(s,t)$-\SPEM for all $s$ and $t$.

\begin{theorem}\label{thm:dichotomy}
The problem $(s,t)$-\SPEM can be solved in polynomial time if
$s\leq 2$ or $t\leq 2$.
Otherwise, it is PSPACE-complete.
\end{theorem}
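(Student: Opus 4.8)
The plan is to establish three statements: (i) $(s,t)$-\SPEM lies in PSPACE for all $s,t$; (ii) $(3,3)$-\SPEM is PSPACE-hard; and (iii) $(2,\infty)$-\SPEM and $(\infty,2)$-\SPEM are solvable in polynomial time. Since any more restrictive variant reduces trivially to a less restrictive one, (ii) gives PSPACE-hardness for all $s,t\ge 3$, (iii) gives polynomial solvability whenever $s\le 2$ or $t\le 2$, and (i) gives PSPACE membership throughout; this is exactly the dichotomy.

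\emph{Membership (i).} The number of offers in an instance, hence the depth of its game tree, is at most $\sum_{p}\lvert\mathrm{list}(p)\rvert$, which is polynomial in the input. So backward induction is realizable by a depth-first recursion: at a history where a worker $q$ moves, recursively compute the SPE matching of the subgame after \ACCEPT and of the subgame after \REJECT (each again a sequential matching game with strictly fewer offers) and return the one $q$ prefers --- this is well defined because, as observed in the excerpt, the two subgame outcomes assign $q$ distinct positions. The recursion stack has polynomial depth and each frame stores only a subgame description, so the procedure runs in polynomial space and answers \SPEM by examining the first offered worker's action.

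\emph{Hardness (ii).} Reduce from \QSAT. The guiding principle is that the SPE is evaluated by backward induction in which each deciding worker picks the continuation she prefers; so if a game's terminal outcomes come in a ``\TRUE'' family and a ``\FALSE'' family, ranked coherently by a distinguished first-moving worker $q_1$, then $q_1$'s SPE action can be arranged to equal the truth value of the instance. Alternating quantifiers are implemented by alternating incentives: an existential variable is handled by a selector worker whose preferences rank \TRUE-outcomes above \FALSE-outcomes, so that at equilibrium she sets her variable to satisfy whenever possible; a universal variable is handled by a selector worker with the reversed ranking, so that she falsifies whenever possible and the equilibrium outcome lies in the \TRUE family only if it does so for both settings of her variable. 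Above the $n$ selector workers, which are placed in the offer order according to the quantifier prefix, sits an evaluation layer built from copy gadgets that duplicate each literal's value --- needed because a variable may occur in many clauses while every worker and every firm is allowed at most three offers --- clause gadgets computing a three-input \OR of literal signals, and a chain of \AND gadgets conjoining the clause outputs. The global offer order is chosen so that the selectors lie near the root of the game tree and the whole evaluation layer near the leaves; backward induction then evaluates the matrix for the assignment determined by each root-to-leaf path and afterwards resolves the quantifiers from the inside out, so the SPE matching lands in the \TRUE family iff the quantified formula is true. \textbf{The main obstacle} is this gadget calculus: one must design gadgets realizing \OR, \AND, value-copying, and variable selection whose backward-induction value is exactly the intended Boolean function, verify that they compose, and do so under the hard cap of three offers per firm and per worker while ruling out ``shortcut'' plays --- since an acceptance is irrevocable and also deletes the accepting worker and the offering firm, each auxiliary worker must be given preferences that force her to reject unless a specific downstream event has occurred, and one must check that deleting matched pairs never severs a gadget prematurely.

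\emph{Tractability (iii).} Consider $s\le 2$; the case $t\le 2$ is handled by the dual argument. The offer graph decomposes into connected components, and since a worker's action only ever deletes or re-lists the firm and worker of the current offer --- both inside a single component --- the subgames of distinct components evolve independently however their offers interleave, and a worker's outcome depends only on her own component; hence it suffices to solve \SPEM within each component. There, every firm is incident to at most two offers, so, viewing firms as edges, a component is a multigraph on the workers, which tightly restricts how reachable subgames can differ. The plan is to sweep the offers in the given order while maintaining a succinct summary of the state --- which firms are still unmatched and, for each, which of its at most two workers remain available, together with which workers remain unmatched --- and to show, using the degree bound and this multigraph structure, that only polynomially many summaries ever arise, so that the SPE action at each offer can be computed by a dynamic program over these summaries. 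Pinning down the correct summary and its invariant and proving the polynomial bound on the number of relevant states is the crux here; the $t\le 2$ case is symmetric, with workers viewed as edges on the set of firms.
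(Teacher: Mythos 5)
Your top-level decomposition (PSPACE membership for all $s,t$; PSPACE-hardness of $(3,3)$-\SPEM; polynomial algorithms for $(2,\infty)$- and $(\infty,2)$-\SPEM) is exactly the paper's, and your membership argument via depth-first backward induction is correct and matches the paper. The problem is that in the other two parts the essential technical content is announced rather than supplied. For hardness, the reduction from \QSAT with selector workers for the quantifier prefix and an evaluation layer of \OR, \AND, \NOT, and copy gadgets is precisely the paper's architecture, but the gadgets \emph{are} the proof: one must exhibit concrete preference lists and an offering order under which each gadget's SPE behaviour realizes the intended Boolean function under the degree-$3$ cap, and verify composition (the paper does this with explicit five-firm/five-worker constructions and a carefully staged position order). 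You correctly identify this as ``the main obstacle'' and then stop, so the hardness claim is not established.

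For tractability your route genuinely diverges from the paper's, and the divergence is where the gap lies. You propose a dynamic program over ``summaries'' of reachable subgames and assert that the degree bound forces only polynomially many summaries; no invariant is given and no bound is proved. This is not a minor omission: the global state after a prefix of offers is a product of per-firm/per-worker local states, so a priori exponentially many configurations are reachable, and nothing in the degree-$2$ hypothesis obviously collapses this (indeed, at degree $3$ the problem is PSPACE-hard, so any collapse must exploit degree $2$ in a structural way). The paper avoids state enumeration entirely: it defines the sequentially fixing deferred acceptance algorithm (SFDA), which repeatedly computes $\qopt$ of the current instance and contracts the matched pair earliest in $\sigma$, and proves $\sfda(I,\sigma)=\spe(I,\sigma)$ by induction on subgames, matching the recursive characterizations of $\sfda$ and $\spe$ via lemmas on how $\qopt$ behaves under deletion and contraction (in particular, a counting argument on a length-$2$ preference list combined with the invariance of the set of unmatched agents). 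In the $(2,\infty)$ case this even yields $\spe(I,\sigma)=\qopt(I)$ independently of $\sigma$. Without either a proof that your state space is polynomial or a correctness argument of the paper's structural kind, the tractability half remains open in your write-up.
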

For the above theorem, we show the tractability of  
$(2,\infty)$-\SPEM and $(\infty,2)$-\SPEM in Section~\ref{sec:restrict} and
the PSPACE-completeness of $(3,3)$-\SPEM in Section~\ref{sec:hardness}.

We will first show that both $(2,\infty)$-\SPEM and $(\infty,2)$-\SPEM are in the complexity class P
by providing an efficient algorithm to compute the SPE matching. 
As we have observed in the above example, the SPE matching of an $(\infty,2)$-\SPEM instance 
is not necessarily stable. However, fortunately, we can compute it by an algorithm based on the deferred acceptance algorithm, which we call the {\em sequentially fixing deferred acceptance algorithm} (SFDA). 
This algorithm repeatedly computes the worker-optimal stable matching and 
fixes the matched pair that appears first in the offering order. 
The correctness of SFDA implies that, 
whenever a worker receives an offer from some firm, 
her best action is \ACCEPT if and only if she is assigned to that firm in the worker-optimal stable matching 
of the ``current'' subgame, rather than of the original instance. 
SFDA also works for $(2,\infty)$-\SPEM, i.e., it outputs the SPE matching. 
Furthermore, we can show that for $(2,\infty)$-\SPEM,
the output of SFDA coincides with the worker-optimal stable matching
independently of the offering order.
Hence, for $(2,\infty)$-\SPEM, the SPE matching is exactly the worker-optimal stable matching.

In contrast to the above tractable cases, the problem $(3,3)$-\SPEM is far from tractable.
Actually, we show that it is PSPACE-hard, which means that there exists no polynomial-time algorithm to solve it unless P${}={}$PSPACE\@.
Our proof is based on a reduction from \QSAT, which is a PSPACE-complete problem.
Even if  each preference list is of length at most three,
sequential matching games have a messy structure,
which enables us to construct gadgets simulating logic gates such as NOT, OR, and AND.

\subsection*{Order Design for Socially Desirable Matchings}
As mentioned above, in general the SPE matching in our model can be far from stable.
Moreover, the SPE matching does not attain other criteria of social welfare such as Pareto-efficiency and first-choice maximality. 
On the other hand, as we will show in Example \ref{ex:simple} (Section \ref{sec:model}), the SPE matching may differ according to offering orders. 
These facts give rise to the following question.
``Can we lead the market into socially desirable status by designing an appropriate offering schedule?'' 
This is the issue investigated in the second part.

Let put ourselves in the position of a central authority who can partially control the market by regulating an offering schedule of firms.
Suppose that we cannot regulate the behaviors of the workers but we can elicit their true preferences.
We first observe preferences of firms and workers and then design an offering schedule, i.e., an offering order, which should be consistent to each firm's preference. 
The market process proceeds on the instance that consists of the observed preference profile and the designed offering order. 
Our purpose is to design an offering order such that a desirable matching is obtained in the SPE, i.e., 
we aim to set up the market so that the workers' strategic behaviors result in larger social welfare.

Specifically, we will focus on the following criteria: stability, Pareto-efficiency, and first-choice maximality. 
Arguably, stability is the most important criterion in two-sided matching markets, which represents a kind of fairness for all firms and workers.
A stable matching is called worker-optimal (resp., firm-optimal) if it is most preferred by workers (resp., firms) among all the stable matchings. 
Worker-side Pareto-efficiency is also a well-discussed criterion when much emphasis is put on workers' welfare \cite{AS2003,APR2009,Kesten2010}.
In some applications, worker-side first-choice maximality attracts a significant interest, 
where this criterion requires that the number of workers assigned to their first choices is maximized \cite{DMS2018}.
Firm-side Pareto-efficiency and firm-side first-choice maximality are symmetrically defined.

Unfortunately, in Section~\ref{sec:imp}, it turns out that most of these criteria cannot be realized in the SPE in general.
For each of them except worker-optimal stability, we can provide a preference profile such that no offering order yields the SPE matching satisfying the desired property.
These impossibility results imply the hardness of controlling the behaviors of workers by indirect regulation, i.e., designing firms' offering schedule. 

Surprisingly, however, we discover that the worker-optimal stable matching can be achieved in the SPE for any preference profile. 
The following theorem is shown in Section~\ref{sec:imp_qopt}.
\begin{theorem}\label{thm:ordering}
Given a preference profile, one can construct an offering order so that the SPE matching is the worker-optimal stable matching.
\end{theorem}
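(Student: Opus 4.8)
The plan is to construct the offering order by induction on the number of acceptable firm--worker pairs. Throughout, let $\mu$ denote the worker-optimal stable matching of the current instance $I$; recall that every subgame of a sequential matching game is again such a game, so it suffices to choose a ``first offer'' $(p,q)$ whose active worker's move we can predict, pin down $q$'s response, and recurse on the resulting subinstance, arranging by the induction hypothesis that its SPE matching is the intended one. The construction splits on whether $\mu$ saturates every firm and matches each firm to its top choice.

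\emph{Case A: some firm $p$ has a top choice $q$ with $(p,q)\notin\mu$} (this includes every firm that $\mu$ leaves unmatched). Then I would make $(p,q)$ the first offer. Since $(p,q)$ does not block $\mu$ and $q$ is $p$'s favourite, $\mu$ matches $q$ to a firm she strictly prefers to $p$ (or else $q$ does not find $p$ acceptable). In particular $q$ never proposes to $p$ in worker-proposing deferred acceptance, so deleting the edge $(p,q)$ yields an instance $I'$ with the same worker-optimal stable matching $\mu$. Prepending $(p,q)$ to an order on $I'$ realizing $\mu$ (which exists by induction, as $I'$ has one fewer pair), worker $q$ weighs obtaining $p$ against obtaining $\mu(q)\succ_q p$, hence rejects, and play continues inside $I'$, so the SPE matching is again $\mu$. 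The accept-branch is off the equilibrium path and need not be analysed.

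\emph{Case B: $\mu$ saturates every firm and matches each firm to its top choice.} One first observes that this forces $\mu$ to be the \emph{unique} stable matching of $I$ (it coincides with the firm-optimal matching, and all stable matchings saturate the same agents), and that each worker is the top choice of at most one firm. Every possible first offer now lies in $\mu$, so I must pick a matched pair $(p,q)\in\mu$ to schedule first, and the crucial point is to pick one that is \emph{safely removable}, i.e.\ $\mu_W(I\setminus\{p,q\})=\mu\setminus\{(p,q)\}$. Granting such a pair exists, set it as the first offer and take as continuation an order on the edge-deleted instance that is simultaneously good for the accept-branch $I\setminus\{p,q\}$ (by induction) and good for the reject-branch. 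Because $\mu$ is unique, $p$ is $q$'s only stable partner, so every stable matching of $I$ with the edge $(p,q)$ deleted gives $q$ something strictly worse (or leaves her unmatched); with the reject-branch realizing its own worker-optimal stable matching, $q$'s best response is to accept, and the outcome is $\{(p,q)\}\cup(\mu\setminus\{(p,q)\})=\mu$.

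I expect Case B to be the main obstacle, for two reasons. First, one must prove that a safely removable matched pair always exists: removing the partner of a firm that is wanted by nobody except that partner does in general change the worker-optimal stable matching (a small $3\times 3$ example already shows this), so one has to select the partner of a firm that some other worker would rather have and argue that deleting this pair exposes no worker-improving rotation -- this is where the rotation/lattice structure of the set of stable matchings has to enter. Second, prepending an offer forces the off-path reject-branch to inherit the tail of the order, so the single continuation order must serve two masters at once (the accept-branch and the reject-branch); making the active worker's comparison at the prepended offer come out exactly as claimed is the delicate bookkeeping part of the argument.
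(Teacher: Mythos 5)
Your overall architecture mirrors the paper's two-phase construction: first schedule (and have rejected) every offer $(p,q)$ with $q \succ_p \qopt(I)(p)$, reducing to an instance in which each matched firm's partner is its top choice, and then schedule the matched pairs of $\qopt(I)$ one at a time so that each is accepted. Your Case~A is sound and is essentially the paper's first phase (the set $F$ and Claim~\ref{claim:F}). However, both points you flag in Case~B as ``obstacles'' are genuine gaps, and the sketches you give for closing them would not work as stated. On the first: not every pair of $\mu$ is safely removable even under your Case~B hypotheses (unique stable matching, every firm matched to its top choice). Take $p_1\colon q_1$; $\ p_2\colon q_2 \succ q_1 \succ q_3$; $\ p_3\colon q_3 \succ q_2$; $\ q_1\colon p_2 \succ p_1$; $\ q_2\colon p_3 \succ p_2$; $\ q_3\colon p_2 \succ p_3$. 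The unique stable matching is $\mu=\{(p_1,q_1),(p_2,q_2),(p_3,q_3)\}$ and every firm gets its top choice, yet contracting $(p_1,q_1)$ yields an instance whose worker-optimal stable matching is $\{(p_2,q_3),(p_3,q_2)\}\neq\mu-(p_1,q_1)$. So the choice of which pair to remove is essential; the paper's Lemma~\ref{lem:first_match} shows that one must take a pair $(p^*,q^*)\in\qopt(I)$ with $p^*\in\alpha(I)$, i.e., a firm receiving its stable partner's proposal earliest in the worker-proposing deferred acceptance algorithm (here $(p_3,q_3)$), and the proof requires the strengthened blocking lemma (Lemma~\ref{lem:blocking}), not merely the lattice structure you allude to.

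The second gap is more serious. Your argument that $q$ accepts rests on the premise that ``the reject-branch realizes its own worker-optimal stable matching.'' There is no reason for this to hold: the tail of the offering order is designed for the accept branch $I/e$, and its restriction to the different instance $I\backslash e$ is just some consistent order whose SPE matching can be far from $\qopt(I\backslash e)$ --- the unpredictability of SPE matchings under arbitrary orders is precisely the subject of Sections~\ref{subsec:bo} and~\ref{sec:hardness}. The paper does not prove your premise; instead, Claim~\ref{cl:accept} proves directly the weaker but sufficient statement $\spe(I_{i-1}\backslash e_i,\sigma_{i-1}\backslash e_i)(q_i)\prec_{q_i}p_i$, by combining Lemma~\ref{lem:top_offer} (any worker who is some remaining firm's top choice does at least as well as that firm in any SPE), the strengthened blocking lemma, and Lemma~\ref{lem:blocking_pair} (an SPE matching can only be blocked by a pair scheduled after the blocked worker's matched pair), played off against the DA-based position order. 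This argument is the technical core of the theorem and is missing from your proposal.
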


We prove this theorem by giving an efficient algorithm for constructing a desired offering order in the statement,
which is two-phased and intuitively as follows.
We first make the firms offer to all the workers preferable to their partners in the worker-optimal stable matching.
The rest of the offers are position-based, which is defined according to the order of being matched 
in the worker-oriented deferred acceptance algorithm.

It is worth remarking that Theorem~\ref{thm:ordering} has no assumption on the lengths of \mbox{preference lists}. 
This positive result somewhat compensates for the strong negative results in the first part. 
While SPE matchings are hard to compute and have perplexing properties in general, 
a regulation of firms' offering schedule can lead them to be stable for any preference profile.

\subsection*{Related Work}
The complexity of finding a Nash equilibrium in normal form games has been well studied in the context of algorithmic game theory~\cite{NRTV07}.
However, little work has been done on a subgame perfect equilibrium in extensive-form games. 
It is known that computing an SPE is PSPACE-complete for the sequential versions of unrelated machine scheduling, congestion games~\cite{LST2012}, and cost sharing games~\cite{AHK2016}.

In the context of economics, the outcomes generated by decentralized matching markets have been studied under various conditions~\cite{RV1991,APR1998,AR2000,AR2005,Pais2008,SW2008,HW2011,KB2019}.
Most existing studies have analyzed when a decentralized market will yield a stable matching.
Haeringer and Wooders \cite{HW2011} considered a similar setting to ours:
in each stage, each firm offers a position to a worker, and then each worker irrevocably chooses
whether to accept one of the received offers or to reject all of them.
Roughly speaking, their game is based on a synchronous version of the deferred acceptance algorithm, while our game is based on an asynchronous version.
They claimed that, when the workers sequentially decide their actions in each stage, every SPE results in the worker-optimal stable matching \cite[Theorem 2]{HW2011},
but this claim has been proven not true in the present study (see Remark~\ref{rem:HW2011} for details).

\section{Preliminaries}\label{sec:prel}

\subsection{Model}\label{sec:model}
In this section, we give a formal definition of the sequential matching game.
The game is played by a sequential process of acceptance/rejection for offers
in a matching instance, and is formulated as a perfect-information extensive-form game.

A \emph{matching instance} is a tuple $I = (P,Q,E,{\succ})$, where each component is defined as follows. 
There are a finite set of \emph{firms} $P=\{p_1,\ldots,p_n\}$ and a finite set of \emph{workers} $Q=\{q_1,\ldots,q_m\}$.
Each firm has one position, and we often identify each firm with its position.
The set of \emph{acceptable pairs} is denoted by $E\subseteq P\times Q$.
For $e=(p,q)\in E$, let us define $\partial_P(e)\coloneq p$ and $\partial_Q(e)\coloneq q$.
For each firm $p\in P$ and each worker $q \in Q$,
the acceptable partner sets are denoted by $\Gamma_I(p)\coloneq\{q'\in Q\mid (p,q')\in E\}$ and by $\Gamma_I(q)\coloneq\{p'\in P\mid (p',q)\in E\}$, respectively.
The set of acceptable pairs that contain $p\in P$ and $q\in Q$ are denoted by $\delta_I(p)\coloneq\{e \in E \mid \partial_P(e) = p\}$ and $\delta_I(q)\coloneq\{e\in E \mid \partial_Q(e) = q\}$, respectively.
Each $r\in P\cup Q$ has a strict ordinal preference $\succ_r$ over $\Gamma_I(r)$,
and ${\succ}$ denotes the profile $({\succ_r})_{r \in P \cup Q}$.
We sometimes write $(p, q) \succ_p (p, q')$ to denote $q \succ_p q'$ for $p \in P$.

A mapping $\mu\colon P \cup Q \to P \cup Q$ is called a \emph{matching} in $I = (P, Q, E, {\succ})$ if
$\mu(r) \in \Gamma_I(r) \cup \{r\}$ and $\mu(\mu(r)) = r$ for every $r \in P \cup Q$.
A firm or worker $r \in P \cup Q$ is said to be \emph{matched $($with $\mu(r)$$)$} if $\mu(r) \neq r$,
and to be \emph{unmatched} if $\mu(r) = r$.
For each $r \in P \cup Q$ and two matchings $\mu_1$ and $\mu_2$,
we write $\mu_1 \succ_r \mu_2$ if $\mu_1(r) \succ_r \mu_2(r)$,
where we suppose that $r$ is virtually added to the bottom of the preference $\succ_r$ over $\Gamma_I(r)$
(i.e., each $r$ prefers to be matched to an acceptable partner rather than to be unmatched).
A matching $\mu$ in $I$ is often identified with the corresponding set of acceptable pairs $\{(p, \mu(p)) \mid p \in P,~\mu(p) \neq p\} = \{(\mu(q), q) \mid q \in Q,~\mu(q) \neq q\}$.
Let $\mathcal{M}_I$ denote the set of all matchings in $I$.

A \emph{sequential matching game}\footnote{We also use the term ``sequential matching game'' to refer the set of all games defined in this way.} is defined by a pair of a matching instance $I=(P,Q,E,\succ)$ and an \emph{offering order} $\sigma$ over $E$,
which is a bijection from $\{1, \dots, |E|\}$ to $E$.
We assume that $\sigma$ is consistent with $({\succ_p})_{p\in P}$, i.e., if $\sigma(i),\sigma(j)\in \delta_I(p)$ and $\sigma(i)\succ_p \sigma(j)$ then $i<j$.
We denote the set of all consistent orders by $\Sigma_I$.
We say that $\sigma\in\Sigma_I$ is induced by a \emph{position order} $\pi \colon \{1, \dots, |P|\} \to P$ (which is also a bijection) if
$\pi^{-1}(p)<\pi^{-1}(p')$ implies $\sigma^{-1}(e)<\sigma^{-1}(e')$ for all $p,p'\in P$, $e\in\delta_I(p)$, and $e'\in\delta_I(p')$.
In addition, we say that $\sigma$ (or the game $(I, \sigma)$) is \emph{position-based} if $\sigma$ is induced by some position order.

We introduce two fundamental operations on a sequential matching game.
For a set $X$ and an element $e \in X$,
we simply denote $X \setminus \{e\}$ by $X - e$.

\begin{definition}[Deletion]
Let $e \in E$ be an acceptable pair in a matching instance $I=(P,Q,E,{\succ})$.
The {\em deletion} of $e$ from $I$ is defined as 
$I-e \coloneq (P,Q,E-e, {\succ'})$,
where ${\succ'_r}$ is the preference over $\Gamma_{I-e}(r)$ that is consistent with ${\succ_r}$ for each $r \in P \cup Q$,
i.e., $s \succ'_r t$ if and only if $s \succ_r t$ for every $s, t \in \Gamma_{I-e}(r)$.
For an order $\sigma$ over $E$, 
the {\em deletion} of $e$ from $\sigma$ is an order $\sigma -e$ over $E-e$ that is consistent with $\sigma$,
i.e., $(\sigma-e)^{-1}(e_1) < (\sigma-e)^{-1}(e_2)$ if and only if $\sigma^{-1}(e_1) < \sigma^{-1}(e_2)$ for every $e_1, e_2 \in E - e$.
\end{definition}

\begin{definition}[Contraction]
Let $e = (p, q) \in E$ be an acceptable pair in a matching instance $I=(P,Q,E,{\succ})$.
The {\em contraction} of $I$ by $e$ is defined as 
$I/e\coloneq(P-p,Q-q,E/e,{\succ'})$,
where $E/e \coloneq E \setminus (\delta_I(p) \cup \delta_I(q))$,
and ${\succ'_r}$ is the preference over $\Gamma_{I/e}(r)$ that is consistent with ${\succ_r}$ for each $r \in (P - p) \cup (Q - q)$.
For an order $\sigma$ over $E$, 
the {\em contraction} of $\sigma$ by $e$ is an
order $\sigma/e$ over $E/e$
that is consistent to $\sigma$.
\end{definition}

\begin{observation}\label{obs:commutative}
Deletion and contraction are commutative.
That is, for any two disjoint pairs $e_1, e_2 \in E$,
we have $(I/e_1)-e_2 = (I -e_2)/e_1$ and $(\sigma/e_1)-e_2 = (\sigma -e_2)/e_1$.
\end{observation}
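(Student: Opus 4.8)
The plan is to verify each of the two claimed identities componentwise, using that every operation in sight only removes vertices/edges and restricts preferences or orders, and that such restrictions are uniquely pinned down by the surviving ground set. Throughout I would write $e_1=(p_1,q_1)$ and $e_2=(p_2,q_2)$, and read ``disjoint pairs'' as $p_1\neq p_2$ and $q_1\neq q_2$; this is exactly the hypothesis that makes $e_2\in E/e_1$ and $e_1\in E-e_2$, so that both $(I/e_1)\backslash e_2$ and $(I\backslash e_2)/e_1$ (and likewise for $\sigma$) are well-defined.

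For the instance identity, the vertex sets are immediate on both sides: the firm set is $P-p_1$ and the worker set is $Q-q_1$. For the edge sets, the key point is that deleting $e_2$ changes neither $\delta_I(p_1)$ nor $\delta_I(q_1)$, since $e_2$ is incident to neither $p_1$ nor $q_1$; that is, $\delta_{I\backslash e_2}(p_1)=\delta_I(p_1)$ and $\delta_{I\backslash e_2}(q_1)=\delta_I(q_1)$. Hence
\[
(E-e_2)/e_1=(E-e_2)\setminus\bigl(\delta_I(p_1)\cup\delta_I(q_1)\bigr)=\bigl(E\setminus(\delta_I(p_1)\cup\delta_I(q_1))\bigr)-e_2=(E/e_1)-e_2,
\]
so the two edge sets coincide. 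For the preference profiles, I would observe that in each operation the new preference of a surviving $r$ is by definition the unique restriction of $\succ_r$ to $\Gamma(r)$ in the new instance; since $\Gamma(r)$ is determined by the edge set and the two edge sets agree, the two preference profiles agree as well. This gives $(I/e_1)\backslash e_2=(I\backslash e_2)/e_1$.

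The order identity follows the same template but is shorter: for any order $\sigma$ over $E$ and any $E'\subseteq E$ there is a \emph{unique} order over $E'$ consistent with $\sigma$, and this ``consistent with'' relation is transitive. Therefore $(\sigma/e_1)\backslash e_2$ and $(\sigma\backslash e_2)/e_1$ are both the unique order on the common ground set $(E/e_1)-e_2=(E-e_2)/e_1$ that is consistent with $\sigma$, and so they are equal. I do not expect a genuine obstacle; the one place requiring care is making explicit that disjointness of $e_1$ and $e_2$ is precisely what yields $\delta_{I\backslash e_2}(p_1)=\delta_I(p_1)$ and $\delta_{I\backslash e_2}(q_1)=\delta_I(q_1)$, which is the only step that uses a nontrivial hypothesis.
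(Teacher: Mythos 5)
Your proof is correct, and the paper itself offers no proof of this observation—it is stated as immediate from the definitions, and your componentwise verification (ground sets via $\delta_{I\backslash e_2}(p_1)=\delta_I(p_1)$ and $\delta_{I\backslash e_2}(q_1)=\delta_I(q_1)$, then uniqueness and transitivity of consistent restrictions for preferences and orders) is exactly the routine argument the authors leave implicit. Your reading of ``disjoint'' as $p_1\neq p_2$ and $q_1\neq q_2$, and your identification of where that hypothesis is actually used, are both right.
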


Let $e = (p, q) \in E$.
For a matching $\mu'$ in $I/e$,
we denote by $\mu' + e$ the matching $\mu$ in $I$ such that $\mu(p) = q$, $\mu(q) = p$,
and $\mu(r) = \mu'(r)$ for every $r \in (P - p) \cup (Q - q)$.
Conversely, for a matching $\mu$ in $I$ with $\mu(p) = q$ and $\mu(q) = p$,
we denote by $\mu - e$ the matching $\mu'$ in $I/e$ such that $\mu'(r) = \mu(r)$ for every $r \in (P - p) \cup (Q - q)$.
Note that these can be regarded as ordinary set operations: an element $e$ is added to a set $\mu'$ and removed from a set $\mu$, respectively.

The game $(I,\sigma)$ is recursively defined as follows.
Let $\sigma(1) = e = (p,q)$.
In the first round, $p$ offers to $q$, and $q$ chooses \ACCEPT or \REJECT for the offer.
If $q$ chooses \ACCEPT, then $p$ and $q$ are matched irrevocably, and then $(I/e,\sigma/e)$ is played
(i.e., the outcome is $\mu' + e$ for some matching $\mu'$ in $I/e$).
If $q$ chooses \REJECT, then $p$ and $q$ are unmatched irrevocably, and then $(I-e,\sigma-e)$ is played (i.e., the outcome is some matching in $I -e$).
As we have already seen in Fig.~\ref{fig:tree},
this process can be represented by a rooted tree,
where the root corresponds to the first round of the game $(I, \sigma)$, and
each node (except the leaves) corresponds to a game $(I',\sigma')$ and has two children corresponding to the two subgames $(I'/\sigma'(1), \sigma'/\sigma'(1))$ and $(I'-\sigma'(1), \sigma'-\sigma'(1))$.
We call this tree the \emph{tree representation} of the game $(I, \sigma)$.

We consider a \emph{subgame perfect equilibrium} (SPE) of the sequential matching game.
A strategy profile is a subgame perfect equilibrium if it is a Nash equilibrium of every subgame of the original game.
In every round of the game, the two actions \ACCEPT and \REJECT for an offer $(p,q)$
result in different outcome matchings, say $\mu_1$ and $\mu_2$, respectively.
One of them is preferred to the other by the worker $q$,
because $\mu_1(q) = p\neq \mu_2(q)$. 
Hence, the optimal strategy is uniquely defined in each subgame by backward induction,
and every game admits a unique SPE\@.
The outcome matching according to the SPE of a game $(I, \sigma)$ is called \emph{the SPE matching} of $(I, \sigma)$,
which is denoted by $\spe(I, \sigma)$.
The next property immediately follows from the above definitions.

\begin{proposition}\label{prop:operations_spe}
For $e=(p,q)=\sigma(1)$ in any sequential matching game $(I, \sigma)$, we have the following properties.
\begin{description}
\item[(a)] If $\spe(I-e,\sigma-e)(q)\succ_{q} p$, then 
$\spe(I,\sigma)=\spe(I-e,\sigma-e)$.
\item[(b)] If $\spe(I-e,\sigma-e)(q) \prec_{q} p$, 
then $\spe(I,\sigma)=\spe(I/e,\sigma/e)+(p,q)$.
\end{description}
\end{proposition}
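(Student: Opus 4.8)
The plan is to read off both claims directly from the recursive definition of the tree representation of $(I,\sigma)$. The root of this tree is exactly the round in which $p$ offers to $q$, and its two children are the roots of the subgames $(I/e,\sigma/e)$ (reached when $q$ plays \ACCEPT) and $(I\backslash e,\sigma\backslash e)$ (reached when $q$ plays \REJECT). Since an SPE is by definition a Nash equilibrium of \emph{every} subgame, the restriction of the (unique) SPE of $(I,\sigma)$ to the subtree hanging off the \ACCEPT-child is the (unique) SPE of $(I/e,\sigma/e)$, and likewise on the \REJECT-side. Hence the outcome produced when $q$ accepts is $\spe(I/e,\sigma/e)+e$, and the outcome produced when $q$ rejects is $\spe(I\backslash e,\sigma\backslash e)$; here I am using the $+e$ notation and the fact, noted just before the proposition, that restricting a strategy profile to a subtree induces the corresponding set operation on the resulting matching.

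It then remains to determine $q$'s action at the root under the SPE. Accepting gives $q$ the partner $p$, while rejecting gives her $\spe(I\backslash e,\sigma\backslash e)(q)$. Because $e=(p,q)$ has been deleted in $I\backslash e$, no matching of $I\backslash e$ can match $q$ to $p$, so $\spe(I\backslash e,\sigma\backslash e)(q)\neq p$; since $p\in\Gamma_I(q)$ and $\succ_q$ is extended by placing $q$ at the bottom, exactly one of $\spe(I\backslash e,\sigma\backslash e)(q)\succ_q p$ and $\spe(I\backslash e,\sigma\backslash e)(q)\prec_q p$ holds, these being precisely the hypotheses of (a) and (b). As a Nash equilibrium of the subgame rooted at the first offer, the SPE has $q$ choose the action leading to the outcome she strictly prefers: \REJECT in case (a), which yields $\spe(I,\sigma)=\spe(I\backslash e,\sigma\backslash e)$, and \ACCEPT in case (b), which yields $\spe(I,\sigma)=\spe(I/e,\sigma/e)+e$.

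I do not expect a genuine obstacle here: the argument is essentially an unwinding of the definition of SPE together with the recursive description of the game. The one point that deserves a line of care is the observation $\spe(I\backslash e,\sigma\backslash e)(q)\neq p$, which is what makes the dichotomy in the two hypotheses exhaustive and pins down $q$'s optimal move uniquely; this is also implicitly the content of the earlier remark that the two actions available in any round lead to outcomes that $q$ ranks strictly against each other. Everything else is routine bookkeeping with the deletion and contraction operations.
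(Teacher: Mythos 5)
Your argument is correct and is exactly the reasoning the paper has in mind: the paper states this proposition as following immediately from the definitions, relying on the same recursive decomposition into the \ACCEPT- and \REJECT-subgames and on the earlier observation that the two actions yield outcomes that $q$ ranks strictly (since $\spe(I\backslash e,\sigma\backslash e)(q)\neq p$). Your write-up just makes that implicit backward-induction step explicit.
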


% Example 
\begin{example}\label{ex:simple}
Let us formulate Example~\ref{ex:intro} as a sequential matching game $(I, \sigma)$ with $I = (P, Q, E, \succ)$.
The firm set is $P=\{p_1,p_2,p_3\}$ and the worker set is $Q=\{q_1,q_2,q_3\}$.
The set of acceptable pairs is
\[E=\{(p_1,q_1),\,(p_2,q_1),\,(p_2,q_2),\,(p_2,q_3),\,(p_3,q_2),\,(p_3,q_3)\},\]
and the preferences $({\succ_r})_{r \in P \cup Q}$ are
\begin{align*}
  &p_1:~ q_1                                 && q_1:~ p_2 \succ_{q_1} p_1\\
  &p_2:~ q_2 \succ_{p_2} q_1 \succ_{p_2} q_3 && q_2:~ p_3 \succ_{q_2} p_2\\
  &p_3:~ q_3 \succ_{p_3} q_2                 && q_3:~ p_2 \succ_{q_3} p_3.
\end{align*}
The offering order $\sigma\in \Sigma_I$ is
\begin{align*}
	\bigl(\sigma(1), \sigma(2), \dots, \sigma(6)\bigr) = \bigl((p_1, q_1),\,(p_2, q_2),\,(p_2, q_1),\,(p_2, q_3),\,(p_3, q_3),\,(p_3, q_2)\bigr).
\end{align*}
We can also see that 
$\sigma$ is induced by a position order $\pi \colon \{1, 2, 3\} \to P$ such that $(\pi(1),\pi(2),\pi(3))=(p_1,p_2,p_3)$.
As seen in Example~\ref{ex:intro} (see the tree representation in Fig.~\ref{fig:tree}),
the SPE matching is 
\begin{align*}
	\spe(I, \sigma) = \{(p_1, q_1),\,(p_2, q_3),\,(p_3, q_2)\}.
\end{align*}

Let us demonstrate that
a different offering order may result in a different SPE matching
even if the matching instance is exactly the same.
\mbox{Consider the above instance $I$, and let $\sigma' \in \Sigma_I$ be} 
\begin{align*}
	\bigl(\sigma'(1), \sigma'(2), \dots, \sigma'(6)\bigr) = \bigl((p_2, q_2),\,(p_2, q_1),\,(p_2, q_3),\,(p_3, q_3),\,(p_3, q_2),\,(p_1, q_1)\bigr).
\end{align*}
Note that this $\sigma'$ is also position-based
(induced by a position order $\pi'$ with $(\pi'(1),\pi'(2),\pi'(3)) = (p_2, p_3, p_1)$).
The tree representation of $(I, \sigma')$ is shown in Fig.~\ref{fig:tree2},
and the SPE matching is
\begin{align*}
	\spe(I, \sigma') = \{(p_1, q_1),\,(p_2, q_2),\,(p_3, q_3)\}.
\end{align*}

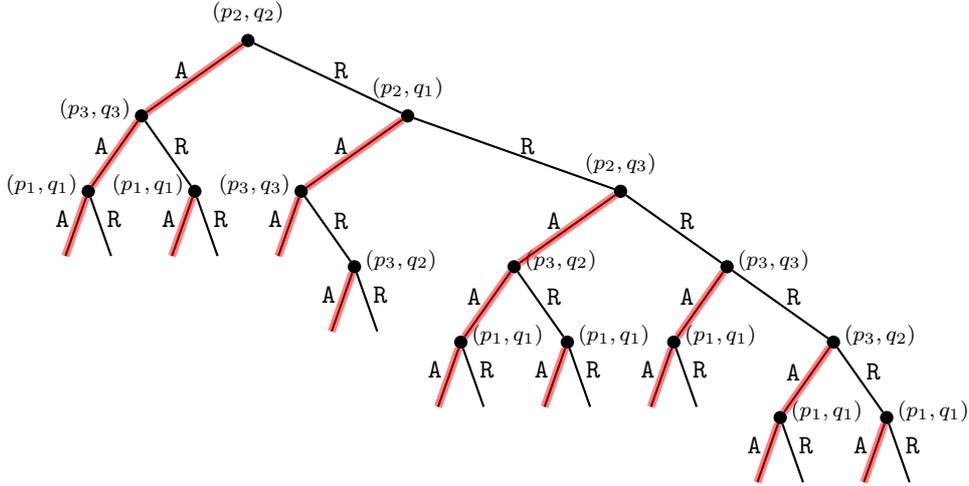
\begin{figure}[h]
\centering
\scalebox{1}{\begin{tikzpicture}[thick,xscale=0.7,yscale=1.0,
    p/.style={circle,fill=black,inner sep=0pt,outer sep=0pt,minimum size=5pt}]
  % sigma(1)
  \node[p,label={[]{\scriptsize$(p_2,q_2)$}}] at (4.5,6)  (1-1) {}; 
  % sigma(2)
  \node[p,label={[]{\scriptsize$(p_2,q_1)$}}] at (7.5,5)  (2-2) {}; 
  % sigma(3)
  \node[p,label={[xshift=0pt,yshift=0pt]{\scriptsize$(p_2,q_3)$}}] at (11.5,4) (3-3) {}; 
  % sigma(4)
  \node[p,label={[left]{\scriptsize$(p_3,q_3)$}}] at (2.5,5) (4-1) {};
  \node[p,label={[left]{\scriptsize$(p_3,q_3)$}}] at (5.5,4) (4-2) {};
  \node[p,label={[right]{\scriptsize$(p_3,q_3)$}}] at (13.5,3) (4-4) {};
    
  % sigma(5)
  \node[p,label={[right]{\scriptsize$(p_3,q_2)$}}] at (6.5,3) (5-4) {};
  \node[p,label={[right]{\scriptsize$(p_3,q_2)$}}] at (9.5,3) (5-5) {};
  \node[p,label={[right]{\scriptsize$(p_3,q_2)$}}] at (15.5,2) (5-7) {};

  % sigma(6)
  \foreach \x/\i/\j in {8.5/6/2,10.5/7/2,12.5/8/2,14.5/9/1,16.5/10/1}
    \node[p,label={[right]{\scriptsize$(p_1,q_1)$}}] at (\x,\j) (6-\i) {};
  \foreach \x/\i/\j in {1.5/1/4,3.5/2/4}
    \node[p,label={[left]{\scriptsize$(p_1,q_1)$}}] at (\x,\j) (6-\i) {};

  \foreach \i/\j in {1/3,2/3,3/3,4/3,5/3,6/2,7/2,8/1,9/1,10/1,11/1,12/1,13/1,14/0,15/0,16/0,17/0}
    \node[] at (\i,\j) (7-\i) {};

  % accept
  \foreach \u/\v in {1-1/4-1,2-2/4-2,3-3/5-5,4-1/6-1,4-2/7-5,4-4/6-8,5-4/7-6,5-5/6-6,5-7/6-9,6-1/7-1,6-2/7-3,6-6/7-8,6-7/7-10,6-8/7-12,6-9/7-14,6-10/7-16}
    \draw (\u) -- (\v) node [pos=.5,font=\small,xshift=-5pt,yshift=3pt] {\texttt{A}};
  % reject
  \foreach \u/\v in {1-1/2-2,2-2/3-3,3-3/4-4,4-1/6-2,4-2/5-4,4-4/5-7,5-4/7-7,5-5/6-7,5-7/6-10,6-1/7-2,6-2/7-4,6-6/7-9,6-7/7-11,6-8/7-13,6-9/7-15,6-10/7-17}
    \draw (\u) -- (\v) node [pos=.5,font=\small,xshift=5pt,yshift=3pt] {\texttt{R}};
  % SPE
  \foreach \u/\v in {1-1/4-1,2-2/4-2,3-3/5-5,4-1/6-1,4-2/7-5,4-4/6-8,5-4/7-6,5-5/6-6,5-7/6-9,6-1/7-1,6-2/7-3,6-6/7-8,6-7/7-10,6-8/7-12,6-9/7-14,6-10/7-16}
    \draw[line width=3pt,red,opacity=.5] (\u) -- (\v);

\end{tikzpicture}}
\vspace*{-4mm}
\caption{The tree representation of $(I, \sigma')$. The bold red edges indicate the SPE.}\label{fig:tree2}
\end{figure}
\end{example}

As mentioned in definition, for any sequential matching game, a unique SPE can be computed by the backward induction algorithm.
The computational time is, however, exponential in the input size because the tree representation has exponential size.

In this paper, we consider the problem of deciding whether $\partial_Q(\sigma(1))$ selects \ACCEPT in the first round.
We call this problem \SPEM as defined in Problem~\ref{prob:spem}.
The restriction of \SPEM in which the numbers of acceptable partners of each firm and of each worker are at most $s$ and $t$, respectively, is denoted by $(s,t)$-\SPEM.

\begin{remark}\label{rem:HW2011}
In \cite{HW2011}, similar sequential matching games were considered,
and there are several differences between their models and ours.
The most similar setting was studied in \cite[Section 4]{HW2011},
where the only difference is that,
after all firms have offered (where each firm offers to the most preferred worker among the workers to whom it has never offered),
the workers choose their actions (either to accept exactly one offer and reject the rest or to reject all offers) sequentially in a prespecified order.

For instance, let us consider the matching instance $I$ in Example~\ref{ex:simple}.
In the first stage, each firm $p_i$ offers to $q_i$ $(i=1,2,3)$.
Suppose that the first decision maker is $q_1$.
If $q_1$ rejects $p_1$'s offer, then $q_2$ and $q_3$ should accept the offers they received,
because otherwise they will be unmatched.
In this case, $q_1$ will be unmatched (no other firm will offer to $q_1$),
and hence $q_1$ should conclude to accept $p_1$'s offer.
Then, $q_2$ and $q_3$ can reject the offers from $p_2$ and $p_3$, respectively,
and they subsequently receive offers from $p_3$ and $p_2$, which are, respectively, their preferred firms, in the second stage.
Thus the SPE of this game yields a matching $\{(p_1, q_1),\, (p_2, q_3),\, (p_3, q_2)\}$, which is the same as $\spe(I, \sigma)$ in Example~\ref{ex:simple}.

In \cite[Theorem 2]{HW2011}, it is claimed that such a game enjoys a unique SPE,
and it results in the worker-optimal stable matching (which is formally defined in the next section).
However, this claim is not true, as evidenced by the above example,
in which the worker-optimal stable matching in $I$ (as well as a unique stable matching) is $\{(p_1, q_1),\, (p_2, q_2),\, (p_3, q_3)\}$.
\end{remark}

\subsection{Stability and Deferred Acceptance Algorithm}\label{sec:DA}
In this section, we briefly overview the stability and deferred acceptance algorithm (see \cite{knuth1976,GI1989,RS1991,manlove2013} for more details).
Let $I = (P, Q, E, {\succ})$ be a matching instance.

\begin{definition}[Stability \cite{GS62}]\label{def:stability}
For a matching $\mu \in \mathcal{M}_I$,
an acceptable pair $e = (p, q) \in E$ is called a \emph{blocking pair} (or \emph{blocks} $\mu$)
if $q \succ_p \mu(p)$ and $p \succ_q \mu(q)$
(where recall that each $r \in P \cup Q$ is regarded as the bottom of the preference $\succ_r$).
A matching $\mu \in \mathcal{M}_I$ is said to be \emph{stable} if there exists no blocking pair for $\mu$, and \emph{unstable} otherwise.
\end{definition}

The following statement is a one-to-one setting version of the rural hospital theorem \cite{GS1985, Roth1984, Roth1986},
which is well-known in the many-to-one stable matching setting.

\begin{theorem}[Invariance of unmatched agents \cite{MW70}]\label{thm:lwt}
If there exists a stable matching in $I$ under which $r\in P\cup Q$ is matched (resp., unmatched), then $r$ is matched (resp., unmatched) under every stable matching in $I$.
\end{theorem}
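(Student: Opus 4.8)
The plan is to prove the classical ``Lone Wolf Theorem'' (or McVitie--Wilson / Gale--Sotomayor invariance) by comparing two arbitrary stable matchings $\mu$ and $\nu$ in $I$ and showing they match exactly the same set of agents. The standard approach is a counting argument built on the structure of the symmetric difference $\mu \oplus \nu$, combined with a local exchange inequality. First I would recall the well-known fact that in any two stable matchings $\mu, \nu$, every firm $p$ that is matched in both weakly prefers one to the other, and symmetrically for workers, with opposite directions: if a worker $q$ strictly prefers $\mu(q)$ to $\nu(q)$, then the firm $\mu(q)$ strictly prefers $\nu(\mu(q))$ to $q$ (otherwise $(\mu(q), q)$ would block $\nu$). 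This is the ``opposition of interests'' lemma and it is the technical engine of the whole proof.

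The key steps, in order, would be: (1) Suppose for contradiction that some agent, say without loss of generality a worker $q_0$, is matched under $\mu$ but unmatched under $\nu$. (2) Consider the connected component $C$ of $\mu \oplus \nu$ containing $q_0$; since $q_0$ is unmatched in $\nu$, $q_0$ has degree exactly one in $\mu \oplus \nu$, so $C$ is a path (not a cycle) starting at $q_0$ with a $\mu$-edge. (3) Walk along this alternating path: $q_0$ strictly prefers its $\mu$-partner (being unmatched in $\nu$ is worst), so by opposition of interests the next firm strictly prefers its $\nu$-partner, which forces that worker to strictly prefer its $\mu$-partner, and so on. Each worker on the path strictly prefers $\mu$ and each firm strictly prefers $\nu$ — but the direction of strict preference is forced to alternate consistently. (4) Follow the path to its other endpoint: if it ends at a worker unmatched in $\mu$ or a firm unmatched in $\nu$, one checks the forced preference at that terminal vertex contradicts its being unmatched under the matching it is supposed to strictly prefer; this yields the contradiction. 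One then swaps the roles of $\mu$ and $\nu$, and of $P$ and $Q$, to cover the remaining cases.

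An alternative, perhaps cleaner, route is a pure counting argument: show $|\mu| = |\nu|$ first (e.g.\ via the fact that both equal the size of the firm-optimal stable matching, using the standard comparability lattice structure of stable matchings), and then show that the set of matched firms only shrinks as we move to a less firm-preferred stable matching along the lattice, while simultaneously the matched-worker set only shrinks going the other way — combined with equal cardinality this forces both sets to be invariant. However, this route smuggles in more machinery (the lattice theorem), so I would prefer the self-contained symmetric-difference argument sketched above, which needs only Definition~\ref{def:stability}.

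The main obstacle is handling the terminal vertices of the alternating path cleanly and making sure the ``forced alternation of strict preferences'' is genuinely a contradiction rather than merely a curiosity: one must verify that a path of even length cannot close up consistently, i.e.\ that propagating the strict-preference constraint from $q_0$ all the way around is impossible. Concretely, the crux is to show that the other endpoint of $C$ cannot be consistently labelled — whichever side (firm or worker) and whichever matching ($\mu$ or $\nu$) leaves it unmatched, the propagated constraint says it strictly prefers the partner it does not have, which is absurd. Getting this endpoint bookkeeping exactly right (including the degenerate case where the path has length one, i.e.\ $q_0$'s $\mu$-partner is unmatched in $\nu$) is the only delicate part; everything else is the routine blocking-pair check.
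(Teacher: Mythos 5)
The paper does not prove this statement at all: it is quoted as a classical result of McVitie and Wilson (the one-to-one ``rural hospital'' / lone-wolf theorem) with only a citation, so there is no in-paper argument to compare against. Your symmetric-difference proof is the standard self-contained one and it is correct. The opposition-of-interests step is exactly the right engine: if $q$ strictly prefers $\mu(q)$ to $\nu(q)$, then $(\mu(q),q)$ not blocking $\nu$ forces $\mu(q)$ to strictly prefer $\nu(\mu(q))$ to $q$, which in particular forces $\mu(q)$ to \emph{be} matched in $\nu$ (since unmatched sits at the bottom of every preference). That last consequence is worth stating explicitly, because it lets you phrase the endgame more crisply than your endpoint bookkeeping: starting from a worker $q_0$ matched in $\mu$ but not in $\nu$, every vertex reached after $q_0$ along the alternating walk is forced to be matched in both $\mu$ and $\nu$ with distinct partners, hence has degree two in $\mu\oplus\nu$, so the walk never terminates --- contradicting finiteness of $P\cup Q$. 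This subsumes your ``degenerate length-one path'' case automatically (there the first propagation step already says $\mu(q_0)$ must be matched in $\nu$, i.e.\ $(\mu(q_0),q_0)$ would otherwise block $\nu$). Your alternative lattice-based counting route would also work but, as you note, imports machinery the direct argument does not need.
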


For $\mu, \mu' \in \mathcal{M}_I$,
we write $\mu \succeq_Q \mu'$ if $\mu(q) \succeq_q \mu'(q)$ $(\forall q \in Q)$,
and write $\mu \succ_Q \mu'$ if $\mu \neq \mu'$ in addition.
It is easy to observe that $\succeq_Q$ is a partial order over $\mathcal{M}_I$.
Subject to stability in $I$,
a maximal (minimal) element with respect to this partial order is unique.

\begin{definition}[Optimality \cite{knuth1976} (attributed to J.~H.~Conway)]\label{def:QOPT}
The set of all stable matchings in $I$ forms a distributive lattice
with respect to the partial order $\succeq_Q$.\footnotemark%
  \footnotetext{For every two matchings $\mu,\mu'\in\mathcal{M}_I$, there exist
  $\mu\vee \mu'\coloneqq \min\{\nu\in\mathcal{M}_I\mid \nu\succeq_Q \mu\text{ and }\nu\succeq_Q \mu'\}$ and
  $\mu\wedge \mu'\coloneqq \max\{\nu\in\mathcal{M}_I\mid \nu\preceq_Q \mu\text{ and }\nu\preceq_Q \mu'\}$.
  In addition, for every three matchings $\mu,\mu',\mu'\in\mathcal{M}_I$, we have
  $\mu\vee(\mu'\wedge \mu'')=(\mu\vee\mu')\wedge(\mu\vee\mu'')$ and
  $\mu\wedge(\mu'\vee \mu'')=(\mu\wedge\mu')\vee(\mu\wedge\mu'')$.
}
A unique maximal element in this distributive lattice is
said to be \emph{worker-optimal} or \emph{$Q$-optimal}, and denoted by $\qopt(I)$.
That is, $\qopt(I)(q) \succeq_q \mu(q)$ for any stable matching $\mu$ in $I$ and any worker $q \in Q$.
\end{definition}

The worker-optimal stable matching can be computed by the well-known \emph{deferred acceptance (DA) algorithm} \cite{GS62}, which is described in Algorithm 1.
Through the algorithm, the proposal with respect to each pair $e = (p, q) \in E$ and its rejection
occur at most once in Lines~\ref{line:4} and \ref{line:7}, respectively,
and the number of comparisons with respect to $\succ_p$ in Line~\ref{line:6} is equal to the number of workers rejected by $p$ in Line~\ref{line:7}.
Hence, the algorithm requires $\mathrm{O}(|E|)$ time in total.

\begin{algorithm}[htb]
  \caption{\sf $Q$-oriented DA~\cite{GS62}}\label{alg:QOPT}
  \SetKwInOut{Input}{input}\Input{A matching instance $I=(P,Q,E,\succ)$}
  \SetKwInOut{Output}{output}\Output{The $Q$-optimal stable matching $\qopt(I)$} 
  \vspace{1mm}
  Set $\hat{I} \ot I$ and $\mu(r) \ot r$ for each $r\in P \cup Q$\;
  \While{$\exists q \in Q$ such that $\mu(q) = q$ and $\Gamma_{\hat{I}}(q) \neq \emptyset$\label{line:2}}{
    \For{each $q \in Q$ such that $\mu(q) = q$ and $\Gamma_{\hat{I}}(q) \neq \emptyset$}{
      Set to $\mu(q)$ the most preferred firm in $\Gamma_{\hat{I}}(q)$ ($q$ proposes to $\mu(q) \in P$)\;\label{line:4}
    }
    \For{each $p \in P$ such that $\exists q \in Q$ with $\mu(q) = p$\label{line:5}}{
      Set to $\mu(p)$ the most preferred worker in $\{q\in Q \mid \mu(q) = p\}$ (among the workers who have proposed to $p$)\;\label{line:6}
      $\mu(q) \ot q$ and $\hat{I} \ot \hat{I} -(p, q)$ for each $q \in Q$ such that $\mu(q) = p$ and $\mu(p) \neq q$ ($p$ irrevocably rejects $q$'s proposal)\;\label{line:7}
    }
  }
  \Return $\mu$\;
\end{algorithm}

We here show several properties on the worker-optimal stable matchings,
which will be utilized in Sections~\ref{sec:restrict} and \ref{sec:imp_qopt}.

\begin{property}\label{prop:first_edge}
For $p \in P$ and $q \in Q$, if $q$ is on the top of $p$'s list,
then $\qopt(I)(q)\succeq_{q} p$.
\end{property}
\begin{proof}
Suppose to the contrary that $\qopt(I)(q) \prec_q p$.
We then have $\qopt(I)(p) \prec_p q$ as $\qopt(I)(p) \neq q$,
and hence $(p, q)$ is a blocking pair, contradicting the stability of $\qopt(I)$.
\end{proof}

\begin{property}\label{prop:delete_Q1}
For $e=(p,q) \in E$, if $\qopt(I)(q)=p$, 
then $\qopt(I)(q)\succ_{q} \qopt(I-e)(q)$.
\end{property}
\begin{proof}
Suppose to the contrary that $\qopt(I-e)(q) \succ_q \qopt(I)(q)= p$.
Since $\qopt(I -e)$ is a stable matching in $I -e$,
any $e' \in E - e$ is not a blocking pair.
Then, $\qopt(I-e)$ is also a stable matching in $I$,
because $e$ cannot be a blocking pair (due to $\qopt(I-e)(q) \succ_q p$).
This contradicts the $Q$-optimality of $\qopt(I)$.
\end{proof}

\begin{property}\label{prop:contract_Q}
For $e=(p,q) \in E$, if $\qopt(I)(q)=p$, 
then $\qopt(I/e)+(p,q)\succeq_{Q}\qopt(I)$.
\end{property}
\begin{proof}
Let $\mu \coloneq \qopt(I) - e$.
Since $\qopt(I)$ is stable in $I$, there is no pair $e' = (p', q') \in E/e$
such that $q' \succ_{p'} \mu(p')$ and $p' \succ_{q'} \mu(q')$,
which implies that $\mu$ is stable in $I/e$.
By $Q$-optimality, we have $\qopt(I/e) \succeq_{Q-q} \mu$,
and hence $\qopt(I/e) + (p, q) \succeq_{Q} \mu + (p, q) = \qopt(I)$.
\end{proof}

\begin{property}\label{prop:delete_Q2}
For $e=(p,q) \in E$, if $\qopt(I)(q)\succ_{q} p$, 
then $\qopt(I)=\qopt(I-e)$.
\end{property}
\begin{proof}
Since $\qopt(I)$ is stable in $I -e$,
we have $\qopt(I) \preceq_Q \qopt(I-e)$ by $Q$-optimality.
Then, $p \prec_q \qopt(I)(q) \preceq_q \qopt(I -e)(q)$,
and hence $e = (p, q)$ cannot be a blocking pair for $\qopt(I -e)$.
Thus $\qopt(I -e)$ is also stable in $I$,
and hence $\qopt(I) \succeq_Q \qopt(I-e)$ by $Q$-optimality.
\end{proof}

\subsection{Perplexing Properties of SPE Matchings}\label{subsec:bo}
In this section, we demonstrate how difficult to capture SPE matchings in the sequential matching game is.
Specifically, we show several more properties that may not be achieved with concrete examples.
The contents are not directly related to the main results, and the readers who are convinced enough by Example~\ref{ex:intro} can skip this section.

The first one claims that SPE matchings do not satisfy a ``rural hospital theorem''-like property (cf.~Theorem~\ref{thm:lwt})
in contrast to stable matchings.

\begin{proposition}\label{prop:rural_hospital}
  There exists a matching instance such that 
  the set of firms and workers matched in the SPE matching changes according to the offering order.
\end{proposition}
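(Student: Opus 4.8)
The plan is to exhibit a single matching instance $I$ together with two consistent offering orders $\sigma$ and $\sigma'$ such that the \emph{set} of matched agents (not merely the matching itself) differs between $\spe(I,\sigma)$ and $\spe(I,\sigma')$. This is a pure existence statement, so the whole proof will consist of constructing such an instance and verifying the two SPE matchings by backward induction. The natural starting point is to modify a small instance like the one in Example~\ref{ex:intro}: there the offering order already changes the SPE matching (from $\{(p_1,q_1),(p_2,q_3),(p_3,q_2)\}$ to $\{(p_1,q_1),(p_2,q_2),(p_3,q_3)\}$), but the set of matched agents happens to stay the same. I would therefore add a firm and/or a worker whose being matched depends delicately on whether a certain chain of ``rejections'' fires, and arrange preferences so that one offering order triggers the chain while the other does not.

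Concretely, I would look for an instance with (say) four firms and four workers, where one ``pivotal'' worker $q$ is acceptable to two firms whose offers can be scheduled either before or after the offer of a firm $p$ that $q$ strictly prefers. If $p$'s offer to $q$ comes early, backward induction makes $q$ accept $p$ (because rejecting leaves $q$ unmatched down that branch), which in turn forecloses the pair that would have matched some other worker $q''$, leaving $q''$ unmatched; if instead the two competing offers to $q$ are scheduled first, $q$ can afford to reject them, wait for $p$, and the chain rearranges so that everyone (including $q''$) ends up matched. The design task is to make the two branches of the relevant subgame genuinely asymmetric in their matched-set, using short preference lists so the tree stays small enough to check by hand.

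The key steps, in order, are: (i) fix the agent sets and acceptable pairs, keeping all preference lists of length at most three so the game tree is of manageable size; (ii) write down the preference profile so that there is a ``pivotal'' node whose two children (\ACCEPT vs.\ \REJECT) yield matchings with different supports; (iii) specify $\sigma$ and $\sigma'$ — both consistent with $({\succ_p})_{p\in P}$, ideally both position-based for cleanliness — so that under $\sigma$ the pivotal offer is reached with its ``bad'' continuation and under $\sigma'$ with its ``good'' continuation; (iv) run backward induction on both trees, using Proposition~\ref{prop:operations_spe} to prune, and read off $\spe(I,\sigma)$ and $\spe(I,\sigma')$; (v) observe that the two SPE matchings have different sets of matched agents, which is exactly the claim. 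A figure showing the two tree representations with the SPE paths highlighted (as in Figures~\ref{fig:tree} and~\ref{fig:tree2}) would make the verification transparent.

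The main obstacle is step (ii)–(iii): it is easy to produce instances where the offering order changes the SPE matching but the support is invariant (indeed Theorem~\ref{thm:lwt} guarantees support-invariance for stable matchings, so one must genuinely exploit the \emph{instability} of SPE matchings). The trick is to ensure that in one ordering a worker is ``trapped'' into accepting a mediocre offer — thereby consuming a firm that some other worker needed — while in the other ordering the deferred-acceptance-like cascade has room to complete, and that these two scenarios leave a \emph{different} agent unmatched (or leave everyone matched versus leaving a pair unmatched). Once a candidate instance is in hand, the verification is a routine, if slightly tedious, traversal of two small trees, and the rest of the proof writes itself.
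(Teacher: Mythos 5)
Your overall strategy is exactly the paper's: the proposition is proved by exhibiting one concrete instance and two consistent (indeed position-based) offering orders whose SPE matchings have different supports, verified by backward induction. Your design intuition is also on target --- the paper's Example~\ref{ex:rural_hospital} uses four firms and four workers with lists of length at most three, and the mechanism is precisely the one you describe: under one position order a worker is forced to accept a mediocre offer early (consuming a firm that another worker needed, leaving that other worker and a further firm unmatched), while under the reversed order the cascade completes and all eight agents are matched. The two SPE matchings there are $\{(p_1,q_1),(p_2,q_2),(p_3,q_3),(p_4,q_4)\}$ versus $\{(p_1,q_4),(p_2,q_2),(p_3,q_1)\}$, so even the sizes differ.

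The gap is that you never actually produce the instance. For an existence statement whose proof is a counterexample, the concrete example \emph{is} the proof; a description of how one would go about searching for it, however well-motivated, establishes nothing. Steps (i)--(iii) of your plan --- fixing the agent sets, writing down the preference profile, and specifying $\sigma$ and $\sigma'$ --- are left entirely unexecuted, and step (iv), the backward-induction verification, cannot even begin without them. You correctly identify that the difficulty is concentrated in the design (since Theorem~\ref{thm:lwt} forces support-invariance for stable matchings, the example must exploit instability of SPE matchings), but that is exactly the part you have not done. To complete the proof you must write out a specific $I=(P,Q,E,\succ)$, two specific orders in $\Sigma_I$, and the two resulting SPE matchings with their (differing) supports; until then the claim is unproven.
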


\begin{example}[Proof of Proposition~\ref{prop:rural_hospital}]\label{ex:rural_hospital}
  Consider a matching instance $I=(P,Q,E,\succ)$ with four firms $P=\{p_1,p_2,p_3,p_4\}$ and four workers $Q=\{q_1,q_2,q_3,q_4\}$.
  The set of acceptable pairs is
  \begin{align*}
    E=\{
    (p_1,q_1),\,(p_1,q_2),\,(p_1,q_4),\,
    (p_2,q_2),\,(p_2,q_4),\,
    (p_3,q_1),\,(p_3,q_3),\,
    (p_4,q_4)
    \}
  \end{align*}
  and the preferences are
  \begin{align*}
    &p_1:~ q_2 \succ_{p_1} q_1 \succ_{p_1} q_4 && q_1:~ p_1 \succ_{q_1} p_3\\
    &p_2:~ q_4 \succ_{p_2} q_2 && q_2:~ p_2 \succ_{q_2} p_1\\
    &p_3:~ q_1 \succ_{p_3} q_3 && q_3:~ p_3\\
    &p_4:~ q_4 && q_4:~ p_1 \succ_{q_4} p_4 \succ_{q_4} p_2.
  \end{align*}  
  Let $\sigma_1$ and $\sigma_2$ be offering orders that are respectively induced by position orders $\pi_1$ and $\pi_2$ such that
  \begin{align*}
    \bigl(\pi_1(1),\pi_1(2),\pi_1(3),\pi_1(4)\bigr)=(p_1,p_2,p_3,p_4) ~~\text{and}~~
    \bigl(\pi_2(1),\pi_2(2),\pi_2(3),\pi_2(4)\bigr)=(p_4,p_3,p_2,p_1).
  \end{align*}
  Then, the SPE matchings are
  \begin{align*}
  \spe(I, \sigma_1) = \{(p_1,q_1),\,(p_2,q_2),\,(p_3,q_3),\,(p_4,q_4)\} ~~\text{and}~~
  \spe(I, \sigma_2) = \{(p_1,q_4),\,(p_2,q_2),\,(p_3,q_1)\}.
  \end{align*}
  Hence, the set of matched firms and workers (and even its size) is changed. 
\end{example}

\begin{example}[Alternative proof of Proposition~\ref{prop:rural_hospital}]\label{ex:rural_hospital2}
  In the previous example, the larger matching is stable, but this is not always true.
  Consider a matching instance $I=(P,Q,E,\succ)$ with four firms $P=\{p_1,p_2,p_3,p_4\}$ and four workers $Q=\{q_1,q_2,q_3,q_4\}$.
  The set of acceptable pairs is
  \begin{align*}
    E=\{
    (p_1,q_1),\,(p_1,q_2),\,(p_1,q_3),\,
    (p_2,q_2),\,(p_2,q_4),\,
    (p_3,q_2),\,(p_3,q_3),\,
    (p_4,q_1)
    \}
  \end{align*}
  and the preferences are
  \begin{align*}
    &p_1:~ q_2 \succ_{p_1} q_1 \succ_{p_1} q_3 && q_1:~ p_1 \succ_{q_1} p_4\\
    &p_2:~ q_2 \succ_{p_2} q_4 && q_2:~ p_3 \succ_{q_2} p_2 \succ_{q_2} p_1\\
    &p_3:~ q_3 \succ_{p_3} q_2 && q_3:~ p_1 \succ_{q_3} p_3\\
    &p_4:~ q_1 && q_4:~ p_2.
  \end{align*}  
  Let $\sigma_1$ and $\sigma_2$ be offering orders that are respectively induced by position orders $\pi_1$ and $\pi_2$ such that
  \begin{align*}
    \bigl(\pi_1(1),\pi_1(2),\pi_1(3),\pi_1(4)\bigr)=(p_1,p_2,p_3,p_4) ~~\text{and}~~
    \bigl(\pi_2(1),\pi_2(2),\pi_2(3),\pi_2(4)\bigr)=(p_2,p_4,p_1,p_3).
  \end{align*}
  Then, the SPE matchings are
  \begin{align*}
  \spe(I, \sigma_1) = \{(p_1,q_1),\,(p_2,q_2),\,(p_3,q_3)\} ~~\text{and}~~
  \spe(I, \sigma_2) = \{(p_1,q_3),\,(p_2,q_4),\,(p_3,q_2),\,(p_4,q_1)\}.
  \end{align*}
  Hence, the set of matched firms and workers (and even its size) is changed, and in this case,
  the smaller matching is stable.
  Moreover, by combining the above instance with that in Example~\ref{ex:rural_hospital}, one can construct a matching instance 
  such that the set of SPE matchings with all offering orders includes (i) a matching smaller than its stable matchings, (ii) a matching larger than them, and (iii) a matching incomparable to them in the sense of the set of matched firms and workers. 
\end{example}

The second one is on stability;
we already observe in Example~\ref{ex:intro} that an SPE matching may not be stable in the standard sense (Definition~\ref{def:stability}),
and this is also true for weaker stabilities, which are defined in Definitions~\ref{def:vNM} and~\ref{def:essential_stability}.

\begin{definition}[von Neumann--Morgenstern (vNM) stability~\cite{Ehlers2007}]\label{def:vNM}
  For matchings $\mu,\mu'\in\mathcal{M}_I$, we say that $\mu$ dominates $\mu'$ if some pair $(p,q)\in \mu$ blocks $\mu'$, 
  i.e., $q\succ_p\mu'(p)$ and $p\succ_q\mu'(q)$.
  A nonempty set of matchings $V\subseteq\mathcal{M}_I$ is called a \emph{vNM-stable set} if it satisfies the following two properties:
  (internal stability) any $\mu \in V$ does not dominate another $\mu' \in V$;
  (external stability) any $\mu'\in\mathcal{M}_I\setminus V$ is dominated by some $\mu\in V$.
  It is known that $V$ is uniquely determined~\cite{Ehlers2007,Wako2010}.
  We say that $\mu\in\mathcal{M}_I$ is a \emph{vNM-stable matching} or satisfies \emph{vNM-stability} if $\mu\in V$.
\end{definition}

\begin{remark}
The vNM-stable set also forms a distributive lattice, which includes all stable matchings.
Hence, the worker-optimal vNM-stable matching is defined,
and all workers weakly prefer it to the worker-optimal stable matching.
Furthermore, it is also worker-side Pareto-efficient (see Definition~\ref{def:Pareto} in Section~\ref{sec:imp}).
\end{remark}

\begin{definition}[Essential stability~\cite{KT2016}]\label{def:essential_stability}
For a matching $\mu \in \mathcal{M}_I$ and a blocking pair $(p_0,q_0)$ for $\mu$,
the \emph{reassignment chain initiated by $q_0$} is the list
$q_0, p_0, q_1, p_1,\dots, q_\ell, p_\ell$ defined as follows:
let $\mu_0 \coloneq \mu$, and for each $k = 1, 2, \dots, \ell$,
\begin{align*}
  \mu_k(q) \coloneq \begin{cases}\mu_{k-1}(q)&(q\ne q_{k-1}, \mu_{k-1}(p_{k-1}))\\p_{k-1}&(q=q_{k-1})\\q&(q=\mu_{k-1}(p_{k-1}) \eqqcolon q_k),\end{cases} 
\end{align*}
and let $p_k$ be $q_k$'s most preferred position subject to the condition that $(p_k, q_k)$ blocks $\mu_k$,
where $\ell \geq 1$ is defined so that $\mu_\ell(p_\ell) = p_\ell$ and $\mu_k(p_k)\ne p_k$ for all $k<\ell$.
We say that a blocking pair $(p_0,q_0)$ is \emph{vacuous for $q_0$} if $\mu_\ell(q_0)\ne p_0$.
A matching $\mu \in \mathcal{M}_I$ is \emph{essentially stable for the workers}\footnote{The essential stability for the firms are defined symmetrically.} if all blocking pairs $(p,q)$ for $\mu$ are vacuous for $q$.
\end{definition}

\begin{proposition}\label{prop:weak_stability}
  There exists an SPE matching that is neither vNM-stable nor essentially stable.
\end{proposition}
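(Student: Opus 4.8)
The plan is to exhibit a single sequential matching game $(I,\sigma)$ whose SPE matching $\mu\coloneq\spe(I,\sigma)$ fails both notions simultaneously, and to verify the two failures against Definitions~\ref{def:vNM} and~\ref{def:essential_stability}. A tempting first try is to reuse the three-by-three instance of Example~\ref{ex:intro}, but it does not suffice: there the unique stable matching and the SPE matching differ only on $\{p_2,q_2,p_3,q_3\}$, and on that set each of $q_2$ and $q_3$ actually \emph{prefers} the SPE outcome, so no stable matching dominates $\mu$ and vNM-instability cannot be concluded. I would therefore take a modest enlargement of that instance (in the spirit of the rural-hospital examples behind Proposition~\ref{prop:rural_hospital}), engineered so that the SPE matching is genuinely bad: in particular, so that it leaves unmatched a pair that is an edge of some stable matching. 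Having fixed a candidate $(I,\sigma)$, the first step is to pin down $\mu=\spe(I,\sigma)$; rather than unfolding the entire tree representation, I would descend its principal branch using Proposition~\ref{prop:operations_spe}, resolving subgames directly once they become small or acyclic and pruning with the $\qopt$-lemmas of Section~\ref{sec:DA}.

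For the vNM part I would rely on the observation, immediate from Definition~\ref{def:vNM}, that every stable matching belongs to every vNM stable set. (Indeed, if a stable matching $\mu^*$ lay outside a vNM stable set $V$, external stability would furnish $\nu\in V$ dominating $\mu^*$ through some coalition $S$; since $\mu^*$ is individually rational, $S$ contains some $p$ with $\nu(p)\neq p$, and setting $q\coloneq\nu(p)$ we get $q\in S$, $q\succ_p\mu^*(p)$, and $p\succ_q\mu^*(q)$, so $(p,q)$ blocks $\mu^*$, contradicting Definition~\ref{def:stability}.) It then suffices to arrange that $\mu=\spe(I,\sigma)$ is dominated by \emph{some} stable matching: if $(p,q)$ blocks $\mu$ and $\mu^*$ is a stable matching with $\mu^*(p)=q$, then $\mu^*$ dominates $\mu$ via the coalition $\{p,q\}$; since $\mu^*\in V$ for every vNM stable set $V$ while internal stability forbids $\mu$ and $\mu^*$ to coexist in $V$, the matching $\mu$ lies in no vNM stable set, hence is not vNM-stable. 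So the design target for this half is concrete: force the SPE matching to bypass a ``stable'' pair.

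For essential stability I would check directly against Definition~\ref{def:essential_stability} that a blocking pair $(p,q)$ of $\mu$ is essential, i.e., that the chain of reassignments triggered by satisfying $(p,q)$ does not eventually leave $q$ weakly worse off, and in particular does not cycle back to $\mu$; on an explicit instance this is a short finite computation. The hard part will be the joint design: producing one instance whose SPE matching is simultaneously dominated by a stable matching and carries an essential blocking pair, together with the backward-induction certification that the exhibited matching really is the SPE. Since the tree representation can be sizeable, the real effort lies in organizing that certification through Proposition~\ref{prop:operations_spe} and the $\qopt$-lemmas rather than by brute force; a secondary subtlety, handled by the observation above, is that ruling out vNM-stability is a statement about \emph{all} stable sets and so cannot be certified by a single local deviation alone.
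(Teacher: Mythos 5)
Your overall strategy is the right one, and your reduction for the vNM half is correct and even slightly cleaner than what the paper does: since every stable matching lies in the (unique) vNM-stable set, it suffices to exhibit an SPE matching that is dominated by a stable matching, whereas the paper's Example~\ref{ex:weak_stability} instead asserts that the entire vNM-stable set is a singleton. Your diagnosis that Example~\ref{ex:intro} cannot serve as the witness is also accurate. But the proposal stops exactly where the proof of an existence statement has to begin: you never write down an instance, never compute its SPE matching, and never verify a non-vacuous blocking pair. Everything after ``design target'' is a description of work still to be done, and you yourself flag the joint design and the backward-induction certification as ``the hard part.'' For a proposition whose entire content is ``there exists an example,'' that is the proof, not a detail to be filled in later.

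For comparison, the paper's witness (Example~\ref{ex:weak_stability}) is a $3\times 3$ instance with preferences $p_1\colon q_1\succ q_2$; $p_2\colon q_1\succ q_2\succ q_3$; $p_3\colon q_3\succ q_1$; $q_1\colon p_3\succ p_1\succ p_2$; $q_2\colon p_2\succ p_1$; $q_3\colon p_2\succ p_3$, with the position-based order $(p_1,p_2,p_3)$. Its SPE matching is $\{(p_1,q_2),\,(p_2,q_3),\,(p_3,q_1)\}$, whose unique blocking pair $(p_2,q_2)$ is an edge of the stable matching $\{(p_1,q_1),\,(p_2,q_2),\,(p_3,q_3)\}$ --- so your domination criterion applies directly --- and the reassignment chain initiated by $q_2$ returns $p_2$ to $q_2$, so the pair is non-vacuous and essential stability fails as well. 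Note that no ``modest enlargement'' beyond three firms and three workers is needed; the missing step in your write-up is precisely the exhibition and verification of such an instance.
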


\begin{example}[Proof of Proposition~\ref{prop:weak_stability}]\label{ex:weak_stability}
  Consider a matching instance $I=(P,Q,E,\succ)$ with three firms $P=\{p_1,p_2,p_3\}$ and three workers $Q=\{q_1,q_2,q_3\}$.
  The set of acceptable pairs is
  \begin{align*}
    E=\{(p_1,q_1),\,(p_1,q_2),\,(p_2,q_1),\,(p_2,q_2),\,(p_2,q_3),\,(p_3,q_1),\,(p_3,q_3)\}
  \end{align*}
  and the preferences are
  \begin{align*}
    &p_1:~ q_1 \succ_{p_1} q_2                 && q_1:~ p_3 \succ_{q_1} p_1 \succ_{q_1} p_2\\
    &p_2:~ q_1 \succ_{p_2} q_2 \succ_{p_2} q_3 && q_2:~ p_2 \succ_{q_2} p_1\\
    &p_3:~ q_3 \succ_{p_3} q_1                 && q_3:~ p_2 \succ_{q_3} p_3.
  \end{align*}  
  Let $\sigma$ be an offering order that is induced by a position order $\pi$ such that
  \[\bigl(\pi(1),\pi(2),\pi(3)\bigr)=(p_1,p_2,p_3).\]
  Then, $\spe(I, \sigma) = \{(p_1,q_2),\,(p_2,q_3),\,(p_3,q_1)\}$,
  while the vNM-stable set is a singleton consisting of $\{(p_1,q_1),\,(p_2,q_2),\,(p_3,q_3)\}$.
  Hence, the SPE matching does not satisfy vNM-stability.
  Moreover, this SPE matching is not essentially stable because a unique blocking pair $(p_2, q_2)$ is non-vacuous both for the workers and for the firms.
\end{example}

\section{Algorithm for Tractable Cases of SPE Computation}\label{sec:restrict}
As one side of our main dichotomy result Theorem~\ref{thm:dichotomy}, we show the following theorem.
Recall that $(2,\infty)$-\SPEM (resp., $(\infty,2)$-\SPEM) is the restriction of \SPEM in which the length of each firm's (resp., worker's) preference list is at most two.

\begin{theorem}\label{thm:restricted_1}
The problems $(2,\infty)$-\SPEM and $(\infty,2)$-\SPEM are polynomially solvable.
\end{theorem}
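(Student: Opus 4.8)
The plan is to analyze the two restrictions separately, with the common tool being the algorithm \sfda sketched in the introduction: repeatedly compute $\qopt$ of the current subgame, fix the matched pair whose offer comes first in $\sigma$, contract (or, for a first-offered pair that is not in $\qopt$, delete) and recurse. The key claim to prove is that for $e=(p,q)=\sigma(1)$, the worker $q$'s SPE action is \ACCEPT if and only if $\qopt(I)(q)=p$; granting this, \SPEM reduces to computing a single worker-optimal stable matching, which is polynomial by Algorithm~\ref{alg:QOPT}. I will establish this equivalence in whichever of the two restricted settings makes it true, and each reduction (contraction/deletion) preserves membership in the class $(2,\infty)$ or $(\infty,2)$, so the whole procedure runs in polynomial time.

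First I would handle $(\infty,2)$-\SPEM. Here each worker has at most two acceptable firms, so for $e=(p,q)=\sigma(1)$ with $q$'s list of length two, say $p' \succ_q p$ or $p \succ_q p'$. By Proposition~\ref{prop:operations_spe}, $q$ accepts iff $\spe(I\backslash e,\sigma\backslash e)(q)\prec_q p$. The point is that after deleting $e$, the worker $q$ has at most one remaining option $p'$ (or none), so by induction on $|E|$ I can pin down $\spe(I\backslash e,\sigma\backslash e)(q)$ in terms of stable matchings of $I\backslash e$: since $q$ has only one remaining choice, the backward-induction behavior of $q$ is forced, and one shows $\spe(I\backslash e,\sigma\backslash e)(q) = \qopt(I\backslash e)(q)$ or is worse. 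Then Lemmas~\ref{claim:delete_Q1}, \ref{claim:contract_Q}, \ref{claim:delete_Q2}, together with Proposition~\ref{prop:operations_spe}, let me match up $\spe(I,\sigma)$ with $\qopt(I)$ by a case analysis on whether $\qopt(I)(q)=p$, $\qopt(I)(q)\succ_q p$, or $\qopt(I)(q)\prec_q p$, inducting on $|E|$. I also need to check the recursion stays inside $(\infty,2)$: deletion never lengthens any list, and contraction removes agents, so this is immediate.

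Next, $(2,\infty)$-\SPEM: each firm offers to at most two workers. Here I would prove the stronger statement advertised in the introduction, namely $\spe(I,\sigma)=\qopt(I)$ for every consistent $\sigma$. Let $e=(p,q)=\sigma(1)$; since $\sigma$ is consistent with $\succ_p$ and $|\Gamma_I(p)|\le 2$, $q$ is $p$'s top choice, so by Lemma~\ref{claim:first_edge}, $\qopt(I)(q)\succeq_q p$. If $\qopt(I)(q)=p$, then by Lemmas~\ref{claim:delete_Q1} and \ref{claim:contract_Q} plus Proposition~\ref{prop:operations_spe}(b) and induction, $\spe(I,\sigma)=\spe(I/e,\sigma/e)+(p,q)=\qopt(I/e)+(p,q)=\qopt(I)$ (the last equality needs that $\qopt(I/e)+(p,q)$ is stable in $I$, which follows from $q$ being $p$'s top). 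If $\qopt(I)(q)\succ_q p$, then Lemma~\ref{claim:delete_Q2} gives $\qopt(I\backslash e)=\qopt(I)$, so $\spe(I\backslash e,\sigma\backslash e)(q)=\qopt(I)(q)\succ_q p$ by induction, and Proposition~\ref{prop:operations_spe}(a) yields $\spe(I,\sigma)=\spe(I\backslash e,\sigma\backslash e)=\qopt(I)$. Again the recursion preserves $(2,\infty)$. Computing $\qopt(I)$ once suffices, so the problem is in P.

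The main obstacle I anticipate is the $(\infty,2)$ case: there $\spe(I,\sigma)$ genuinely depends on $\sigma$ and need not be stable (Example~\ref{ex:intro}), so I cannot hope for the clean "$\spe=\qopt$" identity. The delicate point is proving that, at the first offer $e=(p,q)=\sigma(1)$, the rejection subgame outcome for $q$ is governed by $\qopt(I\backslash e)(q)$ — i.e., that having at most one remaining option for $q$ collapses all the downstream strategic complexity as far as $q$'s own assignment is concerned. I would isolate this as a lemma: if a worker $q$ has $|\Gamma_{I'}(q)| \le 1$ in a subgame $(I',\sigma')$, then $\spe(I',\sigma')(q) = \qopt(I')(q)$, proved by induction on $|E'|$ using Proposition~\ref{prop:operations_spe} and the invariance Lemma~\ref{claim:delete_Q2}/Theorem~\ref{thm:lwt}. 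Once that lemma is in hand, the rest is the bookkeeping case analysis described above, and correctness of \sfda (fixing the first-in-$\sigma$ pair of $\qopt$ of the current subgame and recursing) follows by combining it with Proposition~\ref{prop:operations_spe}.
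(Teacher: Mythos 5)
Your overall architecture matches the paper's: run \sfda, and reduce the first worker's decision to whether $\qopt(I)(q)=p$. But there is a genuine gap at the crux of the $(2,\infty)$ case. When $\qopt(I)(q)=p$, you assert that $\qopt(I/e)+(p,q)$ is stable in $I$ and that this ``follows from $q$ being $p$'s top.'' That only rules out blocking pairs of the form $(p,q')$. A blocking pair of the form $(p',q)$ with $p'\neq p$, $p'\succ_q p$, and $q\succ_{p'}\qopt(I/e)(p')$ is not excluded by $q$ being $p$'s top choice, and this is exactly where the length-two restriction does real work. The paper's Lemma~\ref{claim:2inf} rules it out as follows: stability of $\qopt(I)$ forces $\qopt(I)(p')\succ_{p'}q$; both $\qopt(I)-(p,q)$ and $\qopt(I/e)$ are stable in $I/e$, so Theorem~\ref{thm:lwt} gives $\qopt(I/e)(p')\neq p'$; hence $\qopt(I)(p')$, $q$, and $\qopt(I/e)(p')$ are three distinct entries of $p'$'s list, contradicting $|\Gamma_I(p')|\le 2$. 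Without this step (or a substitute) your induction for $(2,\infty)$ does not close.

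For $(\infty,2)$ you correctly isolate the hard subcase ($p$ at the bottom of $q$'s list) and propose the lemma that $|\Gamma_{I'}(q)|\le 1$ implies $\spe(I',\sigma')(q)=\qopt(I')(q)$, to be proved by induction \emph{before} establishing correctness of \sfda. As stated this is hard to prove in isolation: to peel off the first offer $(p',q')$ of a subgame with $q'\neq q$ you must already know whether $q'$ accepts, i.e., you need the global characterization of the SPE that you are ultimately trying to establish. The paper avoids this circularity by proving the corresponding fact as a property of the \emph{algorithm} alone (Lemma~\ref{claim:sfda_rural_hospital}: the matched sets of $\sfda(I,\sigma)$ and $\qopt(I)$ coincide, via Theorem~\ref{thm:lwt} applied to the recursion of \sfda), and only then running the single induction $\sfda=\spe$ (Theorem~\ref{thm:restricted_3}), in which the bottom-of-list argument is applied to $\sfda(I\backslash e,\sigma\backslash e)$ rather than to $\spe$. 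Reordering your argument that way, and repairing the $(2,\infty)$ stability step, would make the proof go through.
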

Note that, for an instance $(I, \sigma)$, 
we can solve \SPEM efficiently if we can compute $\spe(I,\sigma)$ efficiently,
because whether $\partial_Q(\sigma(1))$ selects \ACCEPT at the first round or not
corresponds to whether $\sigma(1)\in \spe(I,\sigma)$ holds or not.
Therefore, to show Theorem~\ref{thm:restricted_1},
it suffices to prove the following theorem, which is the purpose of this section.

\begin{theorem}\label{thm:restricted_2}
For any $(2,\infty)$-\SPEM or $(\infty,2)$-\SPEM instance  $(I,\sigma)$,
we can compute the SPE matching $\spe(I,\sigma)$ in polynomial time.
\end{theorem}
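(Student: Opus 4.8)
The plan is to give an algorithm — the ``sequentially fixing deferred acceptance'' algorithm (SFDA) mentioned in the introduction — that, for both restricted classes, computes $\spe(I,\sigma)$ in polynomial time. The algorithm is the natural one: repeatedly compute $\qopt$ of the current instance, look at the pair $e^\ast = (p^\ast, q^\ast)$ that comes first in the current offering order, and then branch. If $\qopt$ of the current instance matches $q^\ast$ to $p^\ast$, contract $e^\ast$ (and record $e^\ast$ as part of the output matching); otherwise delete $e^\ast$. Iterate until no firm has a nonempty list. Each iteration does one run of DA ($\mathrm{O}(|E|)$ by the discussion after Algorithm~\ref{alg:QOPT}) and removes at least one pair from $E$, so the running time is $\mathrm{O}(|E|^2)$. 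So the real content is \emph{correctness}: $\spe(I,\sigma)$ equals the matching SFDA outputs.

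**The correctness proof proceeds by induction** on $|E|$, using Proposition~\ref{prop:operations_spe}. Write $e = \sigma(1) = (p,q)$ for the first pair in the offering order; note $e = e^\ast$ since $\sigma$ is consistent with the firms' preferences, so the first pair overall is a top-of-list pair for $p$. The two cases of Proposition~\ref{prop:operations_spe} are driven by whether $\spe(I\backslash e, \sigma\backslash e)(q) \succ_q p$ or $\prec_q p$. By the induction hypothesis, $\spe(I\backslash e, \sigma\backslash e)$ is exactly what SFDA computes on $(I\backslash e, \sigma\backslash e)$, i.e.\ the SFDA-output of $I\backslash e$. So what I really need is a \emph{structural lemma}: in the two restricted settings, ``$q$ prefers the SFDA-output of $I\backslash e$ to $p$'' is equivalent to ``$\qopt(I)(q) \ne p$'', so that SFDA makes the right branch at the root. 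The key sub-claim I want is that \textbf{in both restricted classes SFDA's output is a stable matching of the current instance, and in fact equals $\qopt$ of the current instance} — at least enough of this to run the induction. Indeed, once I know SFDA$(I\backslash e)$ is stable in $I\backslash e$ (and, using $q$ on top of $p$'s list, one can argue it relates to $\qopt(I\backslash e)$ via Lemmas~\ref{claim:delete_Q1}--\ref{claim:delete_Q2}), the branch condition in Proposition~\ref{prop:operations_spe} lines up with the $\qopt(I)(q)=p$ test, closing the induction: if $\qopt(I)(q)=p$ then by Lemma~\ref{claim:delete_Q1} we are in case (b) and $\spe(I,\sigma) = \spe(I/e,\sigma/e)+e = \mathrm{SFDA}(I/e)+e = \mathrm{SFDA}(I)$; if $\qopt(I)(q)\succ_q p$ then by Lemma~\ref{claim:delete_Q2} $\qopt(I\backslash e)=\qopt(I)\succ_q p$, we are in case (a), and $\spe(I,\sigma)=\mathrm{SFDA}(I\backslash e)=\mathrm{SFDA}(I)$.

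**The place where the two cases $s\le 2$ and $t\le 2$ genuinely differ** is precisely in establishing that structural lemma — that SFDA never ``breaks'' stability as it proceeds, even though deletions shrink preference lists. For $(\infty,2)$-\SPEM (each worker has at most two acceptable firms) I would argue that when $q^\ast$ is \emph{not} matched to $p^\ast$ in $\qopt$ of the current instance, deleting $(p^\ast,q^\ast)$ either leaves $q^\ast$ with one remaining firm (so its future behavior is forced) or with none, and crucially $\qopt$ of the instance after deletion still matches everyone ``as well as before'' on the relevant side; short worker lists make the interaction between successive deletions controllable. For $(2,\infty)$-\SPEM I expect something stronger to be provable: since every firm has at most two workers and the first pair is always a firm's top choice, Lemma~\ref{claim:first_edge} plus Lemma~\ref{claim:contract_Q} let me show that contracting a $\qopt$-pair that is somebody's top choice only improves $\qopt$, and deleting a non-$\qopt$ top pair leaves $\qopt$ unchanged (Lemma~\ref{claim:delete_Q2}); iterating, \emph{the output is $\qopt(I)$ regardless of $\sigma$}, which is the stronger claim promised in the introduction. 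So my overall strategy is: (1) state SFDA precisely; (2) prove the invariant ``SFDA's partial output extends to $\qopt$ of the remaining instance'' separately for the two classes, leaning on Lemmas~\ref{claim:first_edge}--\ref{claim:delete_Q2} and the bounded list length; (3) combine this invariant with Proposition~\ref{prop:operations_spe} via induction on $|E|$ to conclude $\spe(I,\sigma)=\mathrm{SFDA}(I,\sigma)$; (4) observe the $\mathrm{O}(|E|^2)$ running time.

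**The main obstacle** I anticipate is step (2) for the $(\infty,2)$ case: deletions can make a worker's list singleton, and I must verify that the ``current'' $\qopt$ computed by SFDA genuinely tracks the SPE-relevant subgame rather than the original instance — the whole point of Example~\ref{ex:intro} is that $\qopt$ of the original instance is the \emph{wrong} object. Making the bookkeeping of ``which subgame we're in after a mix of contractions and deletions'' rigorous, and checking that short worker lists prevent the kind of cascading instability that would let SPE and $\qopt$-of-subgame diverge, is where the real work lies; Observation~\ref{obs:commutative} will be needed to rearrange contraction/deletion steps freely.
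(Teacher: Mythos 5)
Your overall architecture coincides with the paper's: define SFDA, characterize its output recursively by an accept/reject test at $\sigma(1)$ (the paper's Lemma~\ref{claim:operations_sfda}), and close an induction on subgames against Proposition~\ref{prop:operations_spe}. Your treatment of the $(2,\infty)$ case --- SFDA's output equals $\qopt$ of the current instance, via Lemmas~\ref{claim:first_edge}, \ref{claim:contract_Q} and \ref{claim:delete_Q2} --- is essentially the paper's Lemma~\ref{claim:2inf}, and your $\mathrm{O}(|E|^2)$ running-time bound is fine.

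There is, however, a genuine gap in the $(\infty,2)$ case. The ``key sub-claim'' you want to propagate --- that in both restricted classes SFDA's output is a stable matching of the current instance, indeed equals $\qopt$ --- is \emph{false} for $(\infty,2)$: Example~\ref{ex:intro} is a $(3,2)$-instance whose SPE matching $\{(p_1,q_1),(p_2,q_3),(p_3,q_2)\}$ is blocked by $(p_2,q_1)$, and since the theorem forces SFDA to output exactly this matching there, SFDA's output is neither stable nor equal to $\qopt$. So this invariant cannot close the induction. What the paper propagates instead is the weaker pair of facts: (i) $\sfda(I,\sigma)\succeq_Q\qopt(I)$ (Lemma~\ref{claim:sfda_vs_qopt}), which handles the reject branch, since $\qopt(I)(q)\neq p$ gives $\sfda(I\backslash e,\sigma\backslash e)(q)\succeq_q\qopt(I\backslash e)(q)=\qopt(I)(q)\succ_q p$; and (ii) SFDA leaves exactly the same agents unmatched as $\qopt$ does (Lemma~\ref{claim:sfda_rural_hospital}, a rural-hospital analogue proved from Theorem~\ref{thm:lwt}). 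Fact (ii) is the crux of the accept branch when $\qopt(I)(q)=p$ but $p$ is the \emph{second} entry of $q$'s two-element list: Lemma~\ref{claim:delete_Q1} gives $\qopt(I\backslash e)(q)=q$, and only (ii) lets you transfer this to $\sfda(I\backslash e,\sigma\backslash e)(q)=q$, so that rejecting leaves $q$ unmatched and she must accept. Note that (i) alone points the inequality the wrong way for this branch. Your sketch (``$\qopt$ after deletion still matches everyone as well as before''; ``short worker lists make the interaction controllable'') identifies neither ingredient; until you isolate and prove (i) and (ii), the $(\infty,2)$ half of the theorem is not established.
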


We show Theorem~\ref{thm:restricted_2} in the following way.
In Section~\ref{sec:SFDA}, we introduce an efficient algorithm,
which we call the ($Q$-oriented) {\em sequentially fixing deferred acceptance algorithm} (SFDA),  
and provide its properties.
Then, Section~\ref{sec:proof_restrict} shows that 
the output of SFDA for $(I,\sigma)$ coincides with $\spe(I,\sigma)$ if $(I,\sigma)$ is a $(2,\infty)$- or $(\infty, 2)$-\SPEM instance.

\subsection{Sequentially Fixing Deferred Acceptance Algorithm}\label{sec:SFDA}
This section introduces the ($Q$-oriented) {\em sequentially fixing deferred acceptance algorithm} (SFDA), which is sketched as follows: 
given an instance $(I,\sigma)$, 
the algorithm repeatedly computes the $Q$-optimal stable matching and 
updates the instance by removing the firm-worker pair that appears first in the offering order among all the matched pairs.
The algorithm stops the repetition when there is no acceptable pair, 
and then outputs 
the matching that consists of all the pairs removed so far. 
We denote the output matching by $\sfda(I,\sigma)$.
The formal description of SFDA is given in Algorithm \ref{alg:SFDA}.
Recall that $\qopt(I)$ in the description 
denotes the $Q$-optimal stable matching
of a matching instance $I$, which is computed by Algorithm~\ref{alg:QOPT}.

\begin{algorithm}[htb]
  \caption{\sf $Q$-oriented SFDA}\label{alg:SFDA}
  \SetKwInOut{Input}{input}\Input{An instance $(I,\sigma)$, where $I=(P,Q,E,\succ)$}
  \SetKwInOut{Output}{output}\Output{A matching $\sfda(I,\sigma)$}
  \vspace{1mm}
  Set $(\hat{I},\hat{\sigma})\ot (I,\sigma)$ and let $\mu(r)\ot r$ for every $r\in P\cup Q$\; 
  \While{there is some acceptable pair in $\hat{I}$}{
    Let $e=(p,q)$ be the pair that minimizes $\sigma^{-1}(e)$ subject to $e\in \qopt(\hat{I})$\; \label{line:fix}
    $(\hat{I},\hat{\sigma})\ot (\hat{I}/e,\hat{\sigma}/e)$\;
    $\mu(p)\ot q$ and $\mu(q)\ot  p$\;
  }
  \Return $\mu$\;
\end{algorithm}

The algorithm SFDA is inspired by the ($Q$-oriented) efficiency adjusted deferred acceptance algorithm (EADA), which is introduced by Kesten~\cite{Kesten2010} and simplified by Bando~\cite{Bando2014} and Tang and Yu~\cite{TY2014}.
While EADA iteratively fixes a last proposer\footnotemark{} under DA, our algorithm SFDA fixes the minimum pair in the predetermined order $\sigma$.
EADA produces the worker-optimal vNM-stable matching, and hence SFDA also produces it if the order $\sigma$ is consistent with the fixing order in EADA\@.
\footnotetext{This is a modified version of the algorithm EADA in~\cite{Bando2014}. In~\cite{Bando2014}, the set of all last proposers is fixed at a time, while, in this version, only one of them is fixed. Fortunately, this modification does not change the outcome.}

For any \SPEM instance $(I,\sigma)$ where $I = (P, Q, E, {\succ})$,
the output matching $\sfda(I,\sigma)$ has the following properties.

\begin{lemma}\label{claim:sfda_vs_qopt}
$\sfda(I,\sigma)\succeq_{Q}\qopt(I)$ for any \SPEM instance $(I, \sigma)$.
\end{lemma}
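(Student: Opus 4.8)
The plan is to prove by induction on the number of iterations of SFDA, equivalently on the number of pairs fixed, that every intermediate matching $\mu$ constructed by the algorithm—extended by the worker-optimal stable matching of the current residual instance—is weakly preferred by all workers to $\qopt(I)$. More precisely, I would maintain the following invariant: after $k$ iterations with fixed pairs $f_1,\dots,f_k$ and current instance $\hat I = I/f_1/\cdots/f_k$, the matching $\{f_1,\dots,f_k\} \cup \qopt(\hat I)$ satisfies $\{f_1,\dots,f_k\} \cup \qopt(\hat I) \succeq_Q \qopt(I)$. Since SFDA terminates with $\hat I$ having no acceptable pairs, at that point $\qopt(\hat I)$ is empty and the invariant gives exactly $\sfda(I,\sigma) \succeq_Q \qopt(I)$.

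For the base case $k=0$ the invariant is the trivial equality $\qopt(I) = \qopt(I)$. For the inductive step, suppose the invariant holds after $k$ iterations with current instance $\hat I$, and let $e = (p,q)$ be the pair chosen in line~\ref{line:fix}, i.e. $e \in \qopt(\hat I)$ minimizing $\sigma^{-1}(e)$. The algorithm moves to $\hat I / e$. I need $\{f_1,\dots,f_k, e\} \cup \qopt(\hat I / e) \succeq_Q \qopt(I)$. The key local fact is Lemma~\ref{claim:contract_Q} applied to $\hat I$ and the pair $e \in \qopt(\hat I)$: since $\qopt(\hat I)(q) = p$, it yields $\qopt(\hat I / e) + (p,q) \succeq_Q \qopt(\hat I)$. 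Adding back the already-fixed pairs $f_1,\dots,f_k$ (which appear in all matchings involved and on whose workers every matching agrees), this gives $\{f_1,\dots,f_k,e\} \cup \qopt(\hat I/e) \succeq_Q \{f_1,\dots,f_k\} \cup \qopt(\hat I)$, and composing with the induction hypothesis $\{f_1,\dots,f_k\} \cup \qopt(\hat I) \succeq_Q \qopt(I)$ closes the step by transitivity of $\succeq_Q$.

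A small technical point to handle carefully: $\qopt(\hat I/e)$ is a matching on the worker set $Q$ minus the previously removed workers minus $q$, and I am implicitly viewing all these matchings as matchings on the full worker set $Q$ by declaring every already-removed or already-matched worker $q'$ to be assigned its fixed partner $\mu(q')$. One must check this identification is consistent—namely that the partial matchings agree on workers not yet removed, which holds because contraction only deletes pairs incident to removed firms/workers and leaves the preferences of the survivors unchanged (Observation~\ref{obs:commutative} is not even needed, just the definition of contraction). With that bookkeeping in place, the comparison $\succeq_Q$ over the full worker set is well-defined and the argument above is literally a chain of $\succeq_Q$ relations.

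The main obstacle is not any deep argument—Lemma~\ref{claim:contract_Q} does all the real work—but rather setting up the induction cleanly so that the "extend by $\qopt$ of the residual instance" matching is unambiguously a matching on $Q$ and the monotonicity Lemma~\ref{claim:contract_Q} can be invoked at the residual instance $\hat I$ rather than at $I$. I would state the invariant as a separate claim, prove it by induction as above, and then read off Lemma~\ref{claim:sfda_vs_qopt} as the $k = (\text{number of iterations})$ case. Note this argument uses no restriction on preference-list lengths, so it holds for every \SPEM instance as claimed.
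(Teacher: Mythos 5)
Your proof is correct and is essentially the paper's argument: both peel off the fixed pairs one at a time and invoke Lemma~\ref{claim:contract_Q} at the residual instance to get $\qopt(\hat I/e)+(p,q)\succeq_Q\qopt(\hat I)$, then chain these relations together. The only difference is presentational---you run a forward induction with a loop invariant over the iterations of SFDA, while the paper phrases it as a backward induction on subgames using $\sfda(I,\sigma)=\sfda(I/e,\sigma/e)+(p,q)$---and your bookkeeping remarks about viewing the partial matchings on the full worker set are exactly the implicit identifications the paper makes.
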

\begin{proof}
We use induction on subgames. 
Let $e=(p,q)$ be the first pair chosen at Line~\ref{line:fix} in Algorithm~\ref{alg:SFDA}. 
Then, $\sfda(I,\sigma)=\sfda(I/e,\sigma/e)+(p,q)$.
Since $\sfda(I/e,\sigma/e)\succeq_{Q}\qopt(I/e)$ by induction,
then $\sfda(I,\sigma)=\sfda(I/e,\sigma/e)+(p,q)\succeq_{Q}\qopt(I/e)+(p,q)\succeq_{Q}\qopt(I)$,
where the last relation follows from Property~\ref{prop:contract_Q}.
\end{proof}

\begin{lemma}\label{claim:operations_sfda}
For $e=(p,q)=\sigma(1)$ in any \SPEM instance $(I, \sigma)$, we have the following properties.
\begin{description}
\item[(a)] If $\qopt(I)(q)\neq p$, 
then $\sfda(I,\sigma)=\sfda(I-e,\sigma-e)$.
\item[(b)] If $\qopt(I)(q)= p$, 
then $\sfda(I,\sigma)=\sfda(I/e,\sigma/e)+(p,q)$.
\end{description}
\end{lemma}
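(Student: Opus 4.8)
The plan is to trace the first iteration of the while-loop in Algorithm~\ref{alg:SFDA} run on $(I,\sigma)$ and compare it with the first pair chosen when the algorithm is run on $(I\backslash e,\sigma\backslash e)$ or $(I/e,\sigma/e)$. Since $e=(p,q)=\sigma(1)$ has the globally smallest index under $\sigma$, the key question in each case is whether $e\in\qopt(I)$, and this is exactly what distinguishes (a) from (b). So the argument splits according to the hypothesis, and in each case I want to identify the pair $e'$ fixed in the very first iteration of SFDA on $(I,\sigma)$ and show that running SFDA on the smaller instance reaches the same state.

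For part (b), suppose $\qopt(I)(q)=p$, i.e.\ $e\in\qopt(I)$. Since $\sigma^{-1}(e)=1$ is minimum over all of $E$, in particular it is minimum over $\qopt(I)$, so Line~\ref{line:fix} selects $e$ itself in the first iteration. The algorithm then replaces $(\hat I,\hat\sigma)$ by $(I/e,\sigma/e)$ and records $(p,q)$ in $\mu$; from that point on it behaves exactly as SFDA run on the instance $(I/e,\sigma/e)$. Hence $\sfda(I,\sigma)=\sfda(I/e,\sigma/e)+(p,q)$, which is the claim. (Strictly, I should note that $\sigma/e$ is a well-defined consistent order on $E/e$ and that it is the restriction of $\sigma$ used by the recursive call, which is immediate from the definition of contraction of an order.)

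For part (a), suppose $\qopt(I)(q)\neq p$, i.e.\ $e\notin\qopt(I)$. The first obstacle, and really the only nontrivial point, is to show $\qopt(I)=\qopt(I\backslash e)$, because then every subsequent choice of SFDA on $(I,\sigma)$ and on $(I\backslash e,\sigma\backslash e)$ is made with respect to the same $Q$-optimal stable matching and the same induced order, so the two runs coincide step by step and produce the same output. To get $\qopt(I)=\qopt(I\backslash e)$ I would invoke Lemma~\ref{claim:delete_Q2}: it suffices to verify its hypothesis $\qopt(I)(q)\succ_q p$. This is where I need to use that $q=\partial_Q(\sigma(1))$ and the consistency of $\sigma$ with firms' preferences. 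Since $\sigma(1)=(p,q)$ and $\sigma$ is consistent with $\succ_p$, the pair $(p,q)$ is the first pair of $\delta_I(p)$ in $\sigma$, which forces $q$ to be the top of $p$'s preference list. Then Lemma~\ref{claim:first_edge} gives $\qopt(I)(q)\succeq_q p$, and since we assumed $\qopt(I)(q)\neq p$ we get the strict inequality $\qopt(I)(q)\succ_q p$, as needed. Applying Lemma~\ref{claim:delete_Q2} yields $\qopt(I)=\qopt(I\backslash e)$, and then a clean induction on $|E|$ (or a direct ``the two runs are identical'' argument, noting that contraction by a pair $e'\neq e$ and deletion of $e$ commute by Observation~\ref{obs:commutative}) finishes the proof.

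I expect the main obstacle to be the bookkeeping in part (a): one must argue carefully that not only the first choice but \emph{every} subsequent choice of SFDA agrees on the two instances. The cleanest way is induction on the number of acceptable pairs: having shown $\qopt(I)=\qopt(I\backslash e)$, the first fixed pair $e'$ is the same on both instances, and $e'\neq e$ (since $e\notin\qopt(I)$), so by Observation~\ref{obs:commutative} the resulting sub-instances are $(I/e')\backslash e$ and $(I\backslash e)/e' = (I/e')\backslash e$ with the correspondingly contracted orders, to which the induction hypothesis applies.
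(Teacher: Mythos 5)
Your proposal is correct and follows essentially the same route as the paper: part (b) is immediate from the minimality of $\sigma^{-1}(e)$, and part (a) combines Lemma~\ref{claim:first_edge} (using consistency of $\sigma$ to place $q$ at the top of $p$'s list) with Lemma~\ref{claim:delete_Q2} to get $\qopt(I)=\qopt(I\backslash e)$, then Observation~\ref{obs:commutative} and induction on subgames. The one step you leave implicit is that before invoking the induction hypothesis on $(I/e',\sigma/e')$ with the pair $e$ you must re-establish its precondition $\qopt(I/e')(q)\neq p$; the paper does this explicitly via Lemma~\ref{claim:contract_Q}, which gives $\qopt(I/e')(q)\succeq_q \qopt(I)(q)\succ_q p$, so you should add that line to close the induction.
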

\begin{proof}
By definition, (b) is clear; since $e=(p,q)=\sigma(1)$, 
the condition $\qopt(I)(q)=p$ means that $e$ is the first pair chosen at Line~\ref{line:fix}, 
and hence $\sfda(I,\sigma)=\sfda(I/e,\sigma/e)+(p,q)$.

We show (a) by induction on subgames. 
Since $e=(p,q)=\sigma(1)$ and $\qopt(I)(q)\neq p$, Property~\ref{prop:first_edge} implies $\qopt(I)(q)\succ_{q} p$.
Then Property~\ref{prop:delete_Q2} implies $\qopt(I)=\qopt(I-e)$.
Let $e^{\ast} \in \qopt(I)$ 
be the first pair chosen at Line~\ref{line:fix}.
As $\qopt(I)=\qopt(I-e)$, SFDA also chooses $e^{\ast}$ when the input is $(I-e, \sigma-e)$.
By definition, $\sfda(I,\sigma)=\sfda(I/e^{\ast},\sigma/e^{\ast})+e^{\ast}$ and
$\sfda(I-e,\sigma-e)=\sfda((I-e)/e^{\ast},(\sigma-e)/e^{\ast})+e^{\ast}
=\sfda((I/e^{\ast})-e,(\sigma/e^{\ast})-e)+e^{\ast}$.
Note that Property~\ref{prop:contract_Q} implies 
$\qopt(I/e^{\ast})(q)\succeq_{q} \qopt(I)(q)\succ_{q} p$,
and hence
$\sfda(I/e^{\ast},\sigma/e^{\ast})=\sfda((I/e^{\ast})-e,(\sigma/e^{\ast})-e)$ 
by induction.
Thus, we obtain $\sfda(I,\sigma)=\sfda(I-e, \sigma-e)$.
\end{proof}

The output of SFDA depends on the order $\sigma$, 
i.e., we may have $\sfda(I,\sigma)\neq \sfda(I,\sigma')$ if $\sigma\neq \sigma'$.
However, the set of firms and workers matched in $\sfda(I,\sigma)$ 
coincides with that of $\qopt(I)$ independently of $\sigma$,
i.e., the following lemma holds (in contrast to Proposition~\ref{prop:rural_hospital} for $\spe(I, \sigma)$).
\begin{lemma}\label{claim:sfda_rural_hospital}
For any $r\in P\cup Q$ in any \SPEM instance $(I, \sigma)$,
we have $\sfda(I,\sigma)(r)= r$ if and only if $\qopt(I)(r)= r$.
\end{lemma}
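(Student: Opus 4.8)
The plan is to prove Lemma~\ref{claim:sfda_rural_hospital} by induction on the number of offers $|E|$ in the instance $(I,\sigma)$, using the recursive characterization of $\sfda$ from Lemma~\ref{claim:operations_sfda} together with the classical invariance theorem for stable matchings (Theorem~\ref{thm:lwt}). The base case $|E|=0$ is trivial since then $\sfda(I,\sigma)(r)=r=\qopt(I)(r)$ for all $r$. For the inductive step, write $e=(p,q)=\sigma(1)$ and split into the two cases of Lemma~\ref{claim:operations_sfda}.

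\textbf{Case (b): $\qopt(I)(q)=p$.}
Here $\sfda(I,\sigma)=\sfda(I/e,\sigma/e)+(p,q)$. For $p$ and $q$ themselves, both $\sfda(I,\sigma)$ and $\qopt(I)$ match them (to each other), so the claim holds for $r\in\{p,q\}$. For $r\in(P-p)\cup(Q-q)$, the induction hypothesis applied to $(I/e,\sigma/e)$ gives $\sfda(I/e,\sigma/e)(r)=r \iff \qopt(I/e)(r)=r$. It remains to connect $\qopt(I/e)$ with $\qopt(I)$: I would invoke Lemma~\ref{claim:contract_Q}, which gives $\qopt(I/e)+(p,q)\succeq_Q\qopt(I)$, and combine it with the fact that $\qopt(I)-e$ is stable in $I/e$ (shown inside the proof of Lemma~\ref{claim:contract_Q}) plus Theorem~\ref{thm:lwt} applied within $I/e$: since $\qopt(I/e)$ and $\qopt(I)-e$ are both stable in $I/e$, they match exactly the same set of agents. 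Hence $\qopt(I/e)(r)=r \iff (\qopt(I)-e)(r)=r \iff \qopt(I)(r)=r$ for $r\in(P-p)\cup(Q-q)$, and the chain of equivalences closes this case.

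\textbf{Case (a): $\qopt(I)(q)\neq p$.}
Here $\sfda(I,\sigma)=\sfda(I\backslash e,\sigma\backslash e)$. As in the proof of Lemma~\ref{claim:operations_sfda}, Lemma~\ref{claim:first_edge} gives $\qopt(I)(q)\succ_q p$, and then Lemma~\ref{claim:delete_Q2} gives $\qopt(I)=\qopt(I\backslash e)$. The induction hypothesis applied to $(I\backslash e,\sigma\backslash e)$ (which has one fewer offer, with the same agent set) yields $\sfda(I\backslash e,\sigma\backslash e)(r)=r \iff \qopt(I\backslash e)(r)=r$ for every $r\in P\cup Q$. Chaining these three facts gives $\sfda(I,\sigma)(r)=r \iff \qopt(I)(r)=r$, completing the step.

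\textbf{Main obstacle.}
The one genuinely non-routine point is the agent-set comparison between $\qopt(I/e)$ and $\qopt(I)$ in Case~(b): one must argue that contracting the pair $e$ (which $\qopt(I)$ already uses) does not create or destroy unmatched agents among the remaining ones. The clean way is exactly as above — observe $\qopt(I)-e$ is stable in $I/e$, so $\qopt(I/e)$ matches the same agents as $\qopt(I)-e$ by Theorem~\ref{thm:lwt} in the instance $I/e$ — but one should state explicitly that $\qopt(I)-e$ is indeed a matching in $I/e$ and stable there, which is precisely what the first two sentences of the proof of Lemma~\ref{claim:contract_Q} establish. Everything else is bookkeeping with the already-proved lemmas.
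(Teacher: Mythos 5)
Your proof is correct and rests on exactly the same key step as the paper's: observing that $\qopt(I)-e$ is stable in $I/e$ and invoking Theorem~\ref{thm:lwt} there to transfer the matched/unmatched status between $\qopt(I/e)$ and $\qopt(I)$. The only difference is bookkeeping --- you peel off $\sigma(1)$ and split into the deletion/contraction cases of Lemma~\ref{claim:operations_sfda}, whereas the paper inducts directly on the first pair actually fixed at Line~\ref{line:fix} and so needs only your Case~(b); your extra Case~(a) is handled correctly via Lemmas~\ref{claim:first_edge} and~\ref{claim:delete_Q2}.
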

\begin{proof}
We use induction on subgames. 
Let $e=(p,q)$ be the first pair chosen at Line~\ref{line:fix} in Algorithm~\ref{alg:SFDA}.
Clearly, $p$ and $q$ are matched both in $\sfda(I,\sigma)$ and in $\qopt(I)$.
For each $r\in (P-p)\cup (Q-q)$, we have
$\sfda(I,\sigma)(r)=\sfda(I/e,\sigma/e)(r)$ by definition.
Also, by induction, 
we have $\sfda(I/e,\sigma/e)(r)= r$ if and only if $\qopt(I/e)(r)= r$.
Note that $\qopt(I)-(p,q)$ is a stable matching in $I/e$ 
by the definition of the stability. 
Then Theorem~\ref{thm:lwt} implies that, for each $r\in (P-p)\cup (Q-q)$, 
$\qopt(I/e)(r)= r$ if and only if $\qopt(I)(r)= r$.
Combining these, we obtain that, for every $r\in P\cup Q$,
we have $\sfda(I,\sigma)(r)= r$ if and only if $\qopt(I)(r)= r$.
\end{proof}
Finally, we analyze the time complexity of SFDA\@. 
The algorithm repeats finding the $Q$-optimal stable matching of each updated instance, 
which is done in $\mathrm{O}(|E|)$ time by Algorithm~\ref{alg:QOPT} (see Section~\ref{sec:DA}). 
Because one pair is fixed at each iteration stage, the number of iteration is at most $\min\{|P|, |Q|\}$.
Thus, the algorithm SFDA runs in $\mathrm{O}(|E|\cdot\min\{|P|, |Q|\})$ time.
\begin{lemma}\label{claim:time_comp}
For any \SPEM instance $(I, \sigma)$, the matching $\sfda(I,\sigma)$ can be computed in polynomial time.
\end{lemma}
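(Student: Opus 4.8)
The claim is really a bookkeeping statement about Algorithm~\ref{alg:SFDA}, so the plan is to bound the number of iterations of its while loop together with the cost of a single iteration. First I would observe that every pass through the loop performs exactly one contraction $(\hat{I},\hat{\sigma})\gets(\hat{I}/e,\hat{\sigma}/e)$, which deletes one firm and one worker from the current instance. Consequently the loop terminates after at most $\min\{|P|,|Q|\}$ iterations: after that many contractions one of the two sides is exhausted, so no acceptable pair survives and the termination condition is met. (Each matched pair is fixed exactly once, so the returned $\mu$ is indeed a matching in $I$.)

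Next I would account for the work done in one iteration. Computing $\qopt(\hat{I})$ by Algorithm~\ref{alg:QOPT} takes $\mathrm{O}(|E|)$ time, as established in Section~\ref{sec:DA}; selecting the pair $e\in\qopt(\hat{I})$ minimizing $\sigma^{-1}(e)$ requires scanning the at most $\min\{|P|,|Q|\}$ matched pairs and comparing their positions under the (fixed) order $\sigma$, costing $\mathrm{O}(\min\{|P|,|Q|\})$; and forming $\hat{I}/e$ and $\hat{\sigma}/e$ — i.e., removing the rows and columns of $p$ and $q$ from the instance and the corresponding entries from the order — costs $\mathrm{O}(|E|)$. Hence a single iteration runs in $\mathrm{O}(|E|)$ time, and multiplying by the iteration bound gives an overall running time of $\mathrm{O}(|E|\cdot\min\{|P|,|Q|\})$, which is polynomial in the input size.

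There is no genuine obstacle here; the only point that warrants a moment's care is confirming that maintaining the contracted offering order $\hat{\sigma}$ across repeated contractions does not incur more than linear overhead per step. This is immediate, since $\hat{\sigma}$ is simply the list of surviving acceptable pairs taken in their original relative $\sigma$-order, and by Observation~\ref{obs:commutative} the result of a sequence of contractions is independent of the order in which they are applied, so no inconsistency can arise.
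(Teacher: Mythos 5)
Your proposal is correct and follows essentially the same argument as the paper: each iteration fixes one pair, so the loop runs at most $\min\{|P|,|Q|\}$ times, and each iteration is dominated by the $\mathrm{O}(|E|)$ call to Algorithm~\ref{alg:QOPT}, giving the same $\mathrm{O}(|E|\cdot\min\{|P|,|Q|\})$ bound. The extra bookkeeping you supply about selecting the $\sigma$-minimal pair and maintaining the contracted order is a harmless elaboration of what the paper leaves implicit.
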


\subsection{Proof of the Tractability}\label{sec:proof_restrict}
This part is devoted to show the following theorem,
which completes the proof of Theorem~\ref{thm:restricted_2}, and hence of Theorem~\ref{thm:restricted_1}.
\begin{theorem}\label{thm:restricted_3}
$\sfda(I,\sigma)=\spe(I,\sigma)$
for any $(2,\infty)$- or $(\infty,2)$-\SPEM instance $(I,\sigma)$.
\end{theorem}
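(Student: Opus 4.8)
The plan is to prove $\sfda(I,\sigma)=\spe(I,\sigma)$ by induction on the tree representation of $(I,\sigma)$, i.e., on the number of acceptable pairs $|E|$. The base case $E=\emptyset$ is trivial since both matchings are empty. For the inductive step, write $e=(p,q)=\sigma(1)$, so that the two children of the root are the games $(I/e,\sigma/e)$ and $(I\backslash e,\sigma\backslash e)$, both of which satisfy the same list-length restriction (deletion and contraction do not lengthen any list), so the induction hypothesis applies to them. The key structural facts at hand are Proposition~\ref{prop:operations_spe}, which tells us how $\spe$ branches at $e$ according to whether $q$ prefers $\spe(I\backslash e,\sigma\backslash e)(q)$ to $p$, and Lemma~\ref{claim:operations_sfda}, which tells us how $\sfda$ branches at $e$ according to whether $\qopt(I)(q)=p$. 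The whole proof reduces to reconciling these two branching conditions.

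First I would dispose of the easy direction. Suppose $\qopt(I)(q)=p$. By Lemma~\ref{claim:operations_sfda}(b), $\sfda(I,\sigma)=\sfda(I/e,\sigma/e)+(p,q)$, which by induction equals $\spe(I/e,\sigma/e)+(p,q)$. To match this with $\spe(I,\sigma)$ via Proposition~\ref{prop:operations_spe}(b), I need $\spe(I\backslash e,\sigma\backslash e)(q)\prec_q p$. By induction (applied to $I\backslash e$) this equals $\sfda(I\backslash e,\sigma\backslash e)(q)$, and by Lemma~\ref{claim:sfda_vs_qopt} we have $\sfda(I\backslash e,\sigma\backslash e)\succeq_Q\qopt(I\backslash e)$; finally Lemma~\ref{claim:delete_Q1} gives $p=\qopt(I)(q)\succ_q\qopt(I\backslash e)(q)$. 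Chaining these, $p\succ_q\sfda(I\backslash e,\sigma\backslash e)(q)=\spe(I\backslash e,\sigma\backslash e)(q)$, as required; so $\spe(I,\sigma)=\spe(I/e,\sigma/e)+(p,q)=\sfda(I,\sigma)$. Note this half does not use the list-length restriction at all.

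The hard part — and where the restriction $s\le 2$ or $t\le 2$ must enter — is the case $\qopt(I)(q)\neq p$. Here Lemma~\ref{claim:operations_sfda}(a) gives $\sfda(I,\sigma)=\sfda(I\backslash e,\sigma\backslash e)$, which by induction is $\spe(I\backslash e,\sigma\backslash e)$. To conclude $\spe(I,\sigma)=\spe(I\backslash e,\sigma\backslash e)$ via Proposition~\ref{prop:operations_spe}(a), I need to show $q$ rejects $e$ in the SPE, i.e., $\spe(I\backslash e,\sigma\backslash e)(q)\succ_q p$. By Lemma~\ref{claim:first_edge}-type reasoning $\qopt(I)(q)\succ_q p$ (using $\qopt(I)(q)\neq p$ together with stability — this needs $q$ to be, in the relevant subgame, on top of $p$'s list, which is where $s\le 2$ is used since $\sigma(1)$ being first in a length-$\le 2$ list forces... actually more care is needed here), and by Lemma~\ref{claim:delete_Q2} $\qopt(I\backslash e)=\qopt(I)$, so $\qopt(I\backslash e)(q)\succ_q p$; but I then need $\spe(I\backslash e,\sigma\backslash e)(q)=\sfda(I\backslash e,\sigma\backslash e)(q)\succeq_q\qopt(I\backslash e)(q)\succ_q p$, where the first equality is induction and the middle inequality is Lemma~\ref{claim:sfda_vs_qopt}. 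The subtlety is whether $\qopt(I)(q)\succ_q p$ genuinely holds when $\qopt(I)(q)\neq p$: in the $(2,\infty)$ case, $e=\sigma(1)\in\delta_I(p)$ is the $\succ_p$-top pair, so $q$ is on top of $p$'s list and Lemma~\ref{claim:first_edge} directly applies. In the $(\infty,2)$ case this argument fails and one must instead argue about $q$'s short list: $q$ has at most one other acceptable firm, and a separate case analysis (using that $\sigma$ is consistent with $\succ_p$ only, not $\succ_q$) is needed — this asymmetry, and carefully verifying that $\sfda$ and $\spe$ still agree when $\qopt(I)(q)\notin\{p\}\cup\{q\}$ but $q$ has another option, is the main obstacle. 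I expect the cleanest route is to prove two separate lemmas, one for each restriction, each invoking the structural Lemmas~\ref{claim:first_edge}--\ref{claim:delete_Q2} and the SFDA operation lemmas, and to isolate precisely the step where ``length $\le 2$'' rules out the problematic configuration in which a worker's SPE action would deviate from her $\qopt$-of-the-current-subgame assignment.
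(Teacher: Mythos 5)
There is a genuine gap, and it sits exactly in the half you declare easy. In the case $\qopt(I)(q)=p$ you need $p\succ_q\spe(I\backslash e,\sigma\backslash e)(q)$, and you try to get it by chaining $\sfda(I\backslash e,\sigma\backslash e)(q)\succeq_q\qopt(I\backslash e)(q)$ (Lemma~\ref{claim:sfda_vs_qopt}) with $p\succ_q\qopt(I\backslash e)(q)$ (Lemma~\ref{claim:delete_Q1}). These two inequalities both bound things from below by $\qopt(I\backslash e)(q)$ and therefore do not compose: they are perfectly consistent with $\sfda(I\backslash e,\sigma\backslash e)(q)\succ_q p$, i.e., with $q$ profitably rejecting $p$. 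This is not a technicality---it is precisely the phenomenon that makes the theorem false without the degree restriction (and makes $(3,3)$-\SPEM PSPACE-hard): after deleting $(p,q)$, a worker can end up strictly better off than at $\qopt(I)$. The paper closes this gap with Lemma~\ref{claim:2inf_inf2}, which is where $s\le 2$ or $t\le 2$ is genuinely used: for $(2,\infty)$ one first proves $\sfda=\qopt$ on every subgame (Lemma~\ref{claim:2inf}), so the two quantities you tried to compare actually coincide and Lemma~\ref{claim:delete_Q1} applies directly; for $(\infty,2)$ one argues on $q$'s two-element list, using the rural-hospital-type Lemma~\ref{claim:sfda_rural_hospital} to show that if $p$ is $q$'s second choice then $q$ is unmatched in $\sfda(I\backslash e,\sigma\backslash e)$. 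Your sentence ``this half does not use the list-length restriction at all'' is therefore exactly backwards.

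Conversely, the case $\qopt(I)(q)\neq p$, which you flag as the hard one, needs no restriction beyond the induction hypothesis: since $e=\sigma(1)$ and $\sigma$ is consistent with $\succ_p$, the worker $q$ is automatically on top of $p$'s list in \emph{both} the $(2,\infty)$ and $(\infty,2)$ settings, so Lemma~\ref{claim:first_edge} gives $\qopt(I)(q)\succ_q p$, and the chain $\spe(I\backslash e,\sigma\backslash e)(q)=\sfda(I\backslash e,\sigma\backslash e)(q)\succeq_q\qopt(I\backslash e)(q)=\qopt(I)(q)\succ_q p$ (induction, Lemma~\ref{claim:sfda_vs_qopt}, Lemma~\ref{claim:delete_Q2}) is valid because here the middle inequality points in the direction you need. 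Your overall induction scheme and the choice of Proposition~\ref{prop:operations_spe} and Lemma~\ref{claim:operations_sfda} as the two branching rules match the paper; what is missing is the entire content of Lemma~\ref{claim:2inf_inf2}.
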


We prove Theorem~\ref{thm:restricted_3} by induction on subgames
linking Lemma~\ref{claim:operations_sfda} and Proposition~\ref{prop:operations_spe},
the inductive properties of $\sfda(I,\sigma)$ and $\spe(I,\sigma)$, respectively.
For this purpose, we provide the following special properties of 
$(2,\infty)$- and $(\infty,2)$-\SPEM instances. 

\begin{lemma}\label{claim:2inf}
$\sfda(I,\sigma)=\qopt(I)$ for any $(2,\infty)$-\SPEM instance $(I,\sigma)$.
\end{lemma}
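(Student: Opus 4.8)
The plan is to prove $\sfda(I,\sigma)=\qopt(I)$ for a $(2,\infty)$-\SPEM instance $(I,\sigma)$ by induction on the number of acceptable pairs (equivalently, on subgames), using the recursive characterization of SFDA in Lemma~\ref{claim:operations_sfda}. Write $e=(p,q)=\sigma(1)$. By Lemma~\ref{claim:operations_sfda}, either $\qopt(I)(q)\ne p$, in which case $\sfda(I,\sigma)=\sfda(I\backslash e,\sigma\backslash e)$, or $\qopt(I)(q)=p$, in which case $\sfda(I,\sigma)=\sfda(I/e,\sigma/e)+(p,q)$. In both cases the subinstance is again a $(2,\infty)$-\SPEM instance (deletion and contraction never lengthen a preference list), so the induction hypothesis applies to it.

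In the first case ($\qopt(I)(q)\ne p$), Lemma~\ref{claim:first_edge} forces $\qopt(I)(q)\succ_q p$ because $q$ is at the top of $p$'s list after deleting nothing yet — wait, $q$ need not be at the top of $p$'s list in general; here is where the hypothesis $|\delta_I(p)|\le 2$ is used. Since $\sigma$ is consistent with $\succ_p$ and $e=\sigma(1)$ is the globally first offer, $q$ \emph{is} the top choice of $p$. Hence Lemma~\ref{claim:first_edge} gives $\qopt(I)(q)\succeq_q p$, and since $\qopt(I)(q)\ne p$ we get $\qopt(I)(q)\succ_q p$; then Lemma~\ref{claim:delete_Q2} yields $\qopt(I)=\qopt(I\backslash e)$. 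Combining with the induction hypothesis $\sfda(I\backslash e,\sigma\backslash e)=\qopt(I\backslash e)$ and Lemma~\ref{claim:operations_sfda}(a) gives $\sfda(I,\sigma)=\qopt(I)$, as desired.

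In the second case ($\qopt(I)(q)=p$), by the induction hypothesis applied to $(I/e,\sigma/e)$ we have $\sfda(I/e,\sigma/e)=\qopt(I/e)$, so by Lemma~\ref{claim:operations_sfda}(b), $\sfda(I,\sigma)=\qopt(I/e)+(p,q)$. It remains to show $\qopt(I/e)+(p,q)=\qopt(I)$. Lemma~\ref{claim:contract_Q} already gives $\qopt(I/e)+(p,q)\succeq_Q\qopt(I)$, so I only need the reverse relation $\qopt(I/e)+(p,q)\preceq_Q\qopt(I)$. For this it suffices to argue that $\qopt(I/e)+(p,q)$ is a \emph{stable} matching in $I$, for then $Q$-optimality of $\qopt(I)$ finishes the job. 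Stability in $I$ means no acceptable pair of $I$ blocks $\mu:=\qopt(I/e)+(p,q)$: pairs inside $E/e$ cannot block because $\qopt(I/e)$ is stable in $I/e$ and $\mu$ agrees with it off $p,q$; a pair $(p,q')\in\delta_I(p)$ with $q'\ne q$ cannot block since $q$ is $p$'s top choice so $p$ is matched to its best partner; and a pair $(p',q)\in\delta_I(q)$ with $p'\ne p$ — this is the delicate point and the main obstacle — must be ruled out using $|\delta_I(q)|$ possibly being large is fine, but $q$ might strictly prefer some $p'$ to $p$; I expect to use $\qopt(I)(q)=p$ together with the $(2,\infty)$ restriction on $p'$ (so $q$ is among the at most two candidates of $p'$) to derive that if $(p',q)$ blocked $\mu$ then $(p',q)$ would also block $\qopt(I)$ in $I$, a contradiction. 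Carefully, since $\qopt(I)(q)=p$ and $q\succ_{p'}\qopt(I)(p')$ would require $\qopt(I)(p')\ne q$, one checks $\mu(p')=\qopt(I/e)(p')$ and relates it to $\qopt(I)(p')$ via $\qopt(I/e)+(p,q)\succeq_Q\qopt(I)$ (Lemma~\ref{claim:contract_Q}) applied on the firm side too — here the firm-side monotonicity that comes packaged with the distributive-lattice structure, or a direct argument using the length-$2$ list of $p'$, closes the gap. This stability verification is the step I expect to require the most care, and it is exactly where the hypothesis $s\le 2$ enters essentially.
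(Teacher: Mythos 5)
Your proof follows the same route as the paper's: induction on subgames via Lemma~\ref{claim:operations_sfda}, with the first case handled by Lemmas~\ref{claim:first_edge} and~\ref{claim:delete_Q2} exactly as in the paper (and your self-correction is right: $q$ is $p$'s top choice simply because $e=\sigma(1)$ and $\sigma$ is consistent with $\succ_p$; the bound $s\le 2$ plays no role there), and the second case reduced to showing that $\mu\coloneq\qopt(I/e)+(p,q)$ is stable in $I$ --- the paper phrases this contrapositively, but the content is identical. Your treatment of blocking pairs inside $E/e$ and of pairs $(p,q')$ with $q'\neq q$ is correct.

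The one sub-case you leave open --- a pair $(p',q)$ with $p'\neq p$ --- is precisely where the paper's proof does its real work, and your sketch does not close it. From $p'\succ_q\mu(q)=p=\qopt(I)(q)$ and the stability of $\qopt(I)$ one gets $\qopt(I)(p')\succ_{p'}q\succ_{p'}\mu(p')$, and you want to conclude that $p'$'s list has three distinct entries, contradicting $s\le 2$. But this fails if $\mu(p')=p'$, i.e., if $p'$ is unmatched in $\mu$: then only $\qopt(I)(p')$ and $q$ need appear on $p'$'s list. The paper rules this out by observing that $\qopt(I)-(p,q)$ and $\mu-(p,q)=\qopt(I/e)$ are both stable in $I/e$, so Theorem~\ref{thm:lwt} (invariance of unmatched agents) gives $\mu(p')\neq p'$ from $\qopt(I)(p')\neq p'$. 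Neither of your two proposed patches supplies this. Firm-side monotonicity from the lattice structure only yields $\qopt(I)(p')\succeq_{p'}\mu(p')$, which is entirely consistent with $(p',q)$ blocking $\mu$ and produces no contradiction; and the ``direct argument using the length-$2$ list of $p'$'' is the right idea but is exactly the step that needs the rural-hospital-type argument above before it can be completed.
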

\begin{proof}
We use induction on subgames.
Let $e=(p,q)=\sigma(1)$.
There are two cases.

If $\qopt(I)(q)\neq p$, then 
$\sfda(I,\sigma)=\sfda(I-e, \sigma -e)$ by Lemma~\ref{claim:operations_sfda} (a),
and 
$\qopt(I)=\qopt(I-e)$ by Properties~\ref{prop:first_edge} and \ref{prop:delete_Q2}.
Since $\qopt(I-e)=\sfda(I-e, \sigma -e)$ by induction,
we obtain $\qopt(I)=\sfda(I,\sigma)$.

If $\qopt(I)(q)= p$, then 
Lemma~\ref{claim:operations_sfda} (b) implies 
$\sfda(I,\sigma)=\sfda(I/e, \sigma/e)+(p,q)$,
which equals $\qopt(I/e)+(p,q)$ by induction.
Then it suffices to show 
$\qopt(I)=\qopt(I/e)+(p,q)$.
Suppose, to the contrary, this equality fails.
As we have $\qopt(I/e)+(p,q)\succeq_{Q}\qopt(I)$
by Property~\ref{prop:contract_Q}, the $Q$-optimality of $\qopt(I)$  implies that
$\mu\coloneq \qopt(I/e)+(p,q)$ is unstable in $I$, i.e., 
there is $(p',q')\in E$ with $q'\succ_{p'}\mu(p')$ and $p'\succ_{q'}\mu(q')$.
Since $\qopt(I/e)$ is stable in $I/ e$, 
we have $(p',q')\not\in E/e = E \setminus (\delta_I(p) \cup \delta_I(q))$.
Hence, $p'=p$ or $q'=q$ holds. 
If $p'=p$, then $q'\succ_{p}\mu(p)=q=\qopt(I)(p)$ and $p\succ_{q'}\mu(q')\succeq_{q'} \qopt(I)(q')$,
and hence $(p,q')$ blocks $\qopt(I)$, a contradiction.
Therefore, we have $q'=q$ and $p'\neq p$, from which
$q\succ_{p'}\mu(p')$ and $p'\succ_{q}\mu(q)=p=\qopt(I)(q)$ follow. 
Since $\qopt(I)$ is not blocked by $(p',q')~(=(p',q))$,  
the latter condition implies $\qopt(I)(p')\succeq_{p'}q$.
As we have $\qopt(I)(p')\neq \qopt(I)(p)=q$, 
we obtain $\qopt(I)(p')\succ_{p'} q\succ_{p'}\mu(p')$.
Note that $\mu(p')$ is not $p'$ itself because $p'$ is matched in the stable matching $\qopt(I)-(p,q)$ of $I/e$ and hence in $\qopt(I/e)=\mu-(p,q)$ by Theorem~\ref{thm:lwt} (invariance of unmatched agents).
Thus, $\qopt(I)(p'), q, \mu(p')\in Q$ are three different entries of the preference list of $p'$,
which contradicts that the list of $p'$ has at most two entries.
\end{proof}

\begin{lemma}\label{claim:2inf_inf2}
For any $(2,\infty)$- or $(\infty,2)$-\SPEM instance $(I,\sigma)$
and any $e=(p,q)\in E$, the condition $\qopt(I)(q)=p$ implies
$p\succ_{q} \sfda(I-e,\sigma-e)(q)$.
\end{lemma}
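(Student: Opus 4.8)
The plan is to prove the contrapositive-flavored statement directly by analyzing the structure of $\qopt(I\backslash e)$ when $\qopt(I)(q)=p$. By Lemma~\ref{claim:delete_Q1} we already know $\qopt(I)(q)\succ_q \qopt(I\backslash e)(q)$, i.e.\ $p\succ_q \qopt(I\backslash e)(q)$. So if the instance were such that $\sfda(I\backslash e,\sigma\backslash e)=\qopt(I\backslash e)$, we would be done immediately. For the $(2,\infty)$ case this is exactly Lemma~\ref{claim:2inf}, so that half is essentially free: $\sfda(I\backslash e,\sigma\backslash e)(q)=\qopt(I\backslash e)(q)\prec_q p$. The real work is the $(\infty,2)$ case, where $\sfda$ and $\qopt$ can differ.

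**The $(\infty,2)$ case.** Here every worker has a list of length at most two. Fix $q$ with $\qopt(I)(q)=p$, so $q$'s list is either just $\{p\}$ or $\{p, p'\}$ for some other firm $p'$ with $p\succ_q p'$ (it cannot be that $q$ ranks some firm above $p$, by $Q$-optimality combined with stability — if $q$ preferred $p''$ to $p$ then $(p'',q)$ would have to be non-blocking for $\qopt(I)$, forcing $\qopt(I)(p'')\succ_{p''}q$, which is fine, so actually $q$ \emph{can} have a firm above $p$; I need to be careful here and instead just track $q$'s assignment directly). The key point: in $I\backslash e$, the worker $q$ has lost the pair $(p,q)$, so $q$'s remaining list has length at most one. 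I plan to show that $\sfda(I\backslash e,\sigma\backslash e)(q)$ is then forced: either $q$ is unmatched in $\sfda(I\backslash e,\sigma\backslash e)$, or $q$ is matched to the unique remaining firm on her (shortened) list, and in either case $\sfda(I\backslash e,\sigma\backslash e)(q)\prec_q p$ because everything on $q$'s list other than $p$ is worse than $p$ — wait, this requires $p$ to be $q$'s top choice. So the argument must instead use Lemma~\ref{claim:sfda_rural_hospital}: the matched/unmatched status of $q$ in $\sfda(I\backslash e,\sigma\backslash e)$ equals that in $\qopt(I\backslash e)$, and $\qopt(I\backslash e)(q)\prec_q p$ by Lemma~\ref{claim:delete_Q1}; but $\sfda(I\backslash e,\sigma\backslash e)(q)$ need not equal $\qopt(I\backslash e)(q)$. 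So the cleanest route is: if $q$ is unmatched in $\qopt(I\backslash e)$ then by Lemma~\ref{claim:sfda_rural_hospital} she is unmatched in $\sfda(I\backslash e,\sigma\backslash e)$, giving $\sfda(I\backslash e,\sigma\backslash e)(q)=q\prec_q p$; if $q$ is matched in $\qopt(I\backslash e)$, then since her list in $I\backslash e$ has at most one firm (call it $\hat p$), she must be matched to $\hat p$ in \emph{every} stable matching of $I\backslash e$, hence in $\qopt(I\backslash e)$ too, and since $\sfda$ only ever matches pairs that are in some intermediate $\qopt$ of a contracted subinstance, and $q$'s only option is $\hat p$, we get $\sfda(I\backslash e,\sigma\backslash e)(q)\in\{\hat p, q\}$; but Lemma~\ref{claim:sfda_rural_hospital} rules out $q$, so $\sfda(I\backslash e,\sigma\backslash e)(q)=\hat p=\qopt(I\backslash e)(q)\prec_q p$ by Lemma~\ref{claim:delete_Q1}.

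**Main obstacle.** The delicate step is justifying that when $q$'s post-deletion list has length at most one, her $\sfda$-partner can only be that single firm or herself. One must verify that contractions performed by $\sfda$ before $q$ gets fixed never \emph{add} a firm to $q$'s list (they only remove pairs), so $q$'s list stays of size $\le 1$ throughout, and the $Q$-optimal stable matching of any such subinstance assigns $q$ either to that firm or leaves her unmatched. I would spell this out as a short sub-observation. The $(2,\infty)$ case, by contrast, is a two-line deduction from Lemma~\ref{claim:2inf} and Lemma~\ref{claim:delete_Q1}, so the whole proof should split cleanly: handle $(2,\infty)$ first in a sentence, then devote the bulk to the $(\infty,2)$ argument above, invoking Lemmas~\ref{claim:delete_Q1}, \ref{claim:sfda_rural_hospital}, and Theorem~\ref{thm:lwt}.
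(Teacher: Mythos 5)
Your proof is correct and follows essentially the same route as the paper: the $(2,\infty)$ half is immediate from Lemmas~\ref{claim:2inf} and \ref{claim:delete_Q1}, and the $(\infty,2)$ half combines Lemma~\ref{claim:delete_Q1} with Lemma~\ref{claim:sfda_rural_hospital} using the fact that $q$'s list in $I\backslash e$ has at most one entry. The paper merely organizes the case split differently (on whether $p$ is the top or the bottom of $q$'s two-element list, rather than on whether $q$ is matched in $\qopt(I\backslash e)$); note that your invocation of Lemma~\ref{claim:sfda_rural_hospital} in the matched case is actually superfluous, since both remaining possibilities for $q$'s SFDA-partner ($\hat p$ or being unmatched) are already worse than $p$.
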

\begin{proof}
In the case of $(2,\infty)$-\SPEM, 
the subgame $(I-e,\sigma-e)$ is also a $(2,\infty)$-\SPEM
instance, and hence
$\qopt(I-e)=\sfda(I-e,\sigma-e)$
by Lemma~\ref{claim:2inf}. 
Therefore, Property~\ref{prop:delete_Q1} implies that $p\succ_{q}\qopt(I-e)(q)=\sfda(I-e,\sigma-e)(q)$.

In the case of $(\infty,2)$-\SPEM, 
since $q$'s preference list is of length at most $2$, 
$p$ is on the top or the bottom of the list.
If $p$ is on the top, clearly $p\succ_{q} \sfda(I-e,\sigma-e)(q)$.
If $p$ is on the bottom, by Property~\ref{prop:delete_Q1},
$p=\qopt(I)(q)\succ_{q}\qopt(I-e)(q)=q$. 
By Lemma~\ref{claim:sfda_rural_hospital}, this implies $\sfda(I-e,\sigma-e)(q)=q$.
Thus, we have $p\succ_{q} q=\sfda(I-e,\sigma-e)(q)$.
\end{proof}

Now we are ready to show Theorem~\ref{thm:restricted_3}, 
which states that the output of SFDA coincides with the SPE matching 
for a $(2,\infty)$- or $(\infty,2)$-\SPEM instance.

\begin{proof}[Proof of Theorem~\ref{thm:restricted_3}]
We show $\sfda(I,\sigma)=\spe(I,\sigma)$ by induction on subgames.
Let $e=(p,q)=\sigma(1)$. 
Note that for a $(2,\infty)$-\SPEM (resp., $(\infty,2)$-\SPEM) instance $(I,\sigma)$,
the subgames $(I-e,\sigma-e)$ and $(I/e,\sigma/e)$ are also $(2,\infty)$-\SPEM (resp., $(\infty,2)$-\SPEM) instances.
Therefore, by induction, we have $\sfda(I/e,\sigma/e)=\spe(I/e,\sigma/e)$
and $\sfda(I-e,\sigma-e)=\spe(I-e,\sigma-e)$.
We consider two cases: $\qopt(I)(q)\neq p$ and $\qopt(I)(q)=p$.

When $\qopt(I)(q)\neq p$, Property~\ref{prop:first_edge} and Lemma~\ref{claim:sfda_vs_qopt}
imply $\sfda(I,\sigma)(q)\succeq_{q} \qopt(I)(q)\succ_{q} p$.
Also, Lemma~\ref{claim:operations_sfda} (a)
implies $\sfda(I,\sigma)=\sfda(I-e,\sigma-e)=\spe(I-e,\sigma-e)$.
Therefore, $\spe(I-e,\sigma-e)(q)\succ_{q}p$.
By Proposition~\ref{prop:operations_spe} (a),  $\spe(I,\sigma)=\spe(I-e,\sigma-e)=\sfda(I,\sigma)$.

When $\qopt(I)(q)= p$, Lemma~\ref{claim:operations_sfda} (b)
implies $\sfda(I,\sigma)=\sfda(I/e,\sigma/e)+(p,q)=\spe(I/e,\sigma/e)+(p,q)$.
Since $(I/e,\sigma/e)$ is a $(2,\infty)$- or $(\infty,2)$-\SPEM instance,
by Lemma~\ref{claim:2inf_inf2}, we have $p\succ_{q}\sfda(I-e,\sigma-e)(q)=\spe(I-e,\sigma-e)(q)$.
By Proposition~\ref{prop:operations_spe} (b), then 
$\spe(I,\sigma)=\spe(I/e,\sigma/e)+(p,q)=\sfda(I,\sigma)$.
\end{proof}
In particular, for $(2,\infty)$-\SPEM, combining
Theorem~\ref{thm:restricted_3} and Lemma~\ref{claim:2inf}
gives the following corollary.
\begin{corollary}\label{cor:2inf}
$\spe(I,\sigma)=\qopt(I)$ for  any $(2,\infty)$-\SPEM instance $(I,\sigma)$.
\end{corollary}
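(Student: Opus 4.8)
The plan is to obtain this corollary by simply composing the two results just proved for $(2,\infty)$-\SPEM instances, with no further work. By Theorem~\ref{thm:restricted_3}, the output of SFDA coincides with the SPE matching on any $(2,\infty)$-\SPEM instance, so $\spe(I,\sigma)=\sfda(I,\sigma)$. By Lemma~\ref{claim:2inf}, the output of SFDA on a $(2,\infty)$-\SPEM instance is exactly the $Q$-optimal stable matching, so $\sfda(I,\sigma)=\qopt(I)$. Concatenating these two equalities gives $\spe(I,\sigma)=\qopt(I)$. Both invoked statements apply verbatim to every $(2,\infty)$-\SPEM instance, so no additional case analysis or closure-under-subgames argument is needed at this point; that bookkeeping was already carried out inside the proofs of Theorem~\ref{thm:restricted_3} and Lemma~\ref{claim:2inf}.

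If a self-contained argument were wanted instead, one could prove $\spe(I,\sigma)=\qopt(I)$ directly by induction on subgames, mirroring the proof of Theorem~\ref{thm:restricted_3} but replacing $\sfda$ by $\qopt$ throughout: writing $e=(p,q)=\sigma(1)$, Lemma~\ref{claim:first_edge} forces $\qopt(I)(q)\succeq_q p$, and one splits on whether $\qopt(I)(q)=p$. In the rejection case one uses Lemma~\ref{claim:delete_Q2} (so $\qopt(I)=\qopt(I\backslash e)$) together with Proposition~\ref{prop:operations_spe}(a); in the acceptance case one uses Lemma~\ref{claim:contract_Q} and the length-at-most-two bound on each firm's list, exactly as in the proof of Lemma~\ref{claim:2inf}, to conclude $\qopt(I)=\qopt(I/e)+(p,q)$, and then applies Proposition~\ref{prop:operations_spe}(b). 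But given that Theorem~\ref{thm:restricted_3} and Lemma~\ref{claim:2inf} are already in hand, the one-line composition is the cleaner route and is the one I would present.

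Consequently there is essentially no obstacle to overcome here: the substantive content, namely the correctness of SFDA against the SPE and the collapse of SFDA to $\qopt$ under the length-two restriction, is precisely what Theorem~\ref{thm:restricted_3} and Lemma~\ref{claim:2inf} supply, so the corollary follows by transitivity of equality.
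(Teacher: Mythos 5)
Your proposal is correct and matches the paper exactly: the paper obtains the corollary by the same one-line composition of Theorem~\ref{thm:restricted_3} ($\spe(I,\sigma)=\sfda(I,\sigma)$) with Lemma~\ref{claim:2inf} ($\sfda(I,\sigma)=\qopt(I)$). No further argument is needed.
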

Thanks to Theorem~\ref{thm:restricted_3},
for any $(2,\infty)$-\SPEM or $(\infty,2)$-\SPEM instance,
we can compute the SPE matching of a given instance by the algorithm SFDA\@.
(In particular, for $(2,\infty)$-\SPEM, we can compute it by DA by Corollary~\ref{cor:2inf}.) 
Because SFDA runs in polynomial time as shown in Lemma~\ref{claim:time_comp}, 
Theorem~\ref{thm:restricted_3} immediately implies Theorem~\ref{thm:restricted_2}.
Then Theorem~\ref{thm:restricted_1} also follows.

\section{Hardness of General SPE Computation}\label{sec:hardness}
In this section, we show PSPACE-hardness of computing the SPE of a given sequential matching game
even if the length of each firm's and worker's preference list is at most three and the offering order is restricted to position-based.
Formally, we prove the following theorem.

\newcommand{\Q}{\mathscr{Q}}
\begin{theorem}\label{thm:PSPACE-complete}
$(3,3)$-\SPEM is PSPACE-complete even if the problem is restricted to position-based.
\end{theorem}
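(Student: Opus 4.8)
The plan is to establish membership in PSPACE first (which is routine) and then focus on PSPACE-hardness by a polynomial-time reduction from \QSAT. For membership, observe that the tree representation has depth at most $|E|$, and a straightforward recursive (backward-induction) procedure computes $\spe(I,\sigma)$ using only polynomial working space: at each node we recurse into the two children $(I'/\sigma'(1),\sigma'/\sigma'(1))$ and $(I'\backslash\sigma'(1),\sigma'\backslash\sigma'(1))$ one at a time, keeping only the current path and the partial outcome. Hence $(3,3)$-\SPEM is in PSPACE.

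For hardness, I would encode a \QSAT instance $\Phi = \forall x_1 \exists x_2 \forall x_3 \cdots Q_k x_k\, \psi(x_1,\dots,x_k)$, with $\psi$ a 3-CNF formula, into a position-based sequential matching game in which each firm and each worker is involved in at most three acceptable pairs. The first idea is to build \emph{variable gadgets}: for each quantified variable $x_i$ we designate a worker (or a short chain of workers) whose early irrevocable choice of \ACCEPT versus \REJECT commits the variable to \TRUE or \FALSE. Because the offering order is a position order and is fixed in advance, the variables get decided in the quantifier order $x_1, x_2, \dots, x_k$, and the strategic worker at stage $i$ anticipates the best responses of all later workers — this is exactly the alternation of $\forall$ and $\exists$, provided we can arrange that an $\exists$-worker \emph{wants} the formula to be satisfiable (she is matched well precisely when some downstream gadget reports ``satisfied'') while a $\forall$-worker is structured so that backward induction over her choice plays the role of a universal quantifier (e.g., she is forced, by the payoff structure, into whichever branch is \emph{worse} for the ``evaluator,'' so that the game value reflects $\min$ over her two settings). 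The subtle part of this encoding is that an extensive-form game naturally computes $\max$ at every node (each player optimizes), so simulating $\min$ (i.e., $\forall$) requires a gadget in which the relevant worker's two actions are swapped in desirability relative to the ``formula-satisfied'' signal — a standard but delicate trick in SPE-hardness reductions.

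Next I would construct \emph{clause gadgets} and \emph{evaluation gadgets} that, after all variables are set, check whether $\psi$ is satisfied and propagate a single bit back to the topmost strategic worker. Each clause $C_j = (\ell_{j1}\vee\ell_{j2}\vee\ell_{j3})$ becomes a small sub-instance with an \OR-like behavior: a designated clause-worker ends up matched to a ``good'' firm iff at least one of its three literals was set true, using the at-most-three-pairs budget to read the three literal values. The clause outputs are then combined by \AND-gadgets into a single master signal, which feeds back (through the preference lists, exploiting that a worker strictly prefers one partner over another) to the very first offered worker, so that $\sigma(1)$'s unique SPE action is \ACCEPT iff $\Phi$ is true. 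The position-based requirement is maintained by laying out all offers of each firm consecutively in the global order and ordering the firms so that: variable firms come first (in quantifier order), then clause-check firms, then AND-combination firms, then the feedback firm touching $\sigma(1)$ — with care that ``future'' in the game always matches ``later in $\sigma$,'' which is what makes backward induction respect the intended evaluation order.

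The main obstacle will be the severe degree restriction: every firm and every worker may appear in at most three acceptable pairs, so one cannot route many signals through a single agent. I expect to handle this by \emph{fan-out chains} — replacing a high-degree node with a path of degree-3 nodes each of which copies a bit forward — and by interleaving these copy-gadgets into the position order so they fire at exactly the right time and do not create spurious blocking behavior or unintended SPE deviations. Verifying that the gadgets compose correctly (no unintended equilibrium, each gadget's local SPE behaves as a clean logic gate regardless of the surrounding context, and the position-based order is globally realizable) is the bulk of the work; I would prove a single ``gadget lemma'' of the form ``in the subgame entering gadget $G$ with inputs fixed, the worker at the gadget's output vertex is matched to its designated good firm in the SPE iff the intended Boolean function of the inputs is true,'' and then assemble these lemmas along the fixed order $\sigma$ to conclude that $\sigma(1)$ accepts in the SPE iff $\Phi\in\QSAT$.
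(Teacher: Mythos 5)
Your proposal follows essentially the same route as the paper: PSPACE membership by depth-first backward induction, and hardness by reduction from \QSAT using variable gadgets whose irrevocable \ACCEPT/\REJECT fixes the assignment, OR/AND gadgets for clause evaluation, branching (copy) chains to respect the degree-three bound, a phased position-based layout, and---crucially---an inverted feedback signal for universally quantified variables so that backward induction simulates $\forall$ as well as $\exists$, which is precisely the mechanism of Claim~\ref{claim:hoge}(ii) and Claim~\ref{claim:fuga}(ii) in the paper. The only missing content is the explicit preference lists realizing the gadgets (where the paper's technical effort lies), but the architecture you describe is the one the paper implements.
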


  First of all, we can solve $\SPEM$ in polynomial space by implementing backward induction as a depth-first search of the game tree.
  Thus, the problem is in class PSPACE\@.

  In what follows, we prove the hardness by giving a reduction from \QSAT, which is a PSPACE-complete problem~\cite{garey1979cai}.
  Let $V=\{v_1,\dots,v_n\}$ be the set of variables and
  $(\Q_1 v_1)(\Q_2 v_2)\cdots\allowbreak(\Q_n v_n)~\varphi(v_1,\dots,v_n)$ be a quantified Boolean formula.
  Here, $\Q_i$ is either $\forall$ or $\exists$ for $i=1,\dots,n$ and $\varphi(v_1,\dots,v_n)$ is a Boolean expression in 3-CNF\@.
  \QSAT asks whether the given quantified Boolean formula is $\TRUE$ or not.
  Let $J^\exists=\{i\mid \Q_i=\exists\}$ and $J^\forall=\{i\mid \Q_i=\forall\}$.
    
  In our reduction, we create a position-based $(3,3)$-\SPEM instance $(I,\sigma)$
  with $I = (P, Q, E, {\prec})$ that consists of
  three phases---assignment, evaluation, and output---as shown in Fig.~\ref{fig:hardness_overview}.

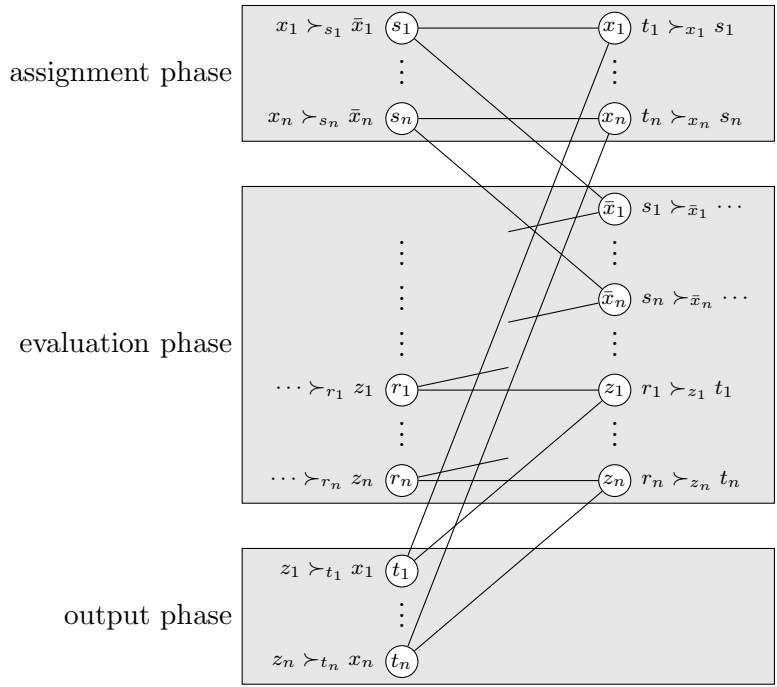
\begin{figure}[htbp]
\centering  
\begin{tikzpicture}[xscale=0.7,yscale=0.6,
    p/.style={circle,inner sep=0pt,draw=black,font=\scriptsize,fill=white,minimum size=12pt}]

  \node[left] at (-2,14) {assignment phase};
  \node[left] at (-2,8) {evaluation phase};
  \node[left] at (-2,2) {output phase};

  \draw[fill=black!10] (-2,12.5) rectangle (8,15.5);
  \draw[fill=black!10] (-2,4.5) rectangle (8,11.5);
  \draw[fill=black!10] (-2,0.5) rectangle (8,3.5);

  \node[p,label=left:{\scriptsize$x_1\succ_{s_1}\bar{x}_1$}] at (1,15)  (s1) {$s_1$};
  \node    at (1,14)     {$\rvdots$};
  \node[p,label=left:{\scriptsize$x_n\succ_{s_n}\bar{x}_n$}] at (1,13)  (sn) {$s_n$};

  \node[p,label=right:{\scriptsize$t_1\succ_{x_1}s_1$}] at (5,15)  (x1) {$x_1$};
  \node    at (5,14)     {$\rvdots$};
  \node[p,label=right:{\scriptsize$t_n\succ_{x_n}s_n$}] at (5,13)  (xn) {$x_n$};

  \node    at (1,10)      {$\rvdots$};
  \node    at (1,9)       {$\rvdots$};
  \node    at (1,8)       {$\rvdots$};
  \node[p,label=left:{\scriptsize$\cdots\succ_{r_1}z_1$}] at (1,7)  (r1) {$r_1$};
  \node    at (1,6)     {$\rvdots$};
  \node[p,label=left:{\scriptsize$\cdots\succ_{r_n}z_n$}] at (1,5)  (rn) {$r_n$};

  \node[p,label=right:{\scriptsize$s_1\succ_{\bar{x}_1}\cdots$}] at (5,11) (bx1) {$\bar{x}_1$};
  \node    at (5,10)      {$\rvdots$};
  \node[p,label=right:{\scriptsize$s_n\succ_{\bar{x}_n}\cdots$}] at (5,9)  (bxn) {$\bar{x}_n$};
  \node    at (5,8)        {$\rvdots$};

  \node[p,label=left:{\scriptsize$z_1\succ_{t_1}x_1$}] at (1,3)  (t1) {$t_1$};
  \node    at (1,2)     {$\rvdots$};
  \node[p,label=left:{\scriptsize$z_n\succ_{t_n}x_n$}] at (1,1)  (tn) {$t_n$};

  \node[p,label=right:{\scriptsize$r_1\succ_{z_1}t_1$}] at (5,7) (z1) {$z_1$};
  \node    at (5,6)    {$\rvdots$};
  \node[p,label=right:{\scriptsize$r_n\succ_{z_n}t_n$}] at (5,5) (zn) {$z_n$};

  \foreach \u/\v in {s1/x1,s1/bx1,sn/xn,sn/bxn,r1/z1,rn/zn,t1/x1,t1/z1,tn/xn,tn/zn} \draw (\u) -- (\v);
  \draw (r1)--(3,7.5);
  \draw (rn)--(3,5.5);
  \draw (bx1)--(3,10.5);
  \draw (bxn)--(3,8.5);

\end{tikzpicture}
\caption{Framework of the reduction}\label{fig:hardness_overview}
\end{figure}

  In the assignment phase, for each Boolean variable $v_i\in V$,
  we create two workers $x_i$ and $\bar{x}_i$, whose actions correspond to an assignment of $\TRUE$ or $\FALSE$ to $v_i$.
  Each $x_i$ has exactly two acceptable firms $s_i$ and $t_i$ with $t_i \succ_{x_i} s_i$.
  Suppose that $x_i$ is the best and the worst choice for $s_i$ and $t_i$, respectively.
  Thus, $x_i$ must be matched at least to $s_i$ in the matching $\spe(I, \sigma)$ (cf.~Lemma~\ref{lem:top_offer}).
  In addition, suppose that the best choice for $\bar{x}_i$ is $s_i$ and the second choice for $s_i$ is $\bar{x}_i$.
  Then, in the matching $\spe(I, \sigma)$, $\bar{x}_i$ matches to $s_i$ if $x_i$ does not match to $s_i$.

  Let $\pi$ be the position order inducing $\sigma$
  such that it starts with $s_1,\dots,s_n$ and ends with $t_1,\dots,t_n$.
  Let $a_i \in \{\TRUE, \FALSE\}$ for each $i = 1, \dots, n$.
  For each $0 \leq k \leq n$,
  let us denote by $(I, \sigma)_{[a_1, \dots, a_k]}$ the subgame of $(I, \sigma)$
  just after $s_1,\dots,s_k$ have finished their offers in which, for each $i = 1, \dots, k$,
  the contraction of $(s_i, x_i)$ has occurred (i.e., $x_i$ has chosen \ACCEPT for $s_i$'s offer) if $a_i = \TRUE$
  and the contraction of $(s_i, \bar{x}_i)$ has occurred (i.e., $x_i$ has chosen \REJECT for $s_i$'s offer and then $\bar{x}_i$ has chosen \ACCEPT) if $a_i = \FALSE$.
  Note that the subgames $(I, \sigma)_{[a_1, \dots, a_n]}$
  correspond to the assignments to the variables $v_1, \dots, v_n$
  in a one-to-one manner.

  In the evaluation phase, for each Boolean variable $v_i\in V$,
  we create a firm $r_i$ and a worker $z_i$ with $(r_i, z_i)\in E$.
  We will design the evaluation phase so that it correctly evaluates
  the value of $\varphi(a_1, \dots, a_n)$ for every assignment $(a_1, \dots, a_n) \in \{\TRUE, \FALSE\}^n$
  by the SPE of the corresponding subgame $(I, \sigma)_{[a_1, \dots, a_n]}$.
  Formally, we prove the following claim.

\begin{claim}\label{claim:hoge}
One can design the evaluation phase in Fig.~\ref{fig:hardness_overview} so that, 
for any assignment $(a_1, \dots, a_n)$
with the corresponding subgame $(I', \sigma') \coloneq (I,\sigma)_{[a_1, \dots, a_n]}$, the following holds: 
\begin{itemize}
\item[(i)]
  for any $i\in J^\exists$, we have $(r_i, z_i) \in \spe(I', \sigma') \iff \varphi(a_1, \dots, a_n) = \TRUE$, and
\item[(ii)]
  for any $i\in J^\forall$, we have $(r_i, z_i) \in \spe(I', \sigma') \iff \varphi(a_1, \dots, a_n) = \FALSE$.
\end{itemize}
\end{claim}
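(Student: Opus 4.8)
The plan is to realize the evaluation phase as a Boolean circuit built from small gadgets, each a bundle of firms and workers with preference lists of length at most three, wired together and inserted into the position-based offering order so that the subgame-perfect equilibrium of $(I',\sigma')$ literally computes $\varphi(a_1,\dots,a_n)$ and delivers the result to every pair $(r_i,z_i)$ with the polarity demanded by $\Q_i$.

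First I would fix an encoding of bits. Call a pair $(p,q)\in E$ a \emph{wire}, and say it \emph{carries $1$} in a subgame $(I'',\sigma'')$ if $(p,q)\in\spe(I'',\sigma'')$ and \emph{carries $0$} otherwise. The input wires come from the assignment phase: in $(I',\sigma')$ the worker $\bar{x}_i$ is present exactly when $a_i=\TRUE$, so I would let the first evaluation-phase firm reading literal $v_i$ offer $\bar{x}_i$, making ``$\bar{x}_i$ still available to be consumed'' the bit $a_i$; a negation/fan-out attachment then produces the signal for $\neg v_i$ and copies of both literals for all clauses containing them. The output side is symmetric: the single output wire, which I want to carry $\varphi(a_1,\dots,a_n)$, is fed into each pair $(r_i,z_i)$ through the ``$\cdots$'' entries at the top of $r_i$'s list. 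Since $r_i$ is the top choice of $z_i$ on a length-two list, Proposition~\ref{prop:operations_spe}(b) forces $z_i$ to accept $r_i$ the moment $r_i$ offers, so $(r_i,z_i)\in\spe(I',\sigma')$ holds iff $r_i$'s chain of offers reaches $z_i$, i.e.\ iff the upstream gadget declines $r_i$; placing a negation gadget in front of $r_i$ precisely for $i\in J^\forall$ turns this membership into $\varphi=\TRUE$ for $i\in J^\exists$ and $\varphi=\FALSE$ for $i\in J^\forall$, which is exactly (i) and (ii).

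The core of the work is the gadget library together with a composition lemma. I would construct a COPY (fan-out) gadget, a NOT gadget, and two-input OR and AND gadgets --- these suffice, by nesting, to evaluate an arbitrary $3$-CNF --- and for each prove a \emph{local lemma}: when the gadget is activated (its first offer becomes $\sigma''(1)$ in the current subgame), the value the SPE places on its output wire is the intended Boolean function of the values on its input wires, and resolving the gadget only contracts or deletes pairs internal to it. Each local lemma is proved by unwinding Proposition~\ref{prop:operations_spe}(a),(b) over the handful of offers inside the gadget, using in particular that a worker offered the firm at the top of her list always accepts. The ``internal only'' clause, combined with the commutativity of deletion and contraction (Observation~\ref{obs:commutative}), is what permits gadgets to be analysed in isolation and then chained: the net effect of all gadgets resolved before a given one is to reduce the instance to precisely that gadget's inputs, with the rest of the circuit untouched. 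Assembling everything, I would exhibit one position-based offering order in which every firm's offers are consecutive and every gadget's internal order is respected, and then prove the claim by backward induction along the game tree of $(I',\sigma')$: each internal node is some gadget's offer, the local lemmas push the correct wire values toward the output wire, and the output interface above yields (i) and (ii).

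The step I expect to be the main obstacle is the composition argument --- ruling out long-range cross-talk between gadgets. Because an SPE is computed from the leaves, a worker's optimal action can a priori depend on the whole downstream portion of the game, so I must show that the only channel through which the downstream circuit influences an offer inside a gadget is the bit eventually placed on that gadget's output wire (and symmetrically for the input side). This dictates careful global sequencing of independent subcircuits (or a proof that the gadget semantics are order-robust), and it forces fan-out to be done by dedicated COPY gadgets rather than by letting a single firm offer to workers scattered across many clauses, since the position-based restriction keeps each firm's offers consecutive. A secondary but indispensable check is that preference lists of length three genuinely suffice for every gate NOT, OR, AND, COPY: this is the tight point separating the PSPACE-hard $(3,3)$ case from the polynomial $(2,\infty)$ and $(\infty,2)$ cases, so each list in the construction must be verified to have at most three entries and to be consistent with the position order.
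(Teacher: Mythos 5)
Your architecture coincides with the paper's: the evaluation phase is realized as a Boolean circuit assembled from NOT, OR, AND, and fan-out (the paper's \BRANCHING) gadgets, the signal on a wire is carried by whether a designated worker is still unmatched at the moment the next gadget's first firm begins to offer, and the result is delivered to $(r_i, z_i)$ by putting a gadget's output worker above $z_i$ on $r_i$'s list so that $z_i$ (whose top choice is $r_i$) accepts exactly when that worker is withheld. The paper likewise inserts NOT gadgets on one side of the quantifier split; it happens to negate for $i \in J^\exists$ rather than $i \in J^\forall$, but that is only the opposite sign convention on the output wire, not a discrepancy.

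The genuine gap is that the gadgets are never exhibited, and they are the substance of the claim. The statement ``one can design the evaluation phase so that $\ldots$'' is proved in the paper by writing down four concrete pieces (Figs.~\ref{fig:OR}--\ref{fig:BRANCHING}) --- the NOT simulator needs five firms and four workers, the branching simulator five firms and five workers, each with carefully chosen lists of length at most three --- then verifying the SPE of each piece for both input states, and finally fixing one global position order (the chain displayed after Table~\ref{table:connection}) under which the interface convention is maintained so that the gadgets compose. Your ``local lemmas'' assert exactly these behaviours but do not establish them, and it is not a priori clear that such gates exist at all under the simultaneous $(3,3)$ degree bound and the position-based constraint; you correctly identify both the degree bound and the absence of cross-talk between gadgets as the hard points, but flagging them is not resolving them. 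As written, the proposal reduces Claim~\ref{claim:hoge} to the existence of a gadget library with the stated local semantics, which is essentially a restatement of the claim rather than a proof of it.
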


We postpone proving Claim~\ref{claim:hoge},
and suppose that we use the evaluation phase designed so.
The following claim shows whether each worker $x_i$ should choose \ACCEPT for $s_i$'s offer or not.

\begin{claim}\label{claim:fuga}
  For any $1 \leq i \leq n$ and any partial assignment $(a_1, \dots, a_{i-1}) \in \{\TRUE, \FALSE\}^{i-1}$, define
  \[\varphi' \coloneq (\Q_{i+1} v_{i+1})\cdots(\Q_n v_n)~\varphi(a_1,\dots,a_{i-1},\FALSE,v_{i+1},\dots,v_n).\]
  Then, in the corresponding subgame $(I', \sigma') \coloneq (I, \sigma)_{[a_1, \dots, a_{i-1}]}$,
\begin{itemize}
\item[(i)]
  if $i\in J^\exists$, we have $(s_i, \bar{x}_i) \in \spe(I', \sigma') \iff \varphi' = \TRUE$, and
\item[(ii)]
  if $i\in J^\forall$, we have $(s_i, \bar{x}_i) \in \spe(I', \sigma') \iff \varphi' = \FALSE$.
\end{itemize}
\end{claim}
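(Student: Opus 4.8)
The plan is to prove Claim~\ref{claim:fuga} by induction on $n-i$ (i.e., downward on $i$ from $n$ to $1$), using Claim~\ref{claim:hoge} as the base-level mechanism that feeds the value of $\varphi$ into the game tree. The key structural fact is that in the subgame $(I',\sigma') = (I,\sigma)_{[a_1,\dots,a_{i-1}]}$, the very next position to make offers is $s_i$ (by the choice of $\pi$), so $s_i$ first offers to $x_i$ and then, if rejected, to $\bar x_i$. By the gadget design, $x_i$ is the top choice for $s_i$ and $s_i$ is the bottom choice for $x_i$, so $x_i$'s decision to \ACCEPT or \REJECT is exactly the decision whether to set $v_i = \TRUE$ or $v_i = \FALSE$: accepting leads to the subgame $(I,\sigma)_{[a_1,\dots,a_{i-1},\TRUE]}$, rejecting (followed by $\bar x_i$ accepting $s_i$, which $\bar x_i$ does because $s_i$ is $\bar x_i$'s top choice) leads to $(I,\sigma)_{[a_1,\dots,a_{i-1},\FALSE]}$. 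So $(s_i,\bar x_i)\in\spe(I',\sigma')$ if and only if $x_i$ rejects, i.e., if and only if $x_i$ strictly prefers her outcome in the $\FALSE$-branch to her outcome in the $\TRUE$-branch.

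The heart of the argument is therefore to determine $x_i$'s payoff in each branch, i.e., whether $x_i$ ends up matched to $t_i$ (her good firm) or only to $s_i$ (her bad firm). This is where I would invoke the chain of gadgets linking $z_i, t_i, x_i$: $x_i$ gets $t_i$ precisely when $t_i$ is ``freed up'' to offer to $x_i$, which happens exactly when $z_i$ rejects $t_i$ in favor of $r_i$, i.e., when $(r_i,z_i)$ is in the SPE of the relevant subgame. Combining this with Claim~\ref{claim:hoge}: for $i\in J^\exists$, $(r_i,z_i)$ is in the SPE iff $\varphi$ evaluates to $\TRUE$ on the completed assignment, so $x_i$ prefers whichever branch makes $\varphi$ true; for $i\in J^\forall$, $(r_i,z_i)$ is in the SPE iff $\varphi$ evaluates to $\FALSE$, so $x_i$ prefers whichever branch makes $\varphi$ false. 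But ``whichever branch makes $\varphi$ true/false'' must be read recursively: the branch outcome is not a single truth value but the value of the remaining quantified subformula. Here the induction hypothesis does the work. Applying Claim~\ref{claim:fuga} (inductively) at level $i+1$ to the two partial assignments $(a_1,\dots,a_{i-1},\FALSE)$ and $(a_1,\dots,a_{i-1},\TRUE)$ tells us, in each of the two subgames $(I,\sigma)_{[a_1,\dots,a_{i-1},b]}$, exactly which of $(s_{i+1},\bar x_{i+1})$ and $(s_{i+1},x_{i+1})$ occurs, hence lets us recurse down and ultimately determine, via the base case reached when all variables are assigned and Claim~\ref{claim:hoge} applies, whether $(r_i,z_i)$ lies in the SPE of the fully-contracted subgame --- which equals the truth value of $(\Q_{i+1}v_{i+1})\cdots(\Q_nv_n)\,\varphi(a_1,\dots,a_{i-1},b,v_{i+1},\dots,v_n)$.

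Concretely, the induction step would proceed as follows. Fix $i$ and a partial assignment $(a_1,\dots,a_{i-1})$, and write $\varphi_b' \coloneq (\Q_{i+1}v_{i+1})\cdots(\Q_nv_n)\,\varphi(a_1,\dots,a_{i-1},b,v_{i+1},\dots,v_n)$ for $b\in\{\TRUE,\FALSE\}$, so $\varphi' = \varphi'_\FALSE$ in the statement. First I would argue that in the subgame reached after $x_i$'s choice $b$, the worker $z_i$ (reachable later in the offering order) is matched to $r_i$ iff $(r_i,z_i)$ is in the SPE of that subgame, and that this is equivalent to $\varphi_b' = \TRUE$ when $i\in J^\exists$ and to $\varphi_b' = \FALSE$ when $i\in J^\forall$ --- this equivalence is proved by combining Claim~\ref{claim:hoge} with the induction hypothesis cascaded over $j = i+1,\dots,n$ (when $i = n$ it is exactly Claim~\ref{claim:hoge}). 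Second, I would argue that $x_i$ is matched to $t_i$ in the subgame after choice $b$ iff $z_i$ rejects $t_i$ there, i.e., iff $(r_i,z_i)$ is in the SPE there; this needs a small local analysis of the $z_i$--$t_i$--$x_i$ gadget using Proposition~\ref{prop:operations_spe} and the list $t_i:\ z_i\succ_{t_i} x_i$, $z_i:\ r_i\succ_{z_i} t_i$, $x_i:\ t_i\succ_{x_i} s_i$. Putting these together, $x_i$ strictly prefers branch $b$ over branch $\lnot b$ iff $x_i$ gets $t_i$ in branch $b$ but not in branch $\lnot b$, iff $(r_i,z_i)$ is in the SPE in branch $b$ but not in branch $\lnot b$. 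For $i\in J^\exists$ this says: $x_i$ rejects (i.e.\ picks $b = \FALSE$) iff $\varphi'_\FALSE = \TRUE$ and $\varphi'_\TRUE = \FALSE$; but I also need to handle the case $\varphi'_\FALSE = \varphi'_\TRUE$, in which $x_i$'s payoff is the same either way and I must check the tie-breaking forced by the gadget (designed so that $x_i$ accepts, i.e.\ $v_i$ defaults to $\TRUE$, matching the convention in the statement that the $\iff$ is with $\varphi' = \varphi'_\FALSE$). Unwinding, $(s_i,\bar x_i)\in\spe(I',\sigma')$ iff $\varphi'_\FALSE = \TRUE$, which is the existential case; the universal case $i\in J^\forall$ is symmetric with truth values flipped, because then $(r_i,z_i)$ is in the SPE iff $\varphi_b' = \FALSE$, so $x_i$ rejects iff $\varphi'_\FALSE = \FALSE$, again matching the claim.

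The main obstacle I anticipate is the bookkeeping around ties and tie-breaking: when the two branches give $x_i$ the same match, her choice is dictated by the offering order / gadget design rather than by strict preference, and getting the default ($v_i \mapsto \TRUE$) to line up with the ``$\varphi' = \varphi(\dots,\FALSE,\dots)$'' convention in the statement requires care --- in particular verifying that the gadget is set up so that $x_i$ accepts $s_i$ by default (which is consistent with ``$x_i$ must be matched at least to $s_i$'' and $x_i$ being $s_i$'s top choice). A secondary subtlety is making sure the cascaded induction over $j > i$ is legitimate: I must confirm that after $x_i$'s choice, the remaining game $(I,\sigma)_{[a_1,\dots,a_{i-1},b]}$ is literally of the form to which Claim~\ref{claim:fuga} at level $i+1$ applies (the next position to offer is $s_{i+1}$, the evaluation-phase gadgets are untouched, etc.), which follows from the position order $\pi$ starting $s_1,\dots,s_n$ and from commutativity of contraction (Observation~\ref{obs:commutative}), but should be stated explicitly.
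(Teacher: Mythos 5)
Your overall skeleton matches the paper's: backward induction on $i$, the observation that $(s_i,\bar{x}_i)\in\spe(I',\sigma')$ exactly when $x_i$ rejects $s_i$'s offer (after which $\bar{x}_i$ must accept), the reduction of $x_i$'s decision to whether $t_i$ ends up offering to her (equivalently, whether $z_i$ is matched to $r_i$), and the combination of Claim~\ref{claim:hoge} with the cascaded induction hypothesis to evaluate the remaining quantified subformula. However, there is a genuine error in your central step. You compare the two branches symmetrically, asserting that $x_i$ gets $t_i$ in branch $b$ iff $(r_i,z_i)$ is in the SPE of that branch, and conclude that for $i\in J^\exists$ the worker $x_i$ rejects iff $\varphi'_{\FALSE}=\TRUE$ and $\varphi'_{\TRUE}=\FALSE$, with a tie-breaking rule for the remaining cases. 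But acceptance is irrevocable: if $x_i$ accepts $s_i$'s offer she is matched to $s_i$ and leaves the market, so she can never obtain $t_i$ in the accept branch, whatever happens to $(r_i,z_i)$ there. The value $\varphi'_{\TRUE}$ is therefore irrelevant to her decision. Her comparison is simply $s_i$ (accept) versus ``$t_i$ or unmatched'' (reject), and since $t_i\succ_{x_i}s_i\succ_{x_i}x_i$, she rejects iff the reject branch yields her $t_i$, i.e., iff $(\Q_i,\varphi'_{\FALSE})=(\exists,\TRUE)$ or $(\forall,\FALSE)$. This is exactly the paper's one-line inductive step, and it involves no ties.

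Your patch makes the error concrete rather than repairing it: in the case $i\in J^\exists$ and $\varphi'_{\FALSE}=\varphi'_{\TRUE}=\TRUE$, your tie-breaking rule (default to \ACCEPT) would give $(s_i,\bar{x}_i)\notin\spe(I',\sigma')$, contradicting the claim, which requires rejection whenever $\varphi'_{\FALSE}=\TRUE$. So the final ``unwinding'' to the correct biconditional does not follow from the condition you actually derived. The remaining ingredients of your proposal---that $\bar{x}_i$ accepts after $x_i$ rejects, that the SPE-realized full assignment has $\varphi$-value equal to the remaining quantified subformula so that Claim~\ref{claim:hoge} applies at the leaf, and that the subgame after $x_i$'s choice is again of the form covered by the claim at level $i+1$---are sound and consistent with the paper's argument.
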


\begin{proof}
  We prove this by backward induction on $i$.
  The base case $i=n$ holds by Claim~\ref{claim:hoge} because $x_n$ gets an offer from $t_n$ $(\succ_{x_n} s_n)$
  if and only if $(\Q_n, \varphi') = (\exists, \TRUE)$ or $(\forall, \FALSE)$.
  For a given $1\le i<n$, suppose that the claim holds for $j\ge i+1$.
  By the induction hypothesis and Claim~\ref{claim:hoge},
  if $x_i$ chooses \REJECT for $s_i$'s offer (note that then $\bar{x}_i$ must choose \ACCEPT for $s_i$'s offer),
  then $x_i$ will be offered from $t_i$ $(\succ_{x_i} s_i)$ if and only if
  $(\Q_i, \varphi') = (\exists, \TRUE)$ or $(\forall, \FALSE)$.
  Thus, the claim holds.
\end{proof}

  Without loss of generality, we may assume that $\Q_1=\exists$ and $v_1$ does not appear in the 3-CNF\@.
  Then, $x_1$ chooses \REJECT for $s_1$'s offer in the SPE if and only if the given \QSAT is a YES-instance.
  Hence, deciding whether $x_1$ chooses \ACCEPT for the first offer $\sigma(1) = (s_1, x_1)$ in the SPE of $(I,\sigma)$ is PSPACE-complete.

It remains to explain the details of the evaluation phase.
Before the proof of Claim~\ref{claim:hoge}, we prepare several gadgets as follows. 
We use gadgets to simulate AND, OR, and NOT-gates (see Figs.~\ref{fig:AND}, \ref{fig:OR}, and \ref{fig:NOT}, respectively).
Also, we use a gadget for branching, i.e., to make a copy of each value (see Fig.~\ref{fig:BRANCHING}).
In our gadgets, an input is represented by a worker; she means \TRUE if and only if she is matched with a firm outside the gadget when the first firm in the gadget moves.
An output is represented by a firm; it means \TRUE if and only if it is not matched to a worker in the gadget when all the firms in the gadget have finished their offers.
For each gadget, we put an index $u$ to distinguish individuals.

\begin{figure}[t]
\begin{minipage}{.5\textwidth}
\centering  
\scalebox{.9}{%
\begin{tikzpicture}[xscale=0.8,yscale=0.8,
    p/.style={circle,inner sep=0pt,draw=black,font=\scriptsize,fill=white,minimum size=13pt}]
  \draw[fill=black!10] (-3,0.5) rectangle (5.5,3.5);
  \node[p,label=left:{\scriptsize$q\succ_{\hat{\alpha}_1^u}\alpha_1^u$}] at (1,3)  (p1) {$\hat{\alpha}_1^u$};
  \node[p,label=left:{\scriptsize$q'\succ_{\hat{\alpha}_2^u}\alpha_2^u$}] at (1,2)  (p2) {$\hat{\alpha}_2^u$};
  \node[p,label=left:{\scriptsize$\alpha_1^u\succ_{\hat{\alpha}_3^u}\alpha_2^u\succ_{\hat{\alpha}_3^u}\alpha_3^u$}] at (1,1)  (p3) {$\hat{\alpha}_3^u$};
  \node[p,label=left:{\scriptsize$\alpha_3^u\succ_{p}\cdots$}] at (1,0)  (pout) {$p$};

  \node[p,label=right:{\scriptsize$\cdots\succ_q \hat{\alpha}_1^u$}] at (3,5)  (qin1) {$q$};
  \node[p,label=right:{\scriptsize$\cdots\succ_{q'} \hat{\alpha}_2^u$}] at (3,4)  (qin2) {$q'$};
  \node[p,label=right:{\scriptsize$\hat{\alpha}_1^u\succ_{\alpha_1^u}\hat{\alpha}_3^u$}] at (3,3)  (q1) {$\alpha_1^u$};
  \node[p,label=right:{\scriptsize$\hat{\alpha}_2^u\succ_{\alpha_2^u}\hat{\alpha}_3^u$}] at (3,2)  (q2) {$\alpha_2^u$};
  \node[p,label=right:{\scriptsize$\hat{\alpha}_3^u\succ_{\alpha_3^u}p$}] at (3,1)  (q3) {$\alpha_3^u$};

  \foreach \u/\v in {qin1/p1,qin2/p2,q1/p1,q1/p3,q2/p2,q2/p3,q3/p3,q3/pout} \draw (\u) -- (\v);
\end{tikzpicture}}
\caption{AND-gate simulator $\AND^u$}\label{fig:AND}
\end{minipage}%
\begin{minipage}{.5\textwidth}
\centering  
\scalebox{.9}{%
\begin{tikzpicture}[xscale=0.8,yscale=0.8,
    p/.style={circle,inner sep=0pt,draw=black,font=\scriptsize,fill=white,minimum size=13pt}]
  \draw[fill=black!10] (-1.6,0.5) rectangle (7,3.5);
  \node[p,label=left:{\scriptsize$q\succ_{\hat{\beta}_1^u}\beta_1^u$}] at (1,3)  (p1) {$\hat{\beta}_1^u$};
  \node[p,label=left:{\scriptsize$q'\succ_{\hat{\beta}_2^u}\beta_1^u$}] at (1,2)  (p2) {$\hat{\beta}_2^u$};
  \node[p,label=left:{\scriptsize$\beta_1^u\succ_{\hat{\beta}_3^u}\beta_2^u$}] at (1,1)  (p3) {$\hat{\beta}_3^u$};
  \node[p,label=left:{\scriptsize$\beta_2^u\succ_{p}\cdots$}] at (1,0)  (pout) {$p$};

  \node[p,label=right:{\scriptsize$\cdots\succ_{q}\hat{\beta}_1^u$}] at (3,5)  (qin1) {$q$};
  \node[p,label=right:{\scriptsize$\cdots\succ_{q'}\hat{\beta}_2^u$}] at (3,4)  (qin2) {$q'$};
  \node[p,label=right:{\scriptsize$\hat{\beta}_1^u\succ_{\beta_1^u}\hat{\beta}_2^u\succ_{\beta_1^u}\hat{\beta}_3^u$}] at (3,3)  (q1) {$\beta_1^u$};
  \node[p,label=right:{\scriptsize$\hat{\beta}_3^u\succ_{\beta_2^u}p$}] at (3,1)  (q2) {$\beta_2^u$};

  \foreach \u/\v in {qin1/p1,qin2/p2,q1/p1,q1/p2,q1/p3,q2/p3,q2/pout} \draw (\u) -- (\v);
\end{tikzpicture}}
\caption{OR-gate simulator $\OR^u$}\label{fig:OR}
\end{minipage}
\end{figure}
\begin{figure}[t]
\begin{minipage}[b]{.5\textwidth}
\centering  
\scalebox{.9}{%
\begin{tikzpicture}[xscale=0.8,yscale=0.8,
    p/.style={circle,inner sep=0pt,draw=black,font=\scriptsize,fill=white,minimum size=13pt}]
  \draw[fill=black!10] (-2.8,0.5) rectangle (6.8,5.5);

  \node[p,label=left:{\scriptsize$q\succ_{\hat{\gamma}_1^u}\gamma_1^u$}] at (1,5)  (p1) {$\hat{\gamma}_1^u$};
  \node[p,label=left:{\scriptsize$\gamma_3^u\succ_{\hat{\gamma}_2^u}\gamma_4^u$}] at (1,4)  (p2) {$\hat{\gamma}_2^u$};
  \node[p,label=left:{\scriptsize$\gamma_2^u$}] at (1,3)  (p3) {$\hat{\gamma}_3^u$};
  \node[p,label=left:{\scriptsize$\gamma_1^u\succ_{\hat{\gamma}_4^u}\gamma_2^u\succ_{\hat{\gamma}_4^u}\gamma_3^u$}] at (1,2)  (p4) {$\hat{\gamma}_4^u$};
  \node[p,label=left:{\scriptsize$\gamma_3^u\succ_{\hat{\gamma}_5^u}\gamma_1^u$}] at (1,1)  (p5) {$\hat{\gamma}_5^u$};
  \node[p,label=left:{\scriptsize$\gamma_4^u\succ_{p}\cdots$}] at (1,0)  (p) {$p$};

  \node[p,label=right:{\scriptsize$\cdots\succ_{q}\hat{\gamma}_1^u$}] at (3,6)  (q) {$q$};
  \node[p,label=right:{\scriptsize$\hat{\gamma}_1^u\succ_{\gamma_1^u}\hat{\gamma}_5^u\succ_{\gamma_1^u}\hat{\gamma}_4^u$}] at (3,4)  (q1) {$\gamma_1^u$};
  \node[p,label=right:{\scriptsize$\hat{\gamma}_4^u\succ_{\gamma_2^u}\hat{\gamma}_3^u$}] at (3,3)  (q2) {$\gamma_2^u$};
  \node[p,label=right:{\scriptsize$\hat{\gamma}_4^u\succ_{\gamma_3^u}\hat{\gamma}_2^u\succ_{\gamma_3^u}\hat{\gamma}_5^u$}] at (3,2)  (q3) {$\gamma_3^u$};
  \node[p,label=right:{\scriptsize$\hat{\gamma}_2^u\succ_{\gamma_4^u}p$}] at (3,1)  (q4) {$\gamma_4^u$};
  \foreach \u/\v in {q/p1,q1/p1,q1/p5,q1/p4,q2/p4,q2/p3,q3/p4,q3/p2,q3/p5,q4/p2,q4/p} \draw (\u) -- (\v);
\end{tikzpicture}}
\caption{NOT-gate simulator $\NOT^u$}\label{fig:NOT}
\end{minipage}%
\begin{minipage}[b]{.5\textwidth}
\centering  
\scalebox{.9}{%
\begin{tikzpicture}[xscale=0.77,yscale=0.8,
    p/.style={circle,inner sep=0pt,draw=black,font=\scriptsize,fill=white,minimum size=13pt}]
  \draw[fill=black!10] (-2.8,1.5) rectangle (6.5,6.5);

  \node[p,label=left:{\scriptsize$\delta_1^u\succ_{p}\cdots$}] at (1,1)  (pout1) {$p$};
  \node[p,label=left:{\scriptsize$\delta_2^u\succ_{p'}\cdots$}] at (1,0)  (pout2) {$p'$};
  \node[p,label=left:{\scriptsize$q\succ_{\hat{\delta}_1^u}\delta_1^u$}] at (1,6)  (p1) {$\hat{\delta}_1^u$};
  \node[p,label=left:{\scriptsize$\delta_5^u\succ_{\hat{\delta}_2^u}\delta_2^u$}] at (1,5)  (p2) {$\hat{\delta}_2^u$};
  \node[p,label=left:{\scriptsize$\delta_3^u$}] at (1,4)  (p3) {$\hat{\delta}_3^u$};
  \node[p,label=left:{\scriptsize$\delta_5^u\succ_{\hat{\delta}_4^u}\delta_3^u\succ_{\hat{\delta}_4^u}\delta_4^u$}] at (1,3)  (p4) {$\hat{\delta}_4^u$};
  \node[p,label=left:{\scriptsize$\delta_4^u\succ_{\hat{\delta}_5^u}\delta_1^u\succ_{\hat{\delta}_5^u}\delta_5^u$}] at (1,2)  (p5) {$\hat{\delta}_5^u$};

  \node[p,label=right:{\scriptsize$\cdots\succ_{q}\hat{\delta}_1^u$}] at (3,7)  (q) {$q$};
  \node[p,label=right:{\scriptsize$\hat{\delta}_1^u\succ_{\delta_1^u}\hat{\delta}_5^u\succ_{\delta_1^u}p$}] at (3,6)  (q1) {$\delta_1^u$};
  \node[p,label=right:{\scriptsize$\hat{\delta}_2^u\succ_{\delta_2^u}p'$}] at (3,5)  (q2) {$\delta_2^u$};
  \node[p,label=right:{\scriptsize$\hat{\delta}_4^u\succ_{\delta_3^u}\hat{\delta}_3^u$}] at (3,4)  (q3) {$\delta_3^u$};
  \node[p,label=right:{\scriptsize$\hat{\delta}_4^u\succ_{\delta_4^u}\hat{\delta}_5^u$}] at (3,3)  (q4) {$\delta_4^u$};
  \node[p,label=right:{\scriptsize$\hat{\delta}_5^u\succ_{\delta_5^u}\hat{\delta}_2^u\succ_{\delta_5^u}\hat{\delta}_4^u$}] at (3,2)  (q5) {$\delta_5^u$};
  \foreach \u/\v in {q/p1,q1/p1,q1/p5,q2/p2,q3/p4,q3/p3,q4/p4,q4/p5,q5/p5,q5/p2,q5/p4,q1/pout1,q2/pout2} \draw (\u) -- (\v);
\end{tikzpicture}}
\caption{Branching simulator $\BRANCHING^u$}\label{fig:BRANCHING}
\end{minipage}
\end{figure}

% AND
$\AND^u$ is a piece of a sequential matching game which consists of three firms $P[\AND^u]=\{\hat{\alpha}_1^u,\hat{\alpha}_2^u,\hat{\alpha}_3^u\}$ and three workers $Q[\AND^u]=\{\alpha_1^u,\alpha_2^u,\alpha_3^u\}$ with two input workers $q$ and $q'$ and one output firm $p$.
The set of acceptable pairs related to $\AND^u$ and the preferences of $P[\AND^u]\cup Q[\AND^u]$ are defined as Fig.~\ref{fig:AND}.
Let $\argmax\{\sigma^{-1}(e)\mid e\in \delta_I(q)\}=(\hat{\alpha}_1^u,q)$, $\argmax\{\sigma^{-1}(e)\mid e\in \delta_I(q')\}=(\hat{\alpha}_2^u,q)$, and $\pi^{-1}(\hat{\alpha}_1^u)<\pi^{-1}(\hat{\alpha}_2^u)<\pi^{-1}(\hat{\alpha}_3^u)<\pi^{-1}(p)$. 
Let us consider a subgame where the first firm is $\hat{\alpha}_1^u$.
If both $q$ and $q'$ are already matched, then in the SPE matching for this subgame,
$\hat{\alpha}_i^u$ is matched with $\alpha_i^u$ for each $i = 1, 2, 3$,
and hence $p$ is not matched with $\alpha_3^u$.
On the other hand, if $q$ or $q'$ is not matched yet, then $p$ is matched with $\alpha_3^u$ in the SPE matching for this subgame.
The input-output relation is summarized in Fig.~\ref{fig:AND_IO}.
Thus, we can see $\AND^u$ as an AND-gate simulator.

%%% AND gate
\begin{figure}[t]
\begin{minipage}{.25\textwidth}% T T => T
\centering  
\begin{tikzpicture}[xscale=0.75,yscale=0.75,
    p/.style={circle,inner sep=0pt,draw=black,font=\scriptsize,fill=white,minimum size=13pt}]
  \draw[fill=black!10] (0,0.5) rectangle (4,3.5);
  \node[p] at (1,3)  (p1) {$\hat{\alpha}_1^u$};
  \node[p] at (1,2)  (p2) {$\hat{\alpha}_2^u$};
  \node[p] at (1,1)  (p3) {$\hat{\alpha}_3^u$};
  \node[p,label=left:{\tiny\TRUE}] at (1,0)  (pout) {$p$};
  \node at (3,0)  (poutd) {};

  \node[p,label=right:{\tiny\TRUE}] at (3,5)  (qin1) {$q$};
  \node at (1,5.5)  (qin1d) {};
  \node[p,label=right:{\tiny\TRUE}] at (3,4)  (qin2) {$q'$};
  \node at (1,4.5)  (qin2d) {};
  \node[p] at (3,3)  (q1) {$\alpha_1^u$};
  \node[p] at (3,2)  (q2) {$\alpha_2^u$};
  \node[p] at (3,1)  (q3) {$\alpha_3^u$};

  \foreach \u/\v in {qin1/p1,qin2/p2,q1/p1,q1/p3,q2/p2,q2/p3,q3/p3,q3/pout} \draw (\u) -- (\v);
  \foreach \u/\v in {qin1/qin1d,qin2/qin2d,p1/q1,p2/q2,p3/q3} \draw[line width=3pt,red,opacity=.5] (\u) -- (\v);
\end{tikzpicture}
\end{minipage}%
\begin{minipage}{.25\textwidth}% T F => F
\centering  
\begin{tikzpicture}[xscale=0.75,yscale=0.75,
    p/.style={circle,inner sep=0pt,draw=black,font=\scriptsize,fill=white,minimum size=13pt}]
  \draw[fill=black!10] (0,0.5) rectangle (4,3.5);
  \node[p] at (1,3)  (p1) {$\hat{\alpha}_1^u$};
  \node[p] at (1,2)  (p2) {$\hat{\alpha}_2^u$};
  \node[p] at (1,1)  (p3) {$\hat{\alpha}_3^u$};
  \node[p,label=left:{\tiny\FALSE}] at (1,0)  (pout) {$p$};
  \node at (3,0)  (poutd) {};

  \node[p,label=right:{\tiny\TRUE}] at (3,5)  (qin1) {$q$};
  \node at (1,5.5)  (qin1d) {};
  \node[p,label=right:{\tiny\FALSE}] at (3,4)  (qin2) {$q'$};
  \node at (1,4.5)  (qin2d) {};
  \node[p] at (3,3)  (q1) {$\alpha_1^u$};
  \node[p] at (3,2)  (q2) {$\alpha_2^u$};
  \node[p] at (3,1)  (q3) {$\alpha_3^u$};

  \foreach \u/\v in {qin1/p1,qin2/p2,q1/p1,q1/p3,q2/p2,q2/p3,q3/p3,q3/pout} \draw (\u) -- (\v);
  \foreach \u/\v in {qin1/qin1d,qin2/p2,p1/q1,p3/q2,pout/q3} \draw[line width=3pt,red,opacity=.5] (\u) -- (\v);
\end{tikzpicture}
\end{minipage}%
\begin{minipage}{.25\textwidth}% F T => F
\centering  
\begin{tikzpicture}[xscale=0.75,yscale=0.75,
    p/.style={circle,inner sep=0pt,draw=black,font=\scriptsize,fill=white,minimum size=13pt}]
  \draw[fill=black!10] (0,0.5) rectangle (4,3.5);
  \node[p] at (1,3)  (p1) {$\hat{\alpha}_1^u$};
  \node[p] at (1,2)  (p2) {$\hat{\alpha}_2^u$};
  \node[p] at (1,1)  (p3) {$\hat{\alpha}_3^u$};
  \node[p,label=left:{\tiny\FALSE}] at (1,0)  (pout) {$p$};
  \node at (3,0)  (poutd) {};

  \node[p,label=right:{\tiny\FALSE}] at (3,5)  (qin1) {$q$};
  \node at (1,5.5)  (qin1d) {};
  \node[p,label=right:{\tiny\TRUE}] at (3,4)  (qin2) {$q'$};
  \node at (1,4.5)  (qin2d) {};
  \node[p] at (3,3)  (q1) {$\alpha_1^u$};
  \node[p] at (3,2)  (q2) {$\alpha_2^u$};
  \node[p] at (3,1)  (q3) {$\alpha_3^u$};

  \foreach \u/\v in {qin1/p1,qin2/p2,q1/p1,q1/p3,q2/p2,q2/p3,q3/p3,q3/pout} \draw (\u) -- (\v);
  \foreach \u/\v in {qin1/p1,qin2/qin2d,p2/q2,p3/q1,pout/q3} \draw[line width=3pt,red,opacity=.5] (\u) -- (\v);
\end{tikzpicture}
\end{minipage}%
\begin{minipage}{.25\textwidth}% F F => F
\centering  
\begin{tikzpicture}[xscale=0.75,yscale=0.75,
    p/.style={circle,inner sep=0pt,draw=black,font=\scriptsize,fill=white,minimum size=13pt}]
  \draw[fill=black!10] (0,0.5) rectangle (4,3.5);
  \node[p] at (1,3)  (p1) {$\hat{\alpha}_1^u$};
  \node[p] at (1,2)  (p2) {$\hat{\alpha}_2^u$};
  \node[p] at (1,1)  (p3) {$\hat{\alpha}_3^u$};
  \node[p,label=left:{\tiny\FALSE}] at (1,0)  (pout) {$p$};
  \node at (3,0)  (poutd) {};

  \node[p,label=right:{\tiny\FALSE}] at (3,5)  (qin1) {$q$};
  \node at (1,5.5)  (qin1d) {};
  \node[p,label=right:{\tiny\FALSE}] at (3,4)  (qin2) {$q'$};
  \node at (1,4.5)  (qin2d) {};
  \node[p] at (3,3)  (q1) {$\alpha_1^u$};
  \node[p] at (3,2)  (q2) {$\alpha_2^u$};
  \node[p] at (3,1)  (q3) {$\alpha_3^u$};

  \foreach \u/\v in {qin1/p1,qin2/p2,q1/p1,q1/p3,q2/p2,q2/p3,q3/p3,q3/pout} \draw (\u) -- (\v);
  \foreach \u/\v in {p1/qin1,p2/qin2,p3/q1,pout/q3} \draw[line width=3pt,red,opacity=.5] (\u) -- (\v);
\end{tikzpicture}
\end{minipage}%
\caption{Input-output relation of $\AND^u$}\label{fig:AND_IO}
\end{figure}
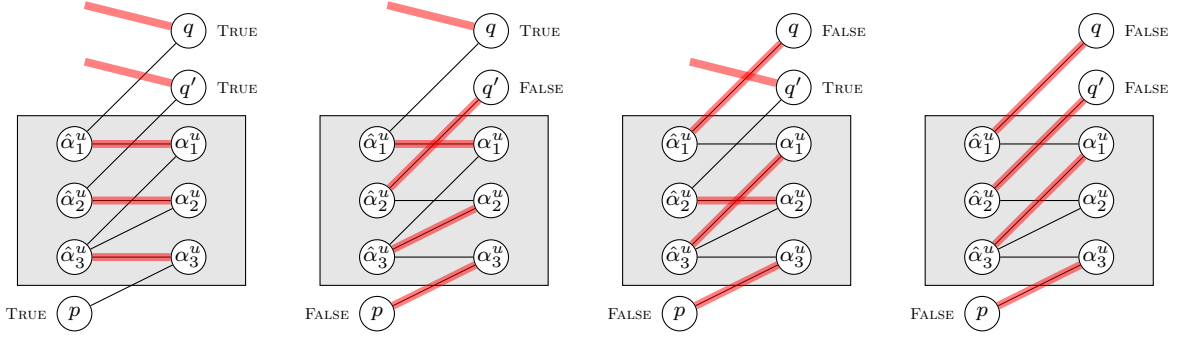

% OR
$\OR^u$ is a piece of a sequential matching game which consists of three firms $P[\OR^u]=\{\hat{\beta}_1^u,\hat{\beta}_2^u,\hat{\beta}_3^u\}$ and two workers $Q[\OR^u]=\{\beta_1^u,\beta_2^u\}$ with two input workers $q$ and $q'$ and one output firm $p$.
The set of acceptable pairs related to $\OR^u$ and the preferences of $P[\OR^u]\cup Q[\OR^u]$ are defined as Fig.~\ref{fig:OR}.
Let $\argmax\{\sigma^{-1}(e)\mid e\in \delta_I(q)\}=(\hat{\beta}_1^u,q)$, $\argmax\{\sigma^{-1}(e)\mid e\in \delta_I(q')\}=(\hat{\beta}_2^u,q)$, and
$\pi^{-1}(\hat{\beta}_1^u)<\pi^{-1}(\hat{\beta}_2^u)<\pi^{-1}(\hat{\beta}_3^u)<\pi^{-1}(p)$.
Let us consider a subgame where the first firm is $\hat{\beta}_1^u$.
If $q$ or $q'$ is already matched, then in the SPE matching for this subgame, $\beta_1^u$ is matched with $\hat{\beta}_1^u$ or $\hat{\beta}_2^u$, and $\beta_2^u$ is matched with $\hat{\beta}_3^u$. Hence $p$ is not matched with $\beta_2^u$.
On the other hand, if neither $q$ nor $q'$ is matched yet, then the SPE matching for this subgame contains $\{(\hat{\beta}_3^u,\beta_1^u),(p,\beta_2^u)\}$.
The input-output relation is summarized in Fig.~\ref{fig:OR_IO}.
Thus, we can see $\OR^u$ as an OR-gate simulator.
%%% OR gate
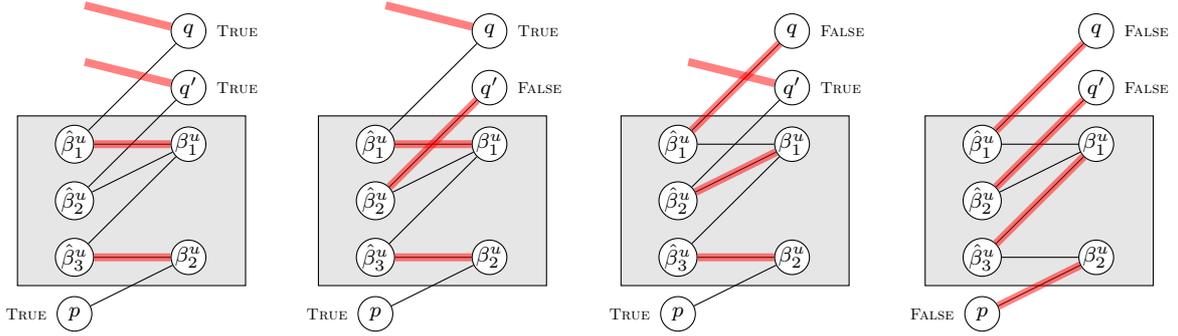
\begin{figure}[t]
\begin{minipage}{.25\textwidth}% T T => T
\centering  
\begin{tikzpicture}[xscale=0.75,yscale=0.75,
    p/.style={circle,inner sep=0pt,draw=black,font=\scriptsize,fill=white,minimum size=13pt}]
  \draw[fill=black!10] (0,0.5) rectangle (4,3.5);
  \node[p] at (1,3)  (p1) {$\hat{\beta}_1^u$};
  \node[p] at (1,2)  (p2) {$\hat{\beta}_2^u$};
  \node[p] at (1,1)  (p3) {$\hat{\beta}_3^u$};
  \node[p,label=left:{\tiny\TRUE}] at (1,0)  (pout) {$p$};
  \node at (3,0)  (poutd) {};

  \node[p,label=right:{\tiny\TRUE}] at (3,5)  (qin1) {$q$};
  \node at (1,5.5)  (qin1d) {};
  \node[p,label=right:{\tiny\TRUE}] at (3,4)  (qin2) {$q'$};
  \node at (1,4.5)  (qin2d) {};
  \node[p] at (3,3)  (q1) {$\beta_1^u$};
  \node[p] at (3,1)  (q2) {$\beta_2^u$};

  \foreach \u/\v in {qin1/p1,qin2/p2,q1/p1,q1/p2,q1/p3,q2/p3,q2/pout} \draw (\u) -- (\v);
  \foreach \u/\v in {qin1/qin1d,qin2/qin2d,p1/q1,p3/q2} \draw[line width=3pt,red,opacity=.5] (\u) -- (\v);
\end{tikzpicture}
\end{minipage}%
\begin{minipage}{.25\textwidth}% T F => T
\centering  
\begin{tikzpicture}[xscale=0.75,yscale=0.75,
    p/.style={circle,inner sep=0pt,draw=black,font=\scriptsize,fill=white,minimum size=13pt}]
  \draw[fill=black!10] (0,0.5) rectangle (4,3.5);
  \node[p] at (1,3)  (p1) {$\hat{\beta}_1^u$};
  \node[p] at (1,2)  (p2) {$\hat{\beta}_2^u$};
  \node[p] at (1,1)  (p3) {$\hat{\beta}_3^u$};
  \node[p,label=left:{\tiny\TRUE}] at (1,0)  (pout) {$p$};
  \node at (3,0)  (poutd) {};

  \node[p,label=right:{\tiny\TRUE}] at (3,5)  (qin1) {$q$};
  \node at (1,5.5)  (qin1d) {};
  \node[p,label=right:{\tiny\FALSE}] at (3,4)  (qin2) {$q'$};
  \node at (1,4.5)  (qin2d) {};
  \node[p] at (3,3)  (q1) {$\beta_1^u$};
  \node[p] at (3,1)  (q2) {$\beta_2^u$};

  \foreach \u/\v in {qin1/p1,qin2/p2,q1/p1,q1/p2,q1/p3,q2/p3,q2/pout} \draw (\u) -- (\v);
  \foreach \u/\v in {qin1/qin1d,qin2/p2,p1/q1,p3/q2} \draw[line width=3pt,red,opacity=.5] (\u) -- (\v);
\end{tikzpicture}
\end{minipage}%
\begin{minipage}{.25\textwidth}% F T => T
\centering  
\begin{tikzpicture}[xscale=0.75,yscale=0.75,
    p/.style={circle,inner sep=0pt,draw=black,font=\scriptsize,fill=white,minimum size=13pt}]
  \draw[fill=black!10] (0,0.5) rectangle (4,3.5);
  \node[p] at (1,3)  (p1) {$\hat{\beta}_1^u$};
  \node[p] at (1,2)  (p2) {$\hat{\beta}_2^u$};
  \node[p] at (1,1)  (p3) {$\hat{\beta}_3^u$};
  \node[p,label=left:{\tiny\TRUE}] at (1,0)  (pout) {$p$};
  \node at (3,0)  (poutd) {};

  \node[p,label=right:{\tiny\FALSE}] at (3,5)  (qin1) {$q$};
  \node at (1,5.5)  (qin1d) {};
  \node[p,label=right:{\tiny\TRUE}] at (3,4)  (qin2) {$q'$};
  \node at (1,4.5)  (qin2d) {};
  \node[p] at (3,3)  (q1) {$\beta_1^u$};
  \node[p] at (3,1)  (q2) {$\beta_2^u$};

  \foreach \u/\v in {qin1/p1,qin2/p2,q1/p1,q1/p2,q1/p3,q2/p3,q2/pout} \draw (\u) -- (\v);
  \foreach \u/\v in {qin1/p1,qin2/qin2d,p2/q1,p3/q2} \draw[line width=3pt,red,opacity=.5] (\u) -- (\v);
\end{tikzpicture}
\end{minipage}%
\begin{minipage}{.25\textwidth}% F F => F
\centering  
\begin{tikzpicture}[xscale=0.75,yscale=0.75,
    p/.style={circle,inner sep=0pt,draw=black,font=\scriptsize,fill=white,minimum size=13pt}]
  \draw[fill=black!10] (0,0.5) rectangle (4,3.5);
  \node[p] at (1,3)  (p1) {$\hat{\beta}_1^u$};
  \node[p] at (1,2)  (p2) {$\hat{\beta}_2^u$};
  \node[p] at (1,1)  (p3) {$\hat{\beta}_3^u$};
  \node[p,label=left:{\tiny\FALSE}] at (1,0)  (pout) {$p$};
  \node at (3,0)  (poutd) {};

  \node[p,label=right:{\tiny\FALSE}] at (3,5)  (qin1) {$q$};
  \node at (1,5.5)  (qin1d) {};
  \node[p,label=right:{\tiny\FALSE}] at (3,4)  (qin2) {$q'$};
  \node at (1,4.5)  (qin2d) {};
  \node[p] at (3,3)  (q1) {$\beta_1^u$};
  \node[p] at (3,1)  (q2) {$\beta_2^u$};

  \foreach \u/\v in {qin1/p1,qin2/p2,q1/p1,q1/p2,q1/p3,q2/p3,q2/pout} \draw (\u) -- (\v);
  \foreach \u/\v in {qin1/p1,qin2/p2,p3/q1,pout/q2} \draw[line width=3pt,red,opacity=.5] (\u) -- (\v);
\end{tikzpicture}
\end{minipage}%
\caption{Input-output relation of $\OR^u$}\label{fig:OR_IO}
\end{figure}

% NOT
$\NOT^u$ is a piece of a sequential matching game which consists of five firms $P[\NOT^u]=\{\hat{\gamma}_1^u,\dots,\hat{\gamma}_5^u\}$ and three workers $Q[\NOT^u]=\{\gamma_1^u,\dots,\gamma_4^u\}$ with one input worker $q$ and one output firm $p$.
The set of acceptable pairs related to $\NOT^u$ 
and the preferences of $P[\NOT^u]\cup Q[\NOT^u]$ are defined as Fig.~\ref{fig:NOT}.
Let $\argmax\{\sigma^{-1}(e)\mid e\in \delta_I(q)\}=(\hat{\gamma}_1^u,q)$ and
$\pi^{-1}(\hat{\gamma}_1^u)<\dots<\pi^{-1}(\hat{\gamma}_5^u)<\pi^{-1}(p)$.
Let us consider a subgame $(I',\sigma')$ where the first firm is $\hat{\gamma}_1^u$.
If $q$ is already matched, then $\spe(I',\sigma')\cap E[\NOT^u]=\{(\hat{\gamma}_1^u,\gamma_1^u),\,(\hat{\gamma}_2^u,\gamma_3^u),\,(\hat{\gamma}_4^u,\gamma_2^u),\,(p,\gamma_4^u)\}$.
Otherwise, i.e., $q$ is not matched yet, $\spe(I',\sigma')\cap E[\NOT^u]=\{(\hat{\gamma}_1^u,q),\,(\hat{\gamma}_2^u,\gamma_4^u),\,(\hat{\gamma}_3^u,\gamma_2^u),\,(\hat{\gamma}_4^u,\gamma_3^u),\,(\hat{\gamma}_5^u,\gamma_1^u)\}$.
The input-output relation is summarized in Fig.~\ref{fig:NOT_IO}.
Thus, we can see $\NOT^u$ as a NOT-gate simulator.

% BRANCHING
$\BRANCHING^u$ is a piece of a sequential matching game which consists of five firms $P[\BRANCHING^u]\allowbreak{}=\{\hat{\delta}_1^u,\dots,\hat{\delta}_5^u\}$ and five workers $Q[\BRANCHING^u]=\{\delta_1^u,\dots,\delta_5^u\}$ with one input worker $q$ and two output firms $p$ and $p'$.
The set of acceptable pairs related to $\BRANCHING^u$ 
and the preferences of $P[\BRANCHING^u]\cup Q[\BRANCHING^u]$ are defined as Fig.~\ref{fig:BRANCHING}.
Let $\argmax\{\sigma^{-1}(e)\mid e\in \delta_I(q)\}=(\hat{\delta}_1^u,q)$ and
$\pi^{-1}(\hat{\delta}_1^u)<\dots<\pi^{-1}(\hat{\delta}_5^u)<\pi^{-1}(p)<\pi^{-1}(p')$.
Let us consider a subgame $(I',\sigma')$ where the first firm is $\hat{\delta}_1^u$.
If $q$ is already matched, then $\spe(I',\sigma')\cap E[\BRANCHING^u]=\{(\hat{\delta}_i^u, \delta_i^u) \mid i = 1, 2, 3, 4, 5 \}$.
Otherwise, i.e., $q$ is not matched yet, $\spe(I',\sigma')\cap E[\BRANCHING^u]=\{(\hat{\delta}_1^u,q),\,(\hat{\delta}_2^u,\delta_5^u),\,(\hat{\delta}_4^u,\delta_3^u),\,(\hat{\delta}_5^u,\delta_4^u),\,(p,\delta_1^u),\,(p',\delta_2^u)\}$.
The input-output relation is summarized in Fig.~\ref{fig:BRANCHING_IO}.
Thus, we can see $\BRANCHING^u$ as a branching simulator.

%%% NOT, BRANCHING gate
\begin{figure}
\begin{minipage}[b]{.5\textwidth}
\begin{minipage}{.5\textwidth}% T => F
\centering  
\begin{tikzpicture}[xscale=0.75,yscale=0.75,
    p/.style={circle,inner sep=0pt,draw=black,font=\scriptsize,fill=white,minimum size=13pt}]
  \draw[fill=black!10] (0,0.5) rectangle (4,5.5);

  \node[p] at (1,5)  (p1) {$\hat{\gamma}_1^u$};
  \node[p] at (1,4)  (p2) {$\hat{\gamma}_2^u$};
  \node[p] at (1,3)  (p3) {$\hat{\gamma}_3^u$};
  \node[p] at (1,2)  (p4) {$\hat{\gamma}_4^u$};
  \node[p] at (1,1)  (p5) {$\hat{\gamma}_5^u$};
  \node[p,label=left:{\tiny\FALSE}] at (1,0)  (p) {$p$};

  \node[p,label=right:{\tiny\TRUE}] at (3,6)  (q) {$q$};
  \node[] at (1,6.5)  (qd) {};
  \node[p] at (3,4)  (q1) {$\gamma_1^u$};
  \node[p] at (3,3)  (q2) {$\gamma_2^u$};
  \node[p] at (3,2)  (q3) {$\gamma_3^u$};
  \node[p] at (3,1)  (q4) {$\gamma_4^u$};
  \foreach \u/\v in {q/p1,q1/p1,q1/p5,q1/p4,q2/p4,q2/p3,q3/p4,q3/p2,q3/p5,q4/p2,q4/p} \draw (\u) -- (\v);
  \foreach \u/\v in {q/qd,p1/q1,p2/q3,p4/q2,p/q4} \draw[line width=3pt,red,opacity=.5] (\u) -- (\v);
\end{tikzpicture}
\end{minipage}%
\begin{minipage}{.5\textwidth}% F => T
\centering  
\begin{tikzpicture}[xscale=0.75,yscale=0.75,
    p/.style={circle,inner sep=0pt,draw=black,font=\scriptsize,fill=white,minimum size=13pt}]
  \draw[fill=black!10] (0,0.5) rectangle (4,5.5);

  \node[p] at (1,5)  (p1) {$\hat{\gamma}_1^u$};
  \node[p] at (1,4)  (p2) {$\hat{\gamma}_2^u$};
  \node[p] at (1,3)  (p3) {$\hat{\gamma}_3^u$};
  \node[p] at (1,2)  (p4) {$\hat{\gamma}_4^u$};
  \node[p] at (1,1)  (p5) {$\hat{\gamma}_5^u$};
  \node[p,label=left:{\tiny\TRUE}] at (1,0)  (p) {$p$};

  \node[p,label=right:{\tiny\FALSE}] at (3,6)  (q) {$q$};
  \node[] at (1,6.5)  (qd) {};
  \node[p] at (3,4)  (q1) {$\gamma_1^u$};
  \node[p] at (3,3)  (q2) {$\gamma_2^u$};
  \node[p] at (3,2)  (q3) {$\gamma_3^u$};
  \node[p] at (3,1)  (q4) {$\gamma_4^u$};
  \foreach \u/\v in {q/p1,q1/p1,q1/p5,q1/p4,q2/p4,q2/p3,q3/p4,q3/p2,q3/p5,q4/p2,q4/p} \draw (\u) -- (\v);
  \foreach \u/\v in {p1/q,p2/q4,p3/q2,p4/q3,p5/q1} \draw[line width=3pt,red,opacity=.5] (\u) -- (\v);
\end{tikzpicture}
\end{minipage}%
\caption{Input-output relation of $\NOT^u$}\label{fig:NOT_IO}
\end{minipage}%
\begin{minipage}[b]{.5\textwidth}
\begin{minipage}{.5\textwidth}% T => T T
\centering  
\begin{tikzpicture}[xscale=0.75,yscale=0.75,
    p/.style={circle,inner sep=0pt,draw=black,font=\scriptsize,fill=white,minimum size=13pt}]
  \draw[fill=black!10] (0,1.5) rectangle (4,6.5);

  \node[p,label=left:{\tiny\TRUE}] at (1,1)  (pout1) {$p$};
  \node[p,label=left:{\tiny\TRUE}] at (1,0)  (pout2) {$p'$};
  \node[p] at (1,6)  (p1) {$\hat{\delta}_1^u$};
  \node[p] at (1,5)  (p2) {$\hat{\delta}_2^u$};
  \node[p] at (1,4)  (p3) {$\hat{\delta}_3^u$};
  \node[p] at (1,3)  (p4) {$\hat{\delta}_4^u$};
  \node[p] at (1,2)  (p5) {$\hat{\delta}_5^u$};

  \node[p,label=right:{\tiny\TRUE}] at (3,7)  (q) {$q$};
  \node at (1,7.5)  (qd) {};
  \node[p] at (3,6)  (q1) {$\delta_1^u$};
  \node[p] at (3,5)  (q2) {$\delta_2^u$};
  \node[p] at (3,4)  (q3) {$\delta_3^u$};
  \node[p] at (3,3)  (q4) {$\delta_4^u$};
  \node[p] at (3,2)  (q5) {$\delta_5^u$};
  \foreach \u/\v in {q/p1,q1/p1,q1/p5,q2/p2,q3/p4,q3/p3,q4/p4,q4/p5,q5/p5,q5/p2,q5/p4,q1/pout1,q2/pout2} \draw (\u) -- (\v);
  \foreach \u/\v in {qd/q,p1/q1,p2/q2,p3/q3,p4/q4,p5/q5} \draw[line width=3pt,red,opacity=.5] (\u) -- (\v);
\end{tikzpicture}%
\end{minipage}%
\begin{minipage}{.5\textwidth}% F => F F
\centering  
\begin{tikzpicture}[xscale=0.75,yscale=0.75,
    p/.style={circle,inner sep=0pt,draw=black,font=\scriptsize,fill=white,minimum size=13pt}]
  \draw[fill=black!10] (0,1.5) rectangle (4,6.5);

  \node[p,label=left:{\tiny\FALSE}] at (1,1)  (pout1) {$p$};
  \node[p,label=left:{\tiny\FALSE}] at (1,0)  (pout2) {$p'$};
  \node[p] at (1,6)  (p1) {$\hat{\delta}_1^u$};
  \node[p] at (1,5)  (p2) {$\hat{\delta}_2^u$};
  \node[p] at (1,4)  (p3) {$\hat{\delta}_3^u$};
  \node[p] at (1,3)  (p4) {$\hat{\delta}_4^u$};
  \node[p] at (1,2)  (p5) {$\hat{\delta}_5^u$};

  \node[p,label=right:{\tiny\FALSE}] at (3,7)  (q) {$q$};
  \node at (1,7.5)  (qd) {};
  \node[p] at (3,6)  (q1) {$\delta_1^u$};
  \node[p] at (3,5)  (q2) {$\delta_2^u$};
  \node[p] at (3,4)  (q3) {$\delta_3^u$};
  \node[p] at (3,3)  (q4) {$\delta_4^u$};
  \node[p] at (3,2)  (q5) {$\delta_5^u$};
  \foreach \u/\v in {q/p1,q1/p1,q1/p5,q2/p2,q3/p4,q3/p3,q4/p4,q4/p5,q5/p5,q5/p2,q5/p4,q1/pout1,q2/pout2} \draw (\u) -- (\v);
  \foreach \u/\v in {p1/q,p2/q5,p4/q3,p5/q4,pout1/q1,pout2/q2} \draw[line width=3pt,red,opacity=.5] (\u) -- (\v);
\end{tikzpicture}
\end{minipage}%
\caption{Input-output relation of $\BRANCHING^u$}\label{fig:BRANCHING_IO}
\end{minipage}
\end{figure}

Now, we are ready to design the evaluation phase required in Claim~\ref{claim:hoge}.

\begin{proof}[Proof of Claim~\ref{claim:hoge}]
Let $\mathcal{C}=\{C_1,\dots,C_m\}$ be the set of clauses for the given 3-CNF\@.
For each clause $C_j\in\mathcal{C}$, let $C_j=\ell_{j,1}\vee \ell_{j,2}\vee \ell_{j,3}$ such that $\ell_{j,k}\in \{x_{\lambda(j,k)},\bar{x}_{\lambda(j,k)}\}$ $(k=1,2,3)$.
Also, let $N^+=\{(j,k)\mid \ell_{j,k}=x_{\lambda(j,k)},~k\in\{1,2,3\}\}$ and $N^-=\{(j,k)\mid \ell_{j,k}=\bar{x}_{\lambda(j,k)},~k\in\{1,2,3\}\}$.
Without loss of generality, we may assume that $\lambda(j,1)<\lambda(j,2)<\lambda(j,3)$.
We construct a $(3,3)$-\SPEM instance $(I,\sigma)$ as follows.
In the instance, we use
$\AND^u$ for $u\in U_{\AND}\coloneq\{1,\dots,m\}$,
$\OR^u$ for $u\in U_{\OR}\coloneq\bigcup_{j=1}^m\{(j,1),(j,2)\}$,
$\NOT^u$ for $u\in U_{\NOT}\coloneq N^-\cup J^\exists$, and
$\BRANCHING^u$ for $u\in U_{\BRANCHING}\coloneq \bigcup_{j=1}^{m-1}\{(1,j),\dots,(n,j)\}\cup\{1,\dots,n-1\}$.
The set of firms and workers are respectively defined as
\begin{align*}
  P&\coloneq\{s_1,\dots,s_n,t_1,\dots,t_n,r_1,\dots,r_n,\hat{c}_1,\dots,\hat{c}_m\}\cup\bigcup_{i=1}^n\bigcup_{j=1}^m\{\hat{y}_{i,j}\}
  \cup\bigcup_{j=1}^m \{\hat{\ell}_{j,1},\hat{\ell}_{j,2},\hat{\ell}_{j,3}\}\\
  &\cup\bigcup_{u\in U_{\OR}}P[\OR^u]
  \cup\bigcup_{u\in U_{\AND}}P[\AND^u]
  \cup\bigcup_{u\in U_{\NOT}}P[\NOT^u]
  \cup\bigcup_{u\in U_{\BRANCHING}}P[\BRANCHING^u]
\end{align*}
and
\begin{align*}
  Q&\coloneq\{x_1,\dots,x_n,\bar{x}_1,\dots,\bar{x}_n,z_1,\dots,z_n,c_1,\dots,c_m\}\cup\bigcup_{i=1}^n\bigcup_{j=1}^m\{y_{i,j}\}
  \cup\bigcup_{j=1}^m \{\ell_{j,1},\ell_{j,2},\ell_{j,3}\}\\
  &\cup\bigcup_{u\in U_{\OR}}Q[\OR^u]
  \cup\bigcup_{u\in U_{\AND}}Q[\AND^u]
  \cup\bigcup_{u\in U_{\NOT}}Q[\NOT^u]
  \cup\bigcup_{u\in U_{\BRANCHING}}Q[\BRANCHING^u].
\end{align*}
The input and output of each gadget are given as shown in Table~\ref{table:connection}.
Moreover, the preferences of the other firms and workers are given as follows:
\begin{center}
\begin{minipage}[t]{.1\linewidth}
\end{minipage}%
\begin{minipage}[t]{.45\linewidth}
\begin{description}
\setlength{\parskip}{0cm}
\setlength{\itemsep}{0cm}
\item[$s_i$:] $x_i\succ_{s_i}\bar{x}_i$ $(i=1,\dots,n)$,
\item[$\hat{\ell}_{j,k}$:] $y_{\lambda(j,k),j}\succ_{\hat{\ell}_{j,k}}\ell_{j,k}$ $((j,k)\in N^+)$,
\item[$\hat{\ell}_{j,k}$:] $\gamma_4^{(j,k)}\succ_{\hat{\ell}_{j,k}}\ell_{j,k}$ $((j,k)\in N^-)$,
\item[$\hat{y}_{i,j}$:] $\delta_1^{(i,j)}\succ_{\hat{y}_{i,j}}y_{i,j}$ $(\substack{i=1,\dots,n\\ j=1,\dots,m-1})$,
\item[$\hat{y}_{i,m}$:] $\delta_2^{(i,m-1)}\succ_{\hat{y}_{i,m}}y_{i,m}$ $(i=1,\dots,n)$,
\item[$\hat{c}_j$:] $\beta_2^{(j,2)}\succ_{\hat{c}_j}c_j$ $(j=1,\dots,m)$,
\item[$r_i$:] $\hat{\delta}_1^i\succ_{r_i}z_i$ $(i=1,\dots,n-1)$,
\item[$r_n$:] $\hat{\delta}_2^{n-1}\succ_{r_i}z_n$,
\item[$t_i$:] $z_i\succ_{t_i}x_i$ $(i=1,\dots,n)$,
\end{description}
\end{minipage}%
\begin{minipage}[t]{.45\linewidth}
\begin{description}
\setlength{\parskip}{0cm}
\setlength{\itemsep}{0cm}
\item[$x_i$:] $t_i\succ_{x_i}s_i$ $(i=1,\dots,n)$,
\item[$\bar{x}_i$:] $s_i\succ_{\bar{x}_i}\hat{\delta}_1^{(i,1)}$ $(i=1,\dots,n)$,
\item[$y_{i,j}$:] $\begin{cases}\hat{y}_{i,j}\succ_{y_{i,j}}\hat{\ell}_{j,k}&(\text{if }\ell_{j,k}=x_i),\\\hat{y}_{i,j}\succ_{y_{i,j}}\hat{\gamma}_1^{(j,k)}&(\text{if }\ell_{j,k}=\bar{x}_i),\\\hat{y}_{i,j}&(\text{otherwise}),\end{cases}$
\item[$c_1$:] $\hat{c}_1\succ_{c_1}\hat{\alpha}_1^1$,
\item[$c_j$:] $\hat{c}_j\succ_{c_j}\hat{\alpha}_2^{j-1}$ $(j=2,\dots,m)$,
\item[$z_i$:] $r_i\succ_{z_i}t_i$ $(i=1,\dots,n)$.
\end{description}
\end{minipage}
\end{center}
We define $E$ and $\succ$ according to the above.
Then, it is easy to check that the degree of each firm or worker is at most three.

\begin{table}[t]
  \caption{The input and output of the gadgets. }\label{table:connection}
  \centering
  \resizebox{1.0\textwidth}{!}{
  \begin{tabular}[htbp]{l|ll}\toprule
                                         & Input                     & Output\\\midrule
    % copy
    $\BRANCHING^{(i,1)}~(i=1,\dots,n)$   & $\bar{x}_i$               & $\hat{y}_{i,1}$, $\hat{\delta}_1^{(i,2)}$\\
    $\BRANCHING^{(i,j)}~(\substack{i=1,\dots,n\\ j=2,\dots,m-2})$   & $\delta_2^{(i,j-1)}$     & $\hat{y}_{i,j}$, $\hat{\delta}_1^{(i,j+1)}$\\
    $\BRANCHING^{(i,m-1)}~(i=1,\dots,n)$ & $\delta_{2}^{(i,m-2)}$    & $\hat{y}_{i,m-1}$, $\hat{y}_{i,m}$\\
    % clause
    $\NOT^{(j,k)}~((j,k)\in N^-)$        & $y_{\lambda(j,1),j}$         & $\ell_{j,k}$\\
    $\OR^{(j,1)}~(j=1,\dots,m)$          & $\ell_{j,1}$, $\ell_{j,2}$& $\hat{\beta}_1^{(j,2)}$\\
    $\OR^{(j,2)}~(j=1,\dots,m)$          & $\beta_2^{(j,1)}$, $\ell_{j,3}$ & $\hat{c}_j$\\
    % formula
    $\AND^{1}$                           & $c_1$, $c_2$              & $\hat{\alpha}_1^{2}$\\
    $\AND^{j}~(j=2,\dots,m-2)$           & $\alpha_3^{j-1}$, $c_{j+1}$& $\hat{\alpha}_1^{j+1}$\\
    $\AND^{m-1}$                         & $\alpha_3^{m-2}$, $c_{m}$  & $\hat{\delta}_1^{1}$\\
    $\BRANCHING^{1}$                     & $\alpha_3^{m-1}$           & $\begin{cases}\hat{\gamma}_1^1&(\text{if }1\in A)\\r_1&(\text{if }1\not\in A)\end{cases}$, $\hat{\delta}_1^{2}$\\
    $\BRANCHING^{i}~(i=2,\dots,n-2)$     & $\delta_2^{i-1}$          & $\begin{cases}\hat{\gamma}_1^i&(\text{if }i\in A)\\r_i&(\text{if }i\not\in A)\end{cases}$, $\hat{\delta}_1^{i+1}$\\
    $\BRANCHING^{n-1}$                   & $\delta_2^{n-2}$          & $\begin{cases}\hat{\gamma}_1^{n-1}&(\text{if }n-1\in A)\\r_{n-1}&(\text{if }n-1\not\in A)\end{cases}$, $\begin{cases}\hat{\gamma}_1^n&(\text{if }n\in J^\exists)\\r_m&(\text{if }n\in J^\forall)\end{cases}$\\
    $\NOT^{i}~(i\in J^\exists)$                  & $\begin{cases}\delta_1^i&(\text{if }i<n)\\\delta_2^{n-1}&(\text{if }i=n)\end{cases}$      & $r_i$\\
    \bottomrule
  \end{tabular}}
\end{table}

Let $\pi$ be a position order that satisfies the conditions of ordering in each gadget
and 
\begin{align*}
  &\{s_1\}\to\dots\to\{s_n\}\to \bigcup_{i=1}^n\bigcup_{j=1}^{m-1}P[\BRANCHING^{(i,j)}]\to \bigcup_{i=1}^n\bigcup_{j=1}^m\hat{y}_{i,j}\\
  &\to \bigcup_{(j,k)\in N^-} P[\NOT^{(j,k)}]
  \to\bigcup_{j=1}^m\{\hat{\ell}_{j,1},\hat{\ell}_{j,2},\hat{\ell}_{j,3}\}
  \to \bigcup_{j=1}^m P[\OR^{(j,1)}]\to \bigcup_{j=1}^m P[\OR^{(j,2)}]\to \{\hat{c}_1,\dots,\hat{c}_m\}\\
  &\to \bigcup_{j=1}^m P[\AND^j]\to \bigcup_{j=1}^{m-1}P[\BRANCHING^{j}]
  \to \bigcup_{i\in J^\exists} P[\NOT^{i}]\to \{r_1,\dots,r_n\}\to \{t_1\}\to\dots\to\{t_n\},
\end{align*}
where $X\to Y$ represents that $\pi^{-1}(x)<\pi^{-1}(y)$ for any $x\in X$ and $y\in Y$.
Let $\sigma$ be the offering order over $E$ that is induced by $\pi$.

Then, $y_{i,j}$ is matched with $\hat{y}_{i,j}$ if and only if $\bar{x}_i$ is matched with $s_i$, i.e., $v_i$ is assigned $\FALSE$ (see Fig.~\ref{fig:copy_variables}).
Also, $\ell_i$ is matched with $\hat{\ell}_i$ if and only if $\ell_i$ corresponds to $\FALSE$
and $c_j$ is matched with $\hat{c}_j$ if and only if $C_j$ is unsatisfied (see Fig.~\ref{fig:clause}).
In addition, $\alpha_3^{m-1}$ is matched with $\hat{\delta}_1^1$ if and only if $\varphi$ is satisfied (see Fig.~\ref{fig:formula}).
Therefore, we conclude that the constructed evaluation phase satisfies the desired condition.
\end{proof}

Thus, we have completed the proof of Theorem~\ref{thm:PSPACE-complete}.

\begin{figure}[htbp]
\begin{minipage}[]{.5\textwidth}
\centering  
\scalebox{.9}{%
\begin{tikzpicture}[xscale=0.9,yscale=0.75,
    p/.style={circle,inner sep=0pt,draw=black,font=\scriptsize,fill=white,minimum size=13pt},
    psmall/.style={circle,inner sep=0pt,draw=black,font=\fontsize{4}{0}\selectfont,fill=white,minimum size=10pt}
  ]

  \node[p] at (5,15)  (xi) {$\bar{x}_i$};
  \node[psmall] at (1,14)  (dh11) {$\hat{\delta}^{(i,1)}_1$};
  \node[psmall] at (5,14)  (d11) {$\delta^{(i,1)}_1$};
  \node[psmall] at (5,13)  (d12) {$\delta^{(i,1)}_2$};
  \node[p] at (1,11)  (yh1) {$\hat{y}_{i,1}$};
  \node[p] at (5,11)  (y1) {$y_{i,1}$};

  \node[psmall] at (1,10)  (dh21) {$\hat{\delta}^{(i,2)}_1$};
  \node[psmall] at (5,10)  (d21) {$\delta^{(i,2)}_1$};
  \node[psmall] at (5,9)  (d22) {$\delta^{(i,2)}_2$};
  \node[p] at (1,7)  (yh2) {$\hat{y}_{i,2}$};
  \node[p] at (5,7)  (y2) {$y_{i,2}$};
  \node at (1,6)  (dh31) {};

  \node[psmall] at (1,4)  (dhm1) {\scalebox{0.7}{$\hat{\delta}^{(i,m-1)}_1$}};
  \node[psmall] at (5,4)  (dm1) {\scalebox{0.7}{$\delta^{(i,m-1)}_1$}};
  \node[psmall] at (5,3)  (dm2) {\scalebox{0.7}{$\delta^{(i,m-1)}_2$}};
  \node[p,font=\fontsize{4}{0}\selectfont] at (1,1)  (yhmm) {\scalebox{0.8}{$\hat{y}_{i,m-1}$}};
  \node[p,font=\fontsize{4}{0}\selectfont] at (5,1)  (ymm) {\scalebox{0.8}{$y_{i,m-1}$}};
  \node[p,font=\fontsize{4.5}{0}\selectfont] at (1,0)  (yhm) {$\hat{y}_{i,m}$};
  \node[p,font=\fontsize{4.5}{0}\selectfont] at (5,0)  (ym) {$y_{i,m}$}; 

  \draw[fill=black!10,fill opacity=.6] (0.5,11.5) rectangle (5.5,14.5);
  \draw[fill=black!10,fill opacity=.6] (0.5,7.5) rectangle (5.5,10.5);
  \draw[fill=black!10,fill opacity=.6] (0.5,1.5) rectangle (5.5,4.5);
  \node at (3,13) {$\BRANCHING^{(i,1)}$};
  \node at (3,9) {$\BRANCHING^{(i,2)}$};
  \node at (3,3) {$\BRANCHING^{(i,m-1)}$};

  \foreach \u/\v in {xi/dh11,d11/yh1,d12/dh21,d21/yh2,d22/dh31,dm1/yhmm,dm2/yhm} \draw[gray] (\u) -- (\v);
  \draw[gray] (dhm1)--(5,5);
  \foreach \u/\v in {y1/yh1,y2/yh2,ymm/yhmm,ym/yhm} \draw (\u) -- (\v);
  \draw (xi)--(4,15.5);
  \draw (y1)--(4,10.8);
  \draw (y2)--(4,6.8);
  \draw (ymm)--(4,0.8);
  \draw (ym)--(4,-0.2);
  \node at (3,6) {\rvdots};
\end{tikzpicture}}
\caption{Copy variables}\label{fig:copy_variables}
\vspace{.5cm}
\scalebox{.9}{%
\begin{tikzpicture}[xscale=0.9,yscale=0.75,
    p/.style={circle,inner sep=0pt,draw=black,font=\scriptsize,fill=white,minimum size=13pt},
    psmall/.style={circle,inner sep=0pt,draw=black,font=\fontsize{4}{0}\selectfont,fill=white,minimum size=10pt}
  ]

  \node[p] at (1,9)  (lh1) {$\hat{\ell}_{j,1}$};
  \node[p] at (1,8)  (lh2) {$\hat{\ell}_{j,2}$};
  \node[p] at (1,7)  (lh3) {$\hat{\ell}_{j,3}$};
  \node[p] at (5,9)  (l1) {$\ell_{j,1}$}; 
  \node[p] at (5,8)  (l2) {$\ell_{j,2}$}; 
  \node[p] at (5,7)  (l3) {$\ell_{j,3}$}; 

  \node[psmall] at (1,5.8)  (a1) {$\hat{\beta}_1^{(j,1)}$};
  \node[psmall] at (1,4.8)  (a2) {$\hat{\beta}_2^{(j,1)}$};
  \node[psmall] at (5,4.2)  (a3) {$\beta_2^{(j,1)}$};

  \node[psmall] at (1,2.8)  (a4) {$\hat{\beta}_1^{(j,2)}$};
  \node[psmall] at (1,1.8)  (a5) {$\hat{\beta}_2^{(j,2)}$};
  \node[psmall] at (5,1.2)  (a6) {$\beta_3^{(j,2)}$};

  \node[p] at (1,0)  (c1) {$\hat{c}_j$};
  \node[p] at (5,0)  (c2) {$c_j$};

  \draw[fill=black!10,fill opacity=.6] (0.5,0.7) rectangle (5.5,3.5);
  \draw[fill=black!10,fill opacity=.6] (0.5,3.7) rectangle (5.5,6.3);
  \node at (3,5) {$\OR^{(j,1)}$};
  \node at (3,2) {$\OR^{(j,2)}$};

  \foreach \u/\v in {lh1/l1,lh2/l2,lh3/l3,l1/a1,l2/a2,l3/a5,a3/a4,a6/c1,c1/c2} \draw[gray] (\u) -- (\v);
  \draw[gray] (lh1)--(2,9.5);
  \draw[gray] (lh2)--(2,8.5);
  \draw[gray] (lh3)--(2,7.5);
\end{tikzpicture}}
\caption{Clause $C_j=\ell_{j,1}\vee \ell_{j,2}\vee \ell_{j,3}$}\label{fig:clause}
\end{minipage}%
\begin{minipage}[]{.5\textwidth}
\centering  
\scalebox{.9}{%
\begin{tikzpicture}[xscale=0.9,yscale=0.75,
    p/.style={circle,inner sep=0pt,draw=black,font=\scriptsize,fill=white,minimum size=13pt},
    psmall/.style={circle,inner sep=0pt,draw=black,font=\fontsize{4}{0}\selectfont,fill=white,minimum size=10pt}
  ]
  \node[p] at (5,26)  (c1) {$c_1$};
  \node[p] at (5,25)  (c2) {$c_2$};
  \node[p] at (5,21)  (c3) {$c_3$};
  \node[p] at (5,16)  (cm) {$c_m$};

  \node[p] at (1,8)  (r1) {$r_{1}$};
  \node[p] at (1,1)  (rnn) {$r_{n-1}$};
  \node[p] at (1,0)  (rn) {$r_n$};

  \node[p] at (5,8)  (z1) {$z_{1}$};
  \node[p] at (5,1)  (znn) {$z_{n-1}$};
  \node[p] at (5,0)  (zn) {$z_n$};

  \node[psmall] at (1,24)  (b11) {$\hat{\alpha}_1^{1}$};
  \node[psmall] at (1,23)  (b12) {$\hat{\alpha}_2^{1}$};
  \node[psmall] at (5,22)  (b13) {$\alpha_3^{1}$};

  \node[psmall] at (1,20)  (b21) {$\hat{\alpha}_1^{2}$};
  \node[psmall] at (1,19)  (b22) {$\hat{\alpha}_2^{2}$};
  \node[psmall] at (5,18)  (b23) {$\alpha_3^{2}$};

  \node[psmall] at (1,15)  (bm1) {$\hat{\alpha}_1^{m-1}$};
  \node[psmall] at (1,14)  (bm2) {$\hat{\alpha}_2^{m-1}$};
  \node[psmall] at (5,13)  (bm3) {$\alpha_3^{m-1}$};
  
  \node[psmall] at (1,11)  (d11) {$\hat{\delta}_1^{1}$};
  \node[psmall] at (5,11)  (d12) {$\delta_1^{1}$};
  \node[psmall] at (5,10)  (d13) {$\delta_2^{1}$};

  \node[psmall] at (1,6)  (dn1) {$\hat{\delta}_1^{n-1}$};
  \node[psmall] at (5,6)  (dn2) {$\delta_1^{n-1}$};
  \node[psmall] at (5,5)  (dn3) {$\delta_2^{n-1}$};

  \node[psmall] at (1,2.8)  (gn1) {\scalebox{0.9}{$\hat{\gamma}_1^{n-1}$}};
  \node[psmall] at (5,2)  (gn2) {$\gamma_4^{n-1}$};

  \draw[fill=black!10,fill opacity=.6] (0.5,21.5) rectangle (5.5,24.5);
  \node at (3,23) {$\AND^{1}$};
  \draw[fill=black!10,fill opacity=.6] (0.5,17.5) rectangle (5.5,20.5);  
  \node at (3,19) {$\AND^{2}$};
  \draw[fill=black!10,fill opacity=.6] (0.5,12.5) rectangle (5.5,15.5);
  \node at (3,14) {$\AND^{m-1}$};

  \draw[fill=black!10,fill opacity=.6] (0.5,8.5) rectangle (5.5,11.5);
  \node at (3,10) {$\BRANCHING^{1}$};
  \draw[fill=black!10,fill opacity=.6] (0.5,3.7) rectangle (5.5,6.5);
  \node at (3,5) {$\BRANCHING^{n-1}$};
  \draw[fill=black!10,fill opacity=.6] (0.5,1.5) rectangle (5.5,3.3);
  \node at (3,2.5) {$\NOT^{n-1}$};

  \foreach \u/\v in {c1/b11,c2/b12,b13/b21,c3/b22,cm/bm2,bm3/d11,d12/r1,dn2/gn1,dn3/rn,gn2/rnn} \draw[gray] (\u) -- (\v);
  \foreach \u/\v in {z1/r1,znn/rnn,zn/rn} \draw (\u) -- (\v);
  \draw (c1)--(4,26.3);
  \draw (c2)--(4,25.3);
  \draw (c3)--(4,21.3);
  \draw (cm)--(4,16.3);
  \draw[gray] (b23)--(3.5,17.25);
  \draw[gray] (bm1)--(2.5,15.75);
  \draw[gray] (d13)--(1,7);
  \draw[gray] (dn1)--(5,7);
  \draw (z1)--(4,7.7);
  \draw (znn)--(4,0.7);
  \draw (zn)--(4,-0.3);
  \node at (3,17) {\rvdots};
  \node at (3,7) {\rvdots};
  
\end{tikzpicture}}
\caption{$(\forall x_1)\dots(\exists x_{n-1})(\forall x_n)\varphi(x_1,\dots,x_n)$}\label{fig:formula}
\end{minipage}
\end{figure}

\clearpage

\section{Leading SPE Matchings to Be Stable}\label{sec:imp_qopt}
As observed in Section~\ref{subsec:bo}, the SPE matching in our sequential matching model may violate the stability condition,
which is one of the desirable properties in two-sided matching that represents a kind of fairness for all firms and workers.
On the other hand, as we have seen in Section \ref{sec:model}, the SPE matching may differ according to offering orders.
In this section, we show that the worker-optimal stable matching can be led to as the outcome of an SPE by choosing an appropriate offering order.
Formally, we prove the following theorem, which is a reformulation of Theorem~\ref{thm:ordering}.

\begin{theorem}\label{thm:qopt}
For any matching instance $I$, there exists an offering order $\sigma\in\Sigma_I$
such that $\spe(I,\sigma)=\qopt(I)$.
\end{theorem}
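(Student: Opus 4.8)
The plan is to exhibit an explicit two‑phase offering order and verify $\spe(I,\sigma)=\qopt(I)$ by induction on $|E|$, using Proposition~\ref{prop:operations_spe} together with the structural lemmas of Section~\ref{sec:DA}. Write $\mu\coloneq\qopt(I)$. \emph{Phase~1} of $\sigma$ consists of all pairs $(p,q)\in E$ with $q\succ_p\mu(p)$ (for an unmatched firm $p$ this is every acceptable pair at $p$), listed in any order consistent with the firms' preferences; these are exactly the offers each firm makes \emph{before} reaching $\mu(p)$ in its list. \emph{Phase~2} consists of the remaining offers and is position‑based: letting $I'$ be obtained from $I$ by deleting all Phase‑1 pairs, order the firms as $p_{(1)},p_{(2)},\dots$ so that at each step the next firm forms a \emph{mutually‑top pair} with its partner (each is the other's most preferred partner) in the instance obtained from $I'$ by contracting the pairs of $p_{(1)},p_{(2)},\dots$ chosen so far. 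Such an order exists: after deleting the Phase‑1 pairs every firm's most preferred partner is its $\qopt$‑partner, and a standard fact about the worker‑proposing deferred acceptance algorithm (the worker‑optimal stable matching of a nonempty instance always matches some worker to her favorite firm) then forces a mutually‑top pair to exist, a property preserved under contracting such a pair. One first checks that this $\sigma$ lies in $\Sigma_I$ and that the construction is \emph{hereditary}: $\sigma\backslash e$ is the order the construction produces for $I\backslash e$ whenever $e$ is a Phase‑1 pair, and $\sigma/e$ is the one it produces for $I/e$ when $e$ is the first Phase‑2 pair (in both cases the reduced Phase‑2 instance $I'$ and the chosen firm order are unchanged).

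For the inductive step, let $e=(p,q)=\sigma(1)$. If Phase~1 is nonempty, then $e$ is a Phase‑1 pair, so $q\succ_p\mu(p)$; since $\mu$ is stable and $\mu(q)\ne p$, this gives $\mu(q)\succ_q p$. By Lemma~\ref{claim:delete_Q2}, $\qopt(I\backslash e)=\mu$, and by the hereditary property $\sigma\backslash e$ is the constructed order for $I\backslash e$; the induction hypothesis yields $\spe(I\backslash e,\sigma\backslash e)=\mu$, so $\spe(I\backslash e,\sigma\backslash e)(q)=\mu(q)\succ_q p$, and Proposition~\ref{prop:operations_spe}(a) gives $\spe(I,\sigma)=\spe(I\backslash e,\sigma\backslash e)=\mu$. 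If Phase~1 is empty, then every firm's top choice is its $\mu$‑partner and $e=(p_{(1)},\mu(p_{(1)}))$ is mutually top, so $p$ is $q$'s favorite; since $(p,q)\notin E\backslash e$ we have $\spe(I\backslash e,\sigma\backslash e)(q)\prec_q p$, and Proposition~\ref{prop:operations_spe}(b) gives $\spe(I,\sigma)=\spe(I/e,\sigma/e)+e$. One checks that $\qopt(I/e)=\mu-e$ (since $\mu-e$ is stable in $I/e$, and any stable $\nu$ in $I/e$ with $\nu\succ_{Q}\mu-e$ would make $\nu+e$ stable in $I$ — because $e$ is mutually top — contradicting $Q$‑optimality of $\mu$), that $I/e$ is again ``top‑everywhere'', and that $\sigma/e$ is the constructed order for $I/e$; the induction hypothesis gives $\spe(I/e,\sigma/e)=\mu-e$, whence $\spe(I,\sigma)=\mu$. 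The base case $E=\emptyset$ is trivial.

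The main obstacle is the Phase‑2 part: pinning down the ``order of being matched in the worker‑oriented deferred acceptance algorithm'' so that it genuinely amounts to iteratively peeling off mutually‑top pairs, and in particular proving the supporting fact that an instance in which every firm's most preferred partner is its worker‑optimal stable partner always contains a mutually‑top pair. This needs a careful analysis of the deferred‑acceptance dynamics — or equivalently of the stable‑matching lattice, noting that the ``top‑everywhere'' condition forces the worker‑optimal and firm‑optimal stable matchings to coincide. Once that lemma and the invariance of ``top‑everywhere'' under contracting a mutually‑top pair are established, the hereditary property and the two inductive cases above are routine.
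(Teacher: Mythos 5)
Your two-phase architecture (first all offers $(p,q)$ with $q\succ_p\qopt(I)(p)$, then a position-based order on the rest) and your handling of Phase~1 via Lemma~\ref{claim:delete_Q2} and Proposition~\ref{prop:operations_spe}(a) match the paper's construction and its Claim~\ref{claim:F}. The gap is in Phase~2: your entire inductive step there rests on the claim that, once every firm's most preferred partner is its $\qopt$-partner, the instance contains a \emph{mutually-top} pair, which you justify by the ``standard fact'' that $\qopt$ of a nonempty instance matches some worker to her favorite firm. That fact is false, and so is the derived claim. Take $P=\{p_1,p_2\}$, $Q=\{q_1,q_2,q_3\}$ with preferences $p_1\colon q_1\succ_{p_1} q_3\succ_{p_1} q_2$; $p_2\colon q_2\succ_{p_2} q_1$; $q_1\colon p_2\succ_{q_1} p_1$; $q_2\colon p_1\succ_{q_2} p_2$; $q_3\colon p_1$. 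The unique stable matching is $\{(p_1,q_1),(p_2,q_2)\}$, so it equals $\qopt(I)$ and $F=\emptyset$, i.e., this already is a ``top-everywhere'' Phase-2 instance; yet $q_1$'s favorite is $p_2$, $q_2$'s favorite is $p_1$, and $q_3$ is unmatched, so no worker gets her top choice and no mutually-top pair exists. (Your observation that ``top-everywhere'' forces $\qopt=\popt$, hence a unique stable matching, is correct but does not rescue the claim, as this example shows.) Hence your construction cannot even begin on this instance, and Proposition~\ref{prop:top-pair} is unavailable.

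This is not a repairable detail but the actual crux of the theorem: in Phase~2 a worker $q$ must be induced to accept an offer from a firm that is \emph{not} her top choice, because rejecting would not in fact lead her to anything better in the SPE. The paper handles this by peeling off, at each step, a pair $(p_i,q_i)\in\qopt(I_{i-1})$ with $p_i\in\alpha(I_{i-1})$ --- a firm that accepts its final partner's proposal \emph{earliest} in the worker-oriented DA --- which is in general not a mutually-top pair (in the example above the first peeled pair is $(p_2,q_2)$, while $q_2$ prefers $p_1$). Establishing $\qopt(I_{i-1}/e_i)=\qopt(I_{i-1})-e_i$ then needs a strengthened blocking lemma (Lemma~\ref{lem:blocking}), and proving that $q_i$ accepts (Claim~\ref{cl:accept}) combines that lemma with Lemma~\ref{lem:top_offer} and the order-theoretic constraint on blocking pairs of SPE matchings (Lemma~\ref{lem:blocking_pair}). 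None of this machinery, or a substitute for it, appears in your argument, so the proof is incomplete at its central step.
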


We give a constructive proof for Theorem~\ref{thm:qopt} as follows.
In Section~\ref{sec:construction}, we construct an offering order $\sigma \in \Sigma_I$ for a given matching instance $I$.
In Section~\ref{sec:lemmas}, we show useful lemmas.
Finally, in Section~\ref{sec:correctness}, we prove $\spe(I, \sigma) = \qopt(I)$ for $\sigma$ constructed in Section~\ref{sec:construction}.

\subsection{Constructing an Offering Order}\label{sec:construction}
Fix an arbitrary matching instance $I = (P, Q, E, {\succ})$.
We shall construct an offering order $\sigma \in \Sigma_I$, for which we prove $\spe(I, \sigma) = \qopt(I)$ in Section~\ref{sec:correctness}.
The construction is two-phased and sketched as follows: we first arrange all the offers preferable to the matched pairs in $\qopt(I)$ for the firms,
and subsequently arrange the rest of the offers according to a position order that is defined by the acceptance order in Algorithm~\ref{alg:QOPT} for computing $\qopt(I)$.

Let $F \coloneq \{ (p, q) \in E \mid q \succ_p \qopt(I)(p) \} \subseteq E \setminus \qopt(I)$.
We arrange the offers in $F$ prior to all the matched pairs in $\qopt(I)$.
Formally, suppose that $\sigma^{-1}(e) < \sigma^{-1}(e^*)$ holds for every $e \in F$ and $e^* \in \qopt(I)$,
i.e., we restrict ourselves to offering orders $\sigma \in \Sigma_I$ such that
$F = \{ \sigma(i) \mid i = 1, 2, \ldots, |F| \}$.
We then denote by $(I', \sigma')$ the subgame obtained by deleting all the pairs in $F$.
Note that we see $\qopt(I') = \qopt(I)$ by applying Properties~\ref{prop:first_edge} and \ref{prop:delete_Q2} repeatedly,
and for every $(p, q) \in \qopt(I')$, the worker $q$ is on the top of $p$'s list in $I'$.

In what follows, let us construct a position order $\pi$ over $P$
that induces an offering order $\sigma' \in \Sigma_{I'}$ over $E \setminus F$.
Recall that, for a matching instance $\hat{I}$, the worker-optimal stable matching $\qopt(\hat{I})$ is computed by Algorithm~\ref{alg:QOPT},
and let $\alpha(\hat{I})$ denote the set of firms $p \in P$ that accept $\qopt(\hat{I})(p)$'s proposal
in the earliest iteration step (in the sense of {\bf while} in Line~\ref{line:2}) among the firms matched in $\qopt(\hat{I})$.

Let $k \coloneq |\qopt(I')|$.
We construct a sequence of matching instances $I_i$ $(i = 0, 1, \dots, k)$ as follows: 
\begin{itemize}
\item
  let $I_0 \coloneq I'$, whose worker-optimal stable matching is the target matching here, and
\item
  for $i = 1, 2, \dots, k$,
  take a matched pair $e_i = (p_i, q_i) \in \qopt(I_{i-1})$ with $p_i \in \alpha(I_{i-1})$ and let $I_i \coloneq I_{i-1} / e_i$, which is, intuitively, a subinstance after the pair $e_i \in \qopt(I_{i-1})$ matched first in Algorithm~\ref{alg:QOPT} leaves the market.
\end{itemize}
Since Lemma~\ref{lem:first_match} (shown in Section~\ref{sec:lemmas}) guarantees $\qopt(I_i) = \qopt(I_{i-1}) - e_i$,
we have $\qopt(I') = \{ e_1, e_2, \dots, e_k \}$,
and $I_k$ has no acceptable pair (otherwise, an acceptable pair in $I_k$ blocks $\qopt(I')$ in $I'$, contradicting the stability).

We now define a position order $\pi$ over $P$ that induces $\sigma'$ as follows:
\begin{itemize}
\item
  $\pi(i) \coloneq p_i$ for $i = 1, 2, \dots, k$, and
\item
  for $i = k + 1, \dots, |P|$, 
  take any firm $p \in P \setminus \{ \pi(1), \pi(2), \dots, \pi(i-1) \}$
  and let $\pi(i) \coloneq p$.
\end{itemize}
Note that $\pi(1), \pi(2), \dots, \pi(k)$ are the firms matched in $\qopt(I')$.

Thus, extending the position-based offering order $\sigma' \in \Sigma_{I'}$ induced by $\pi$
by arranging the offers in $F$ arbitrarily at the beginning, we obtain an offering order $\sigma \in \Sigma_I$.

It is worth remarking that the above construction is done in polynomial time.
In the first phase, computing and arranging the offers in $F$ are done in $\mathrm{O}(|E|)$ time by a single execution of Algorithm~\ref{alg:QOPT}
and by picking the top worker of the list of some firm repeatedly.
In the second phase, $k = \mathrm{O}(|V|)$ executions of Algorithm~\ref{alg:QOPT} are sufficient for constructing the sequence $I_i$ $(i = 0, 1, \dots, k)$
and the position order $\pi$.
Thus, the total computational time is bounded by $\mathrm{O}(|V| \cdot |E|)$.

\subsection{Useful Lemmas}\label{sec:lemmas}
We first show two lemmas on the worker-optimal stable matching.

The first one 
adjusts a well-known property, called the blocking lemma~\cite{GS1985}, to our situation,
and the proof is almost the same.
Recall that, for two matchings $\mu, \mu' \in \mathcal{M}_I$ in a matching instance $I = (P, Q, E, {\succ})$,
we write $\mu \succ_Q \mu'$ if $\mu \neq \mu'$ and $\mu(q) \succeq_q \mu'(q)$ for every worker $q \in Q$.

\begin{lemma}[cf.~blocking lemma {\cite{GS1985}}]\label{lem:blocking}
  Let $I = (P, Q, E, {\succ})$ be a matching instance.
  For a matching $\mu' \in \mathcal{M}_I$ with $\mu' \succ_Q \qopt(I)$,
  define $Q' \coloneq \{ q \in Q \mid \mu'(q) \succ_q \qopt(I)(q) \}$.
  Then, $\mu'$ is blocked by some pair $(p, q) \in E$ with $p \in \mu'(Q')$ and $q \in Q \setminus Q'$
  such that no worker $\tilde{q} \in \Gamma_I(p)$ satisfies both $\qopt(I)(p) \succ_p \tilde{q} \succ_p q$ and $p \succ_{\tilde{q}} \qopt(I)(\tilde{q})$.
\end{lemma}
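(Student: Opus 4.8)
The statement refines the classical \emph{blocking lemma} of Gale and Sotomayor~\cite{GS1985}, and the plan is to reproduce that proof while tracking the additional condition. Throughout, write $P'\coloneq\mu'(Q')$ for the set of firms matched in $\mu'$ to a worker of $Q'$; every worker of $Q'$ is $\mu'$-matched (since $\mu'(q)\succ_q\qopt(I)(q)\succeq_q q$), so $|P'|=|Q'|$. The first, routine, step is the standard observation that for every $p\in P'$ the firm $p$ is $\qopt(I)$-matched and $\qopt(I)(p)\succ_p\mu'(p)$: taking the worker $q$ of $Q'$ with $\mu'(q)=p$, we have $p\succ_q\qopt(I)(q)$, so stability of $\qopt(I)$ forces $\qopt(I)(p)\succ_p q=\mu'(p)$; in particular $\qopt(I)(p)\ne\mu'(p)$.

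I would then split into two cases according to whether $\qopt(I)(P')\subseteq Q'$. If some $p\in P'$ has $q\coloneq\qopt(I)(p)\notin Q'$, then $(p,q)$ is already the desired pair: $q\succ_p\mu'(p)$ by the observation, and since $q\notin Q'$ with $\qopt(I)(q)=p$ we get $p=\qopt(I)(q)\succeq_q\mu'(q)$, which is strict since $q=\qopt(I)(p)\ne\mu'(p)$ forces $p\ne\mu'(q)$; so $(p,q)$ blocks $\mu'$ with $p\in\mu'(Q')$ and $q\notin Q'$, and the extra condition is vacuous as there is no $\tilde q$ with $\qopt(I)(p)\succ_p\tilde q\succ_p q$. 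In the complementary case $\qopt(I)(P')=Q'$ (by cardinality), so $\qopt(I)$ and $\mu'$ both restrict to bijections between $P'$ and $Q'$, with every worker of $Q'$ strictly preferring $\mu'$ and every firm of $P'$ strictly preferring $\qopt(I)$. Here I would run Algorithm~\ref{alg:QOPT} computing $\qopt(I)$ and argue as in~\cite{GS1985}: each $q\in Q'$ is rejected by $\mu'(q)\in P'$ during the run, and taking a suitable \emph{last} such rejection (the analysis being identical to the classical one) yields a blocking pair $(p,q)$ of $\mu'$ with $p\in\mu'(Q')$ and $q\notin Q'$.

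To upgrade this to the extra condition, I would choose such a blocking pair $(p,q)$ with $q$ as high as possible in $\succ_p$ (equivalently, coming from the ``latest'' relevant rejection in the run). Suppose it still violated the condition: there is $\tilde q\in\Gamma_I(p)$ with $\qopt(I)(p)\succ_p\tilde q\succ_p q$ and $p\succ_{\tilde q}\qopt(I)(\tilde q)$. Then $\tilde q\succ_p q\succ_p\mu'(p)$. If $\tilde q\notin Q'$, then $\qopt(I)(\tilde q)\succeq_{\tilde q}\mu'(\tilde q)$, hence $p\succ_{\tilde q}\mu'(\tilde q)$, so $(p,\tilde q)$ is again a blocking pair of the required form but with $\tilde q\succ_p q$, contradicting the choice. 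Thus $\tilde q\in Q'$; and since $p\succ_{\tilde q}\qopt(I)(\tilde q)$, worker $\tilde q$ proposes to $p$ during the run of Algorithm~\ref{alg:QOPT} and is rejected by $p$ (because $p$ ends matched to $\qopt(I)(p)\succ_p\tilde q$). I would obtain a contradiction from this by tracing that rejection through the run: $\tilde q\in Q'$ is also rejected by $\mu'(\tilde q)\in P'$, and using the ``last rejection'' choice (together with the rural hospital theorem, Theorem~\ref{thm:lwt}, to control where the relevant workers end up) one forces a rejection occurring after the one that defined $(p,q)$, a contradiction.

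\medskip I expect the last sub-case---$\tilde q\in Q'$---to be the main obstacle: ruling out the existence of a worker of $Q'$ whom $p$ ranks strictly between $q$ and $\qopt(I)(p)$ and who prefers $p$ to her $\qopt(I)$-partner. Everything else is the standard blocking-lemma argument plus the (vacuous) first case; the work lies in the careful bookkeeping of proposals and rejections in Algorithm~\ref{alg:QOPT}---or, alternatively, in passing to a minimal counterexample and chaining through the $\qopt(I)$-structure on $P'$ and $Q'$---to exclude such an intermediate worker.
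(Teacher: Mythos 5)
Your overall route is the paper's: run the worker-proposing DA (Algorithm~\ref{alg:QOPT}), extract a blocking pair $(p,q)$ with $p\in\mu'(Q')$ and $q\in Q\setminus Q'$ by the classical Gale--Sotomayor argument, and then push $q$ as high as possible in $p$'s list to get the extra clause. A minor remark first: your first case is vacuous under the hypothesis $\mu'\succ_Q\qopt(I)$, because $\mu'$ and $\qopt(I)$ agree on every worker outside $Q'$, which forces $\qopt(I)(p)\in Q'$ for every $p\in\mu'(Q')$; this is why the paper simply writes $P'\coloneq\mu'(Q')=\qopt(I)(Q')$ with no case split. Handling it anyway is harmless.

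The genuine gap is exactly the subcase you yourself flag as ``the main obstacle'': you never rule out an intermediate $\tilde q\in Q'$, and the tools you point to there---Theorem~\ref{thm:lwt} and an unspecified ``later rejection''---are not what closes it; the rural hospital theorem plays no role. What does close it is the same timing fact that produces the blocking pair in the first place, applied once more. Fix $p$ as the firm of $P'$ that accepts its $\qopt(I)$-partner's proposal \emph{last} among $P'$, say at iteration $t$. Every $\tilde q\in Q'$ with $p\succ_{\tilde q}\qopt(I)(\tilde q)$ is rejected by $p$ \emph{strictly before} iteration $t$: after that rejection she eventually proposes to $\qopt(I)(\tilde q)\in P'$ and is accepted, and by the choice of $p$ that acceptance occurs no later than iteration $t$. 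Hence at the moment $p$ rejected $\tilde q$ it was holding a worker it prefers to $\tilde q$, and since the held worker only improves, the worker $q_0$ held by $p$ just before $\qopt(I)(p)$ proposes satisfies $q_0\succ_p\tilde q$. This $q_0$ lies in $Q\setminus Q'$ (a $Q'$-worker preferring $p$ to her $\qopt(I)$-partner would have been rejected strictly before $t$, not held then), prefers $p$ to $\qopt(I)(q_0)=\mu'(q_0)$, and satisfies $\qopt(I)(p)\succ_p q_0\succ_p\mu'(p)$; so $q_0$ is itself an admissible blocking partner with $q_0\succ_p\tilde q\succ_p q$, contradicting your maximal choice of $q$. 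Without this argument the additional clause of Lemma~\ref{lem:blocking}---which is precisely what Lemma~\ref{lem:first_match} and Claim~\ref{cl:accept} consume---remains unproved.
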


\begin{proof}
Let $P' \coloneq \mu'(Q') = \qopt(I)(Q')$. Take a firm $p \in P'$ that accepts $\qopt(I)(p)$'s proposal at last among $P'$ in Algorithm~\ref{alg:QOPT}
(where some ambiguity remains due to the order of choices of $p \in P$ in Line~\ref{line:5},
but it does not matter in the following argument).
Let $q' \coloneq \mu'(p) \in Q'$.
By definition of $p$, the proposal to $p$ from any $q'' \in Q'$ with $p \succ_{q''} \qopt(I)(q'')$ is rejected in some strictly earlier iteration step (in the sense of {\bf while} in Line~\ref{line:2}) than $\qopt(I)(p)$ proposes to $p$,
because $q''$ proposes to $\qopt(I)(q'') \in Q'$ (and the proposal is accepted) after the rejection.
Hence, there exists a worker $q \in Q \setminus Q'$ with $\qopt(I)(p) \succ_{p} q \succ_{p} q' = \mu'(p) \in Q'$ and $p \succ_q \qopt(I)(q) = \mu'(q)$, which implies that $(p, q) \in E$ blocks $\mu'$.
If we choose $q$ as the top in $p$'s list among such workers, then no worker $\tilde{q} \in \Gamma_I(p)$ satisfies both $\qopt(I)(p) \succ_p \tilde{q} \succ_p q$ and $p \succ_{\tilde{q}} \qopt(I)(\tilde{q})$.
\end{proof}

The second one is directly used for our construction of a position order $\pi$ in Section~\ref{sec:construction}.
Recall that, for a matching instance $I$, we denote by $\alpha(I)$ the set of firms $p$ that accept $\qopt(I)(p)$'s proposal
in the earliest iteration step (in the sense of {\bf while} in Line~\ref{line:2}) among the firms matched in $\qopt(I)$.

\begin{lemma}\label{lem:first_match}
  Let $I = (P, Q, E, {\succ})$ be a matching instance in which $q$ is on the top of $p$'s list for every pair $(p, q) \in \qopt(I)$.
  Then, for any pair $e^* = (p^*, q^*) \in \qopt(I)$ with $p^* \in \alpha(I)$, we have $\qopt(I/e^*) = \qopt(I) - e^*$.
\end{lemma}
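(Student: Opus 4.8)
The plan is to argue by contradiction, using Lemma~\ref{claim:contract_Q} for one inclusion and the strengthened blocking lemma (Lemma~\ref{lem:blocking}) together with a close look at the run of Algorithm~\ref{alg:QOPT} on $I$ for the other.

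Write $\mu \coloneqq \qopt(I)$ and $\mu' \coloneqq \qopt(I/e^*) + e^*$. By Lemma~\ref{claim:contract_Q} we have $\mu' \succeq_Q \mu$, so it suffices to exclude $\mu' \succ_Q \mu$. Suppose it holds and set $Q' \coloneqq \{q \in Q : \mu'(q) \succ_q \mu(q)\}$; then $Q' \neq \emptyset$ and $q^* \notin Q'$ (since $\mu'(q^*) = p^* = \mu(q^*)$), hence $p^* \notin \mu'(Q')$. Applying Lemma~\ref{lem:blocking} to $\mu'$ yields a pair $(p,q) \in E$ blocking $\mu'$ with $p \in \mu'(Q')$, $q \in Q \setminus Q'$, and such that no $\tilde q \in \Gamma_I(p)$ satisfies both $\mu(p) \succ_p \tilde q \succ_p q$ and $p \succ_{\tilde q} \mu(\tilde q)$. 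From $p \in \mu'(Q')$ we get $p \neq p^*$, so $\mu'(p) = \qopt(I/e^*)(p)$; and $q \notin Q'$ forces $\mu'(q) = \mu(q)$. If $q \neq q^*$, then $(p,q) \in E/e^*$ and $\mu'(q) = \qopt(I/e^*)(q)$, so the blocking conditions $q \succ_p \mu'(p)$ and $p \succ_q \mu'(q)$ say that $(p,q)$ blocks $\qopt(I/e^*)$, contradicting its stability. Hence $q = q^*$, and the blocking conditions read $q^* \succ_p \mu'(p)$ and $p \succ_{q^*} p^*$. Since $p \succ_{q^*} p^* = \mu(q^*)$ but $(p,q^*)$ cannot block $\mu$, we must have $\mu(p) \succeq_p q^*$, and $\mu(p) \neq q^*$ (else $p = \mu(q^*) = p^*$), so $\mu(p) \succ_p q^*$; write $b \coloneqq \mu(p) = \qopt(I)(p)$, which tops $p$'s list by the hypothesis of the lemma.

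Now I would turn to Algorithm~\ref{alg:QOPT} applied to $I$. For a matched firm $p'$, let $\tau(p')$ be the while-iteration in which $\qopt(I)(p')$ proposes to $p'$; because $\qopt(I)(p')$ tops $p'$'s list, $p'$ accepts that proposal and keeps it to the end, so $\tau(p')$ is the iteration at which $p'$ acquires its final partner, and $\alpha(I)$ is the set of matched firms minimizing $\tau$. Since $p^* \in \alpha(I)$ and $p$ is matched, $\tau(p) \ge \tau(p^*)$. The worker $q^*$ proposes strictly down its list, ending at $p^*$; as $p \succ_{q^*} p^*$, it proposes to $p$ in some iteration $\rho$ strictly before the iteration $\tau(p^*)$ in which it proposes to $p^*$, so $\rho < \tau(p^*) \le \tau(p)$. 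Thus at iteration $\rho$, firm $p$ has not yet received $b$'s proposal. If $p$ rejects $q^*$ at once, then $p$ is holding some worker $w$ with $w \succ_p q^*$ and $w \neq b$; otherwise $p$ accepts $q^*$, and since $b \neq q^*$ is $p$'s final partner, $q^*$ is later displaced from $p$ by some worker $w$ with $w \succ_p q^*$ and $w \neq b$, still in an iteration earlier than $\tau(p^*) \le \tau(p)$ (because $q^*$ goes on to propose to $p^*$ only at iteration $\tau(p^*)$). In both cases $w$ has proposed to $p$ and been rejected by $p$, so $p \succ_w \qopt(I)(w) = \mu(w)$, while $b = \mu(p) \succ_p w \succ_p q^*$ because $b$ tops $p$'s list. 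Hence $w \in \Gamma_I(p)$ realizes exactly the configuration Lemma~\ref{lem:blocking} forbids — a contradiction. Therefore $\mu' = \mu$, i.e., $\qopt(I/e^*) = \qopt(I) - e^*$.

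The main obstacle is the last paragraph: one must argue carefully that $q^*$ reaches $p$ while $b$ is still absent from $p$ (this is precisely where the $\alpha$-minimality of $p^*$ enters, through $\rho < \tau(p^*) \le \tau(p)$), and that in both the immediate-rejection and the delayed-displacement cases the culprit $w$ is distinct from $b$, strictly preferred to $q^*$ by $p$, and strictly worse off than $p$ — which is exactly the triple that the strengthened blocking lemma rules out. Everything else is routine manipulation of the contraction operation and the $Q$-optimality lemmas already established.
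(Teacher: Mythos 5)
Your proof is correct and follows essentially the same route as the paper's: Lemma~\ref{claim:contract_Q} for one direction, then the strengthened blocking lemma combined with a timing analysis of Algorithm~\ref{alg:QOPT} exploiting the top-of-list hypothesis and the $\alpha$-minimality of $p^*$. The only difference is organizational: the paper uses the extra condition of Lemma~\ref{lem:blocking} to deduce that $q$'s rejection by $p$ occurs no earlier than $p$'s acceptance of $\mu(p)$, hence $q\neq q^*$, and then contradicts stability in $I/e^*$; you first dispatch $q\neq q^*$ via that stability and then, when $q=q^*$, exhibit the worker $w$ that violates the extra condition --- the same argument read in the contrapositive direction.
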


\begin{proof}
Let $\mu \coloneq \qopt(I)$ and $\mu' \coloneq \qopt(I/e^*) + e^*$,
and suppose to the contrary that $\mu' \neq \mu$.
Then, $\mu' \succ_Q \mu$ by Property~\ref{prop:contract_Q},
and let $Q' \coloneq \{ q \in Q \mid \mu'(q) \succ_{q} \mu(q) \}$.
Note that $q^* \not\in Q'$ because $\mu(q^*) = \mu'(q^*) = p^*$.
By Lemma~\ref{lem:blocking}, $\mu'$ is blocked by some pair $(p, q) \in E$ with $p \in \mu'(Q')$ and $q \in Q \setminus Q'$
such that no worker $\tilde{q} \in \Gamma_I(p)$ satisfies both $\mu(p) \succ_p \tilde{q} \succ_p q$ and $p \succ_{\tilde{q}} \mu(\tilde{q})$.
Then, $q$'s proposal to $p~(\succ_q \mu'(q) = \mu(q))$ is rejected only when (or after) $p$ accepts $\mu(p)$'s proposal in Algorithm~\ref{alg:QOPT},
and hence $q$ proposes to $\mu(q)$ in a strictly later iteration step than $p$ accepts $\mu(p)$'s proposal.
This implies $q \neq q^*$ because $\mu(q^*) = p^* \in \alpha(I)$.
However, if $q \in Q \setminus (Q' \cup \{q^*\})$, then $(p, q) \in E/e^*$ blocks $\mu' - e^*$ in $I/e^*$, a contradiction.
\end{proof}

We next observe several properties of SPE matchings.
First, we see that a blocking pair for an SPE matching is restricted to a certain type.

\begin{lemma}\label{lem:blocking_pair}
  For a sequential matching game $(I,\sigma)$, let $\mu \coloneq \spe(I, \sigma)$.
  Then, an acceptable pair $(p, q)$ blocks $\mu$ (i.e., $q\succ_p\mu(p)$ and $p\succ_{q}\mu(q)$) only if $\sigma^{-1}((p,q))>\sigma^{-1}((\mu(q),q))$.
\end{lemma}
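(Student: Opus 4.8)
The plan is to argue by contradiction, making fully explicit the style of argument used for SPE matchings elsewhere in the paper. Suppose $e \coloneq (p,q) \in E$ blocks $\mu \coloneq \spe(I,\sigma)$, so that $q \succ_p \mu(p)$ and $p \succ_q \mu(q)$, and yet $\sigma^{-1}(e) < \sigma^{-1}((\mu(q),q))$. Observe first that $e \notin \mu$, since $e \in \mu$ would force $\mu(p) = q$, contradicting $q \succ_p \mu(p)$; hence along the SPE play of $(I,\sigma)$ --- the unique root-to-leaf path in the tree representation obtained by following the optimal (uniquely defined by backward induction) action in every round --- the worker $q$ never becomes matched to $p$. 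Write this play as a sequence of subgames $(I,\sigma) = (I_0,\sigma_0), (I_1,\sigma_1), \dots$, where $(I_{j+1},\sigma_{j+1})$ is obtained from $(I_j,\sigma_j)$ by contracting $\sigma_j(1)$ if the offered worker plays \ACCEPT and by deleting $\sigma_j(1)$ otherwise; since deletion and contraction both preserve the relative $\sigma$-order of the surviving pairs, the offered pairs $\sigma_0(1), \sigma_1(1), \dots$ appear in strictly increasing $\sigma$-order.

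The crux is to show that $e$ is actually offered along the play, i.e., $\sigma_t(1) = e$ for some $t$. The pair $e$ can leave the current instance along the play only in one of three ways: (i) $q$ plays \ACCEPT for some offer $(p',q)$ with $p' \neq p$, so that $e$ is destroyed by the contraction at $q$; (ii) some offer $(p,q')$ with $q' \neq q$ is accepted, so that $e$ is destroyed by the contraction at $p$; or (iii) $e$ itself is offered. In case (i), $q$ becomes and stays matched to $p'$, so $\mu(q) = p'$ and $\sigma^{-1}((\mu(q),q)) = \sigma^{-1}((p',q)) < \sigma^{-1}(e)$, contradicting the standing hypothesis. In case (ii), the offer $(p,q')$ is processed before $e$ while both lie in $\delta_I(p)$, and $\sigma$ is consistent with $\succ_p$, so $q' \succ_p q$; but $q'$ accepted, so $\mu(p) = q' \succ_p q$, contradicting $q \succ_p \mu(p)$. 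Hence only (iii) can occur: there is $t$ with $\sigma_t(1) = e$, and at that moment neither $p$ nor $q$ has yet been matched, so the offer $(p,q)$ is genuinely made.

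Finally, consider the subgame $(I_t,\sigma_t)$ with $\sigma_t(1) = e = (p,q)$. Since the SPE play passes through it and $q$ is still unmatched upon entering it, we have $\spe(I_t,\sigma_t)(q) = \mu(q) \neq p$. In $I_t \backslash e$ there is no pair between $p$ and $q$, so $\spe(I_t\backslash e, \sigma_t\backslash e)(q) \neq p$, and therefore exactly one of the hypotheses of Proposition~\ref{prop:operations_spe}(a) and (b) holds. If (b) held we would get $\spe(I_t,\sigma_t)(q) = p$, a contradiction; so (a) holds, giving $\spe(I_t,\sigma_t) = \spe(I_t\backslash e, \sigma_t\backslash e)$ and $\spe(I_t\backslash e, \sigma_t\backslash e)(q) \succ_q p$, whence $\mu(q) = \spe(I_t,\sigma_t)(q) = \spe(I_t\backslash e, \sigma_t\backslash e)(q) \succ_q p$ --- contradicting the blocking condition $p \succ_q \mu(q)$. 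I expect the only real obstacle to be the bookkeeping of the second paragraph, ensuring that the offer $(p,q)$ is not preempted by an earlier contraction at $p$ or at $q$; it is there that each of the three inequalities $\sigma^{-1}(e) < \sigma^{-1}((\mu(q),q))$, $q \succ_p \mu(p)$, and (to reach the final contradiction) $p \succ_q \mu(q)$ is used.
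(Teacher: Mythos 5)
Your proof is correct and follows essentially the same route as the paper's: both argue by contradiction that the offer $(p,q)$ must actually be reached along the SPE play (since $q$ is not yet matched there by the order assumption and $p$ cannot have been matched earlier to anyone it prefers less than $q$), and then conclude that $q$'s rejection of that offer contradicts optimality because $p \succ_q \mu(q)$. You merely spell out explicitly the case analysis and the appeal to Proposition~\ref{prop:operations_spe} that the paper's terser proof leaves implicit.
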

\begin{proof}
  We prove the proposition by contradiction.
  Suppose that $q\succ_p\mu(p)$, $p\succ_{q}\mu(q)$, and $\sigma^{-1}((p,q))<\sigma^{-1}((\mu(q),q))$.
  Let us consider the subgame $(I',\sigma')$ of $(I, \sigma)$ that is reached in the SPE where $\sigma'(1)=(p,q)$.
  Here, the subgame $(I',\sigma')$ exists because $\sigma^{-1}((p,q))<\sigma^{-1}((\mu(q),q))$ but $p$ and $q$ are not matched by $q\succ_p\mu(p)$.
  The optimal strategy of $q$ at $(I',\sigma')$ is \ACCEPT since $p\succ_{q}\mu(q)$; however, $q$ selects \REJECT in the SPE, which is a contradiction.
\end{proof}

The next one is analogous to Property~\ref{prop:first_edge} on the worker-optimal stable matchings, and
it can be also interpreted as the \emph{first-choice stability}\footnote{A matching $\mu \in \mathcal{M}_I$ is first-choice stable for the firms (resp., the workers) if $\mu$ is not blocked by any pair $(p, q)$ such that $q$ is on the top of $p$'s list (resp., $p$ is on the top of $q$'s list)~\cite{DMS2018}.
An SPE matching may not be first-choice stable for the workers,
e.g., Example~\ref{ex:weak_stability} in Section~\ref{subsec:bo}.} for the firms.

\begin{lemma}\label{lem:top_offer}
  For a sequential matching game $(I,\sigma)$,
  let $e=(p,q)$ be an acceptable pair such that $q$ is on the top of $p$'s list.
  Then, $\spe(I,\sigma)(q) \succeq_q p$.
\end{lemma}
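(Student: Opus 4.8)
The plan is to prove the statement by induction on $\sigma^{-1}(e)$, exploiting the recursive characterization of the SPE given in Proposition~\ref{prop:operations_spe}. In the base case $\sigma^{-1}(e)=1$, the worker $q$ faces the offer $(p,q)$ in the first round: by Proposition~\ref{prop:operations_spe}, she chooses \REJECT exactly when $\spe(I\backslash e,\sigma\backslash e)(q)\succ_q p$ and \ACCEPT otherwise, so in either outcome $\spe(I,\sigma)(q)\succeq_q p$.

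For the inductive step, write $\sigma(1)=e'=(p',q')$ and observe first that $p'\neq p$: otherwise, consistency of $\sigma$ with $\succ_p$ together with $q$ being on top of $p$'s list would force $q'=q$, contradicting $\sigma^{-1}(e)>1$. In particular $e'\neq e$, so $(p,q)$ survives deletion of $e'$ (and also contraction by $e'$ provided $q'\neq q$), and since $p'\neq p$ the list of $p$ is unchanged in $I\backslash e'$ and only shortened in $I/e'$, so $q$ remains on top of $p$'s list in the relevant subgame. We split according to Proposition~\ref{prop:operations_spe}. If $\spe(I\backslash e',\sigma\backslash e')(q')\succ_{q'}p'$, then $\spe(I,\sigma)=\spe(I\backslash e',\sigma\backslash e')$, and applying the induction hypothesis to $(I\backslash e',\sigma\backslash e')$ gives $\spe(I,\sigma)(q)\succeq_q p$. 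If instead $\spe(I\backslash e',\sigma\backslash e')(q')\prec_{q'}p'$, then $\spe(I,\sigma)=\spe(I/e',\sigma/e')+(p',q')$; when $q'\neq q$, the induction hypothesis applied to $(I/e',\sigma/e')$ yields $\spe(I,\sigma)(q)=\spe(I/e',\sigma/e')(q)\succeq_q p$.

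The one remaining, slightly delicate case is $q'=q$ within this second branch. Here the induction hypothesis applied to $(I\backslash e',\sigma\backslash e')$ still gives $\spe(I\backslash e',\sigma\backslash e')(q)\succeq_q p$, while the branch condition says $\spe(I\backslash e',\sigma\backslash e')(q)\prec_q p'$; chaining these, $p'\succ_q \spe(I\backslash e',\sigma\backslash e')(q)\succeq_q p$, so $p'\succ_q p$. Since $\spe(I,\sigma)=\spe(I/e',\sigma/e')+(p',q)$ assigns $q$ to $p'$, we conclude $\spe(I,\sigma)(q)=p'\succ_q p$, which finishes the induction. The argument is essentially routine once the recursion of Proposition~\ref{prop:operations_spe} is available; the only point requiring care is ensuring that the property ``$q$ is on top of $p$'s list'' is preserved along the subgame into which the SPE actually descends, which is exactly why we isolate $p'\neq p$ at the outset and treat the collision $q'=q$ separately.
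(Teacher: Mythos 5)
Your proof is correct and follows essentially the same route as the paper's: induction on $\sigma^{-1}(e)$, the observation that $p' \neq p$, the case split via Proposition~\ref{prop:operations_spe}, and the separate treatment of the collision $q' = q$ where the chain $p' \succ_q \spe(I\backslash e',\sigma\backslash e')(q) \succeq_q p$ forces acceptance. Your explicit remark that $q$ stays on top of $p$'s list in the relevant subgame (and your use of $\succeq_q$ rather than $\succ_q$ in the final chain) only makes the argument slightly more careful than the paper's version.
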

\begin{proof}
  We prove this by induction on $\sigma^{-1}(e)$.
  For the base case $\sigma^{-1}(e)=1$, by Proposition~\ref{prop:operations_spe},
  the worker $q$ selects \REJECT if  $\spe(I-e, \sigma -e) \succ_q p$, and \ACCEPT otherwise.
  Hence, $\spe(I,\sigma)(q) \succeq_q p$.
  
  Suppose that $\sigma^{-1}(e)>1$.
  Let $\sigma(1)=e'=(p',q')$. Note that $p'\ne p$ since $q$ is on the top of $p$'s list.
  By Proposition~\ref{prop:operations_spe} (i), if $\spe(I-e',\sigma-e')(q')\succ_{q'} p'$,
  then we derive from induction hypothesis
  \[\spe(I,\sigma)(q) = \spe(I-e',\sigma-e')(q) \succeq_q p.\]
  Similarly, by Proposition~\ref{prop:operations_spe} (ii), if $\spe(I-e',\sigma-e')(q') \prec_{q'} p'$ and $q' \neq q$,
  then we derive
  \[\spe(I,\sigma)(q) = \spe(I/e',\sigma/e')(q) \succeq_q p.\]
  Otherwise, we have $q' = q$ and $p' \succ_q \spe(I-e',\sigma-e')(q) \succ_q p$ by induction hypothesis,
  and hence $q$ selects \ACCEPT, i.e., $\spe(I, \sigma)(q) = p' \succ_q p$.
  Thus we are done.
\end{proof}

We can similarly show the following proposition, which states that top-top pairs are matched in the SPE matching and can be removed anytime.
This implies that, if a matching instance satisfies the \emph{Eeckhout condition}\footnotemark{},
the SPE matching coincides with a unique stable matching regardless of the offering order.
\footnotetext{
  The Eeckhout condition is a sufficient condition for the existence of a unique stable matching.
  We say that a matching instance $I=(P,Q,E,\succ)$ satisfies the Eeckhout condition if it is possible to rearrange firms and workers so that (i) for any firm $p_k\in P$, $q_i\succ_{p_k}q_j$ for all $j>i$, and (ii) for any worker $q_k\in Q$, $p_i\succ_{q_k}p_j$ for all $j>i$~\cite{Eeckhout2000}.
}

\begin{proposition}\label{prop:top-pair}
  For a sequential matching game $(I,\sigma)$, let $e=(p,q)$ be an acceptable pair such that each of them is the most preferred partner of the other.
  Then, $\spe(I,\sigma)=\spe(I/e,\sigma/e)+(p,q)$.
\end{proposition}
\begin{proof}
  We prove this by induction on $\sigma^{-1}(e)$.
  The base case $\sigma^{-1}(e) = 1$ is trivial.
  Suppose that $\sigma^{-1}(e)>1$,
  and let $\sigma(1)=e'=(p',q')$. Note that $p'\ne p$ since $q$ is most preferred by $p$.
  By Proposition~\ref{prop:operations_spe} (i), if $\spe(I-e',\sigma-e')(q')\succ_{q'} p'$, then we have
  \begin{align*}
  \spe(I,\sigma) &= \spe(I-e',\sigma-e')\\
  &=\spe((I-e')/e,(\sigma-e')/e)+(p,q) & (\text{induction hypothesis})\\
  &=\spe((I / e) -e', (\sigma / e) -e')+(p,q) & (\text{Observation~\ref{obs:commutative}})\\
  &=\spe(I/e,\sigma/e)+(p,q),
  \end{align*}
  where the last equality follows from $\spe((I / e) -e', (\sigma / e) -e')(q') = \spe(I-e',\sigma-e')(q')\succ_{q'} p'$,
  which is implied by the second equality.
  Similarly, by Proposition~\ref{prop:operations_spe} (ii),
  we have $\spe(I,\sigma)=\spe(I/e',\sigma/e')+(p',q')=\spe((I/e')/e, (\sigma/e')/e)+(p',q')+(p,q)=\spe(I/e,\sigma/e)+(p,q)$
  if $\spe(I-e',\sigma-e')(q')\prec_{q'} p'$.
  Hence, the proposition holds.
\end{proof}

\subsection{Correctness}\label{sec:correctness}
We here prove that $\spe(I, \sigma) = \qopt(I)$ for $\sigma \in \Sigma_I$ constructed in Section~\ref{sec:construction}.
Recall that the construction is two-phased:
we first arrange the offers in $F = \{ (p, q) \in E \mid q \succ_p \qopt(I)(p) \} \subseteq E \setminus \qopt(I)$ prior to the others,
and the order $\sigma' \in \Sigma_{I'}$ on the rest offers is induced by a position order $\pi$ over $P$
in which the firms matched in $\qopt(I')$ are arranged prior to the unmatched firms.

The following claim implies that all the offers in $F$ are supposed to be rejected
as long as $\sigma'$ is designed appropriately, where recall that $\qopt(I') = \qopt(I)$ follows from Properties~\ref{prop:first_edge} and \ref{prop:delete_Q2}.

\begin{claim}\label{claim:F}
  If $\spe(I', \sigma') = \qopt(I')$, then $\spe(I, \sigma) = \qopt(I)$.
\end{claim}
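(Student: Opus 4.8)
The plan is to show that, in the unique SPE of $(I,\sigma)$, every offer belonging to $F$ is rejected; once all the $F$-offers have been (virtually) rejected, the play reaches exactly the subgame $(I',\sigma')$, and then $\spe(I,\sigma)=\spe(I',\sigma')=\qopt(I')=\qopt(I)$ by the hypothesis together with the already-noted identity $\qopt(I')=\qopt(I)$ (which follows from Lemmas~\ref{claim:first_edge} and \ref{claim:delete_Q2}).

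Concretely, I would write $F=\{f_1,f_2,\dots,f_{|F|}\}$ with $\sigma(i)=f_i$, and for $0\le j\le |F|$ set $(I_j,\sigma_j)\coloneq(I\backslash f_1\backslash\cdots\backslash f_j,\ \sigma\backslash f_1\backslash\cdots\backslash f_j)$, so that $(I_0,\sigma_0)=(I,\sigma)$ and $(I_{|F|},\sigma_{|F|})=(I',\sigma')$. The plan is to prove by downward induction on $j$ that $\spe(I_j,\sigma_j)=\qopt(I)$. The base case $j=|F|$ is precisely the hypothesis together with $\qopt(I')=\qopt(I)$. For the inductive step, since $f_1,\dots,f_{j-1}$ are the $\sigma$-smallest $j-1$ offers and $f_j$ is the next one, we get $\sigma_{j-1}(1)=f_j\eqqcolon(p,q)$; and since $f_j\in F$, i.e.\ $q\succ_p\qopt(I)(p)$, in particular $\qopt(I)(p)\ne q$, hence $\qopt(I)(q)\ne p$. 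Moreover $p\succ_q\qopt(I)(q)$ cannot hold, for otherwise $(p,q)$ would block $\qopt(I)$, contradicting its stability; so $\qopt(I)(q)\succ_q p$. Applying the inductive hypothesis to $(I_j,\sigma_j)=(I_{j-1}\backslash f_j,\ \sigma_{j-1}\backslash f_j)$ gives $\spe(I_{j-1}\backslash f_j,\ \sigma_{j-1}\backslash f_j)(q)=\qopt(I)(q)\succ_q p$, so Proposition~\ref{prop:operations_spe}(a) yields $\spe(I_{j-1},\sigma_{j-1})=\spe(I_{j-1}\backslash f_j,\ \sigma_{j-1}\backslash f_j)=\qopt(I)$. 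Taking $j=0$ gives $\spe(I,\sigma)=\qopt(I)$.

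There is essentially no serious obstacle here; the only points requiring care are bookkeeping ones. One must check that after deleting $f_1,\dots,f_{j-1}$ the first remaining offer of $\sigma_{j-1}$ is indeed $f_j$, which is immediate because the deletion operation on orders preserves relative order, so $\sigma_{j-1}$ is simply the restriction of $\sigma$ to $E\setminus\{f_1,\dots,f_{j-1}\}$. One also uses repeatedly that a pair $(p,q)\in F$, i.e.\ one that is firm-improving for $p$ relative to $\qopt(I)$, automatically satisfies $\qopt(I)(q)\succ_q p$ (otherwise it would block $\qopt(I)$) --- the same observation that underlies $\qopt(I')=\qopt(I)$. Everything else is a direct invocation of Proposition~\ref{prop:operations_spe}(a).
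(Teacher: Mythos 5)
Your proof is correct and follows essentially the same route as the paper's: an induction that peels off the $F$-offers one at a time, showing each is rejected because the worker's SPE outcome after rejection is $\qopt(I)(q) \succ_q p$, and then invoking Proposition~\ref{prop:operations_spe}(a). The only cosmetic difference is that you derive $\qopt(I)(q)\succ_q p$ directly from the stability of $\qopt(I)$ (which is clean and valid), whereas the paper routes this through Lemma~\ref{claim:first_edge}.
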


\begin{proof}
We show the claim by induction on $|F|$.
When $F = \emptyset$, it trivially holds.
Then, it suffices to show that, for the first offer $\sigma(1) = e = (p, q) \in F$, the worker $q$ selects \REJECT in the SPE when $F \neq \emptyset$.
By induction hypothesis and Property~\ref{prop:first_edge},
we see $\spe(I -e, \sigma -e)(q) = \qopt(I)(q) \succ_q p$,
which concludes that $q$ indeed selects \REJECT in the SPE\@.
\end{proof}

We then prove that $\sigma'$ is indeed designed appropriately,
which concludes $\spe(I, \sigma) = \qopt(I)$ by combining with Claim~\ref{claim:F}.

\begin{claim}\label{claim:spe=qopt}
  $\spe(I', \sigma') = \qopt(I')$.
\end{claim}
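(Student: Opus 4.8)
The plan is to prove $\spe(I',\sigma')=\qopt(I')$ by induction on the number of acceptable pairs of $I'$ (equivalently on $k=|\qopt(I')|$, noting that every firm $p$ unmatched in $\qopt(I')$ has an empty list in $I'$, since each pair incident to such a $p$ lies in $F$ and was deleted). For the base case $\qopt(I')=\emptyset$, the instance $I'$ has no acceptable pair at all (any such pair would block the empty matching, contradicting stability of $\qopt(I')$), so $\spe(I',\sigma')=\emptyset=\qopt(I')$.

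For the inductive step, recall that $p_1=\pi(1)$ and that $q_1=\qopt(I')(p_1)$ is the top of $p_1$'s list in $I'$, so $\sigma'(1)=e_1=(p_1,q_1)$. The heart of the argument is to show that $q_1$ chooses \ACCEPT in the SPE of $(I',\sigma')$; once this is done, Proposition~\ref{prop:operations_spe}(b) gives $\spe(I',\sigma')=\spe(I'/e_1,\sigma'/e_1)+e_1=\spe(I_1,\sigma_1)+e_1$, where $(I_1,\sigma_1)$ is exactly the instance and offering order our construction produces when started from $I_1$: by Lemma~\ref{lem:first_match} we have $\qopt(I_1)=\qopt(I')-e_1=\{e_2,\dots,e_k\}$; contracting $e_1$ keeps each $q_j$ $(j\ge 2)$ on the top of $p_j$'s list (it only removes workers ranked below $q_j$); the corresponding ``$F$'' for $I_1$ is empty; and $p_2\in\alpha(I_1)$ holds by construction, so $(p_2,p_3,\dots,p_k,\dots)$ is a valid position order for $I_1$. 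Hence the induction hypothesis yields $\spe(I_1,\sigma_1)=\qopt(I_1)=\qopt(I')-e_1$, and therefore $\spe(I',\sigma')=(\qopt(I')-e_1)+e_1=\qopt(I')$.

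It remains to show that $q_1$ accepts $p_1$'s first offer. By Proposition~\ref{prop:operations_spe} this is equivalent to $\spe(I'\backslash e_1,\sigma'\backslash e_1)(q_1)\prec_{q_1}p_1$, i.e.\ that after rejecting, $q_1$ ends up unmatched or matched to a firm strictly worse than $p_1$ for her. Two ingredients combine here: first, by Lemma~\ref{claim:delete_Q1}, $\qopt(I'\backslash e_1)(q_1)\prec_{q_1}p_1$, so $q_1$ is already worse off than $p_1$ in the worker-optimal stable matching of the deleted instance; second, one argues that $q_1$ cannot be matched in $\spe(I'\backslash e_1,\sigma'\backslash e_1)$ to any firm $p'$ she prefers to $p_1$. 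Indeed, such a $p'$ satisfies $p'\ne p_1$ and, since $(p',q_1)\notin F$ while $q_1\ne\qopt(I')(p')$, the worker $q_1$ is ranked strictly below the top $\qopt(I')(p')$ of $p'$'s list; hence in the position-based game $p'$ can reach $q_1$ only after being rejected by its top worker $\qopt(I')(p')$, and by Lemma~\ref{lem:top_offer} that top worker, if she rejects $p'$, must end up strictly above $p'$, i.e.\ strictly above her own $\qopt(I')$-partner. Iterating this observation produces a chain of workers, each ranked at the top of the list of her $\qopt(I')$-partner, each doing strictly better than in $\qopt(I')$; using that $p_1$ was chosen from $\alpha(I')$ (so that $q_1=\qopt(I')(p_1)$ is settled in the earliest DA iteration) rules out that this chain can be sustained, giving the required contradiction.

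The main obstacle is precisely this last step: the rejection subgame $(I'\backslash e_1,\sigma'\backslash e_1)$ is \emph{not} of the special form our construction produces (after deleting $e_1$, the top of $p_1$'s list need no longer be its $\qopt$-partner), so the induction hypothesis does not apply to it directly, and its SPE matching need not even be stable. The $\alpha(I')$-membership of $p_1$, Lemma~\ref{lem:top_offer}, and the position-based structure of $\sigma'$ (which forces the firms to exhaust their offers one at a time) are what make the rejection genuinely disadvantageous for $q_1$; quantifying this carefully, most likely by invoking the strengthened blocking lemma (Lemma~\ref{lem:blocking}) to track which workers can still be reached by which firms, is the technical core of the proof. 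As a sanity check that one inequality is automatic, note that Lemma~\ref{lem:top_offer} applied to every pair $(p,q)\in\qopt(I')$ (where $q$ tops $p$'s list in $I'$) already yields $\spe(I',\sigma')\succeq_Q\qopt(I')$, so the real content of the claim is the reverse inequality: no worker can do strictly better in the SPE than in $\qopt(I')$.
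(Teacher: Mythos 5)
Your overall skeleton matches the paper's: you peel off the pairs $e_1,\dots,e_k$ one at a time, reduce everything to showing that each $q_i$ accepts its first offer $(p_i,q_i)$, handle the \ACCEPT branch by Proposition~\ref{prop:operations_spe}(b) together with Lemma~\ref{lem:first_match} and the induction hypothesis, and correctly note (via Lemma~\ref{lem:top_offer}) that the inequality $\spe(I',\sigma')\succeq_Q\qopt(I')$ is automatic, so the whole content is that no worker does strictly better. This is exactly the paper's reduction to its Claim~\ref{cl:accept}.

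However, there is a genuine gap: the acceptance step itself --- that $\spe(I'\backslash e_1,\sigma'\backslash e_1)(q_1)\prec_{q_1}p_1$ --- is only sketched, and you explicitly defer its ``technical core.'' The chain argument you describe (each firm $p'$ reaches $q_1$ only after its top worker does strictly better than her $\qopt(I')$-partner, and so on) establishes only that the set $Q'$ of improved workers is nonempty and is matched by $\mu'$ within $\qopt(I')(Q')$; it does not by itself produce a contradiction, and the assertion that $p_1\in\alpha(I')$ ``rules out that this chain can be sustained'' is not justified. The paper closes this gap in three concrete steps that your sketch does not assemble: (1) the strengthened blocking lemma (Lemma~\ref{lem:blocking}) extracts a specific blocking pair $(p,q)$ with $p\in\mu'(Q')$, $q\notin Q'$, and the extra condition that no worker $\tilde q$ sits between $\qopt(I')(p)$ and $q$ in $p$'s list with $p\succ_{\tilde q}\qopt(I')(\tilde q)$; (2) Lemma~\ref{lem:blocking_pair} converts this blocking pair into the ordering constraint $\pi^{-1}(p)>\pi^{-1}(\mu'(q))$; (3) the extra condition from (1), together with the fact that $\pi$ was built from the DA acceptance times via $\alpha(\cdot)$, shows that in Algorithm~\ref{alg:QOPT} the firm $p$ accepts its partner's proposal strictly before $\mu'(q)$ does, i.e.\ $\pi^{-1}(p)<\pi^{-1}(\mu'(q))$, a contradiction. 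Without step (1)'s refined choice of blocking pair and the DA-timing analysis in step (3), the argument does not go through, so the proposal as written does not constitute a proof.
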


\begin{proof}
Recall that we construct a sequence of matching instances $I_i$ $(i = 0, 1, \dots, k)$,
where $k = |\qopt(I')|$, such that
\begin{itemize}
\setlength{\itemsep}{0mm}
\item
  $I_0 = I'$,
\item
  $I_i = I_{i-1} / e_i$ for some $e_i = (p_i, q_i) \in \qopt(I_{i-1})$ with $\pi(i) = p_i \in \alpha(I_{i-1})$ for $i = 1, 2, \dots, k$,
\item
  $\qopt(I_i) = \qopt(I_{i-1}) - e_i$ for $i = 1, 2, \dots, k$ (by Lemma~\ref{lem:first_match}), and
\item
  $I_k$ has no acceptable pair.
\end{itemize}

We show $\spe(I_i, \sigma_i) = \qopt(I_i)$ by induction on $i$ (in descending order),
where $\sigma_i \in \Sigma_{I_i}$ denotes the restriction of $\sigma$ to $I_i$.
The base case $\spe(I_k, \sigma_k) = \emptyset = \qopt(I_k)$ is trivial,
and the case of $i = 0$ is exactly what we want to prove.
As each step of the induction, it suffices to show the next claim,
which says that $q_i$ select \ACCEPT for the first offer $\sigma_{i-1}(1) = e_i = (p_i, q_i)$ in the SPE of $(I_{i-1}, \sigma_{i-1})$.
\end{proof}

\begin{claim}\label{cl:accept}
$p_i = \qopt(I_{i-1})(q_i) \succ_{q_i} \spe(I_{i-1} -e_i, \sigma_{i-1} -e_i)(q_i)$ for every $i = 1, 2, \dots, k$.
\end{claim}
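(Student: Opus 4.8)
The plan is to prove Claim~\ref{cl:accept} by establishing its nontrivial half, namely
$\spe(I_{i-1}\backslash e_i,\sigma_{i-1}\backslash e_i)(q_i)\prec_{q_i}p_i$;
the equality $p_i=\qopt(I_{i-1})(q_i)$ is immediate from $e_i=(p_i,q_i)\in\qopt(I_{i-1})$, and via Proposition~\ref{prop:operations_spe}(b) the inequality is exactly the statement that $q_i$ selects \ACCEPT for the first offer $\sigma_{i-1}(1)=e_i$ in the SPE of $(I_{i-1},\sigma_{i-1})$, which is what the induction in the proof of Claim~\ref{claim:spe=qopt} needs. Write $\hat{I}\coloneq I_{i-1}\backslash e_i$ and $\hat{\sigma}\coloneq\sigma_{i-1}\backslash e_i$. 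The first step is to collect the structural facts carried along the construction: by Lemma~\ref{lem:first_match} (used repeatedly) we have $\qopt(I_i)=\qopt(I_{i-1})-e_i$ for every $i$, and since contraction preserves the top of each surviving list, every pair $(p,q)\in\qopt(I_{i-1})$ has $q$ on the top of $p$'s list in $I_{i-1}$; moreover, by stability of $\qopt(I_{i-1})$ together with $\qopt(I_{i-1})(q_i)=p_i$, every firm $p$ with $p\succ_{q_i}p_i$ is matched in $\qopt(I_{i-1})$ to a worker strictly preferred to $q_i$ (otherwise $(p,q_i)$ would block).

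The main step is to show $\spe(\hat{I},\hat{\sigma})=\qopt(\hat{I})$; granting this, Lemma~\ref{claim:delete_Q1} applied to $e_i\in\qopt(I_{i-1})$ gives $\spe(\hat{I},\hat{\sigma})(q_i)=\qopt(\hat{I})(q_i)\prec_{q_i}p_i$, which finishes the claim. To prove $\spe(\hat{I},\hat{\sigma})=\qopt(\hat{I})$ I would induct on subgames along $\hat{\sigma}$, in the style of Section~\ref{sec:restrict}: at each round Proposition~\ref{prop:operations_spe} and Lemmas~\ref{claim:first_edge}, \ref{claim:delete_Q2}, \ref{claim:contract_Q} reduce the accept/reject decision of the offered worker to a comparison with the worker-optimal stable matching of the current subgame, while Lemma~\ref{lem:top_offer} guarantees that a worker on the top of the offering firm's list never ends up below that firm. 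The order $\hat{\sigma}$ is still position-based, induced by $\pi$ restricted to the firms of $\hat{I}$, which are visited in the order $p_i,p_{i+1},\dots,p_k,\pi(k+1),\dots$; the crucial observation is that although $\hat{I}$ need not retain the exact two-phase shape of Section~\ref{sec:construction} (deleting $e_i$ frees both $p_i$ and $q_i$ and can trigger a rotation that scatters the ``$\qopt$-partner-on-top'' rigidity over several firms), the firm $p_i$ is processed first, so every offer that could play the role of an ``$F$-phase'' offer for $\hat{I}$ falls into a position where the relevant worker is already matched (and the offer is skipped) or is rejected. I would make this precise by maintaining an invariant that tracks, for each firm in $\pi$-order, which of its workers are still unmatched when its block begins.

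The hard part is precisely this invariant: controlling how the rotation produced by the deletion of $e_i$ interacts with the layering of $\pi$ (the order in which firms become permanently matched in $Q$-oriented DA). One must argue that each firm $p$ that $q_i$ strictly prefers to $p_i$ is already matched to a worker it prefers to $q_i$ by the time $q_i$ could receive its offer, so that no such offer is ever extended to $q_i$, and that $q_i$'s remaining options in $\hat{I}$ lie strictly below $p_i$; the subtlety is that this must be argued against the backward-induction semantics of the SPE rather than the forward run of DA. A convenient device is Lemma~\ref{lem:blocking_pair}, which restricts the blocking pairs of $\spe(\hat{I},\hat{\sigma})$ to those whose offer is scheduled after the worker's partner's offer; combined with the position-based structure of $\hat{\sigma}$, this should exclude any rotation that would assign $q_i$ a firm above $p_i$. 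If a direct proof of $\spe(\hat{I},\hat{\sigma})=\qopt(\hat{I})$ turns out to be too cumbersome, an alternative is to run the argument on $q_i$ alone by contradiction: assuming $\spe(\hat{I},\hat{\sigma})(q_i)\succ_{q_i}p_i$, set $q^{(0)}\coloneq q_i$, $p^{(m+1)}\coloneq\spe(\hat{I},\hat{\sigma})(q^{(m)})$, and $q^{(m+1)}\coloneq\qopt(I_{i-1})(p^{(m+1)})$, deduce from Lemma~\ref{lem:top_offer} and stability an infinite sequence with $p^{(m+1)}\succ_{q^{(m)}}p^{(m)}$ and $q^{(m+1)}\succ_{p^{(m+1)}}q^{(m)}$, note that by finiteness it closes into a rotation, and derive a contradiction by showing that the layering order $\pi$ forbids that rotation from being the SPE outcome.
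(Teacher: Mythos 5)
Your reduction of Claim~\ref{cl:accept} to the single identity $\spe(\hat{I},\hat{\sigma})=\qopt(\hat{I})$, where $\hat{I}=I_{i-1}\backslash e_i$ and $\hat{\sigma}=\sigma_{i-1}\backslash e_i$, is where the argument has a genuine gap. That identity is strictly stronger than what the claim asserts, it is nowhere established in your proposal (you explicitly defer the ``hard part'' to an invariant that is never stated), and there is little reason to expect it: deleting $e_i$ destroys exactly the structural alignment that the two-phase construction of Section~\ref{sec:construction} was designed to provide. In $\hat{I}$ a rotation is exposed, so there may be pairs $(p,q)$ with $q\succ_p\qopt(\hat{I})(p)$ that are \emph{not} scheduled before the $\qopt(\hat{I})$-pairs in $\hat{\sigma}$, and the position order $p_i,p_{i+1},\dots$ need no longer reflect the DA acceptance order for $\hat{I}$; Theorem~\ref{thm:qopt} gives no control over such off-path subgames. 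Note also that Lemma~\ref{lem:top_offer} already forces $\spe(\hat{I},\hat{\sigma})(q_j)\succeq_{q_j}\qopt(I_{i-1})(q_j)$ for every $j>i$, while Lemma~\ref{claim:delete_Q1} pushes $\qopt(\hat{I})(q_i)$ strictly below $p_i$; the SPE of the subgame thus sits, worker by worker, in a range that need not coincide with $\qopt(\hat{I})$. Your fallback rotation argument is much closer to what the paper actually does, but it stops at precisely the step that carries all the content: after constructing the cyclic sequence you assert that ``the layering order $\pi$ forbids that rotation'' without saying how.

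The paper closes this step with three concrete ingredients that your sketch lacks. First, under the contradiction hypothesis $\mu'(q_i)\succ_{q_i}p_i$ (with $\mu'\coloneq\spe(\hat{I},\hat{\sigma})$), Lemma~\ref{lem:top_offer} applied to the pairs $(p_j,q_j)$ with $j>i$ (whose top-of-list property survives the deletion of $e_i$) yields $\mu'\succ_{Q_i}\qopt(I_{i-1})$ on $Q_i=Q\setminus\{q_1,\dots,q_{i-1}\}$. Second, the \emph{strengthened} blocking lemma (Lemma~\ref{lem:blocking}) produces not just some blocking pair but one $(p,q)$ with $p\in\mu'(Q_i')$, $q\in Q_i\setminus Q_i'$ (hence $\mu'(q)=\qopt(I_{i-1})(q)$), and no intermediate worker $\tilde{q}$ satisfying $\qopt(I_{i-1})(p)\succ_p\tilde{q}\succ_p q$ and $p\succ_{\tilde{q}}\qopt(I_{i-1})(\tilde{q})$; this refinement is what makes the DA timing of $(p,q)$ analyzable. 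Third, Lemma~\ref{lem:blocking_pair} forces $\sigma_{i-1}^{-1}((p,q))>\sigma_{i-1}^{-1}((\mu'(q),q))$, i.e.\ $\pi^{-1}(p)>\pi^{-1}(\mu'(q))$, whereas tracing Algorithm~\ref{alg:QOPT} shows that $q$ is rejected by $p$ no earlier than $p$ accepts $\qopt(I_{i-1})(p)$ and only afterwards proposes to $\mu'(q)=\qopt(I_{i-1})(q)$, so that $\pi^{-1}(p)<\pi^{-1}(\mu'(q))$ --- the contradiction. Without the refined choice of the blocking pair and this timing comparison against $\pi$, neither of your two routes closes.
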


\begin{proof}
Let $\mu' \coloneq \spe(I_{i-1} -e_i, \sigma_{i-1} -e_i)$,
and suppose to the contrary that $\mu'(q_i) \succ_{q_i} p_i = \qopt(I_{i-1})(q_i)$.
For each worker $q_j$ $(j > i)$ (who is matched in $\qopt(I_{i-1}) - e_i$),
Lemma~\ref{lem:top_offer} implies that $\mu'(q_j) \succeq_{q_j} \qopt(I_{i-1})(q_j)$ because $q_j$ is on the top of $p_j$'s list in $I_{i-1} -e_i$.
In addition, for any other worker $q \in Q \setminus \{ q_1, q_2, \dots, q_k \}$,
obviously $\mu'(q) \succeq_q q = \qopt(I_{i-1})(q)$.
Hence, we have $\mu' \succ_{Q_i} \qopt(I_{i-1})$,
where $Q_i \coloneq Q \setminus \{ q_1, q_2, \dots, q_{i-1} \}$.

Let $Q'_i \coloneq \{ q \in Q_i \mid \mu'(q) \succ_{q} \qopt(I_{i-1})(q) \}$. Note that $q_i \in Q'_i$.
By Lemma~\ref{lem:blocking}, $\mu'$ is blocked by some pair $(p, q) \in E$ with $p \in \mu'(Q'_i)$ and $q \in Q_i \setminus Q'_i$
such that no worker $\tilde{q} \in \Gamma_{I_i}(p)$ satisfies both $\qopt(I_{i-1})(p) \succ_p \tilde{q} \succ_p q$ and $p \succ_{\tilde{q}} \qopt(I_{i-1})(\tilde{q})$.
By Lemma~\ref{lem:blocking_pair}, we have $\sigma_{i-1}^{-1}((p, q)) > \sigma_{i-1}^{-1}((\mu'(q), q))$,
which is equivalent to $\pi^{-1}(p) > \pi^{-1}(\mu'(q))$.
This, however, cannot hold, because when applying Algorithm~\ref{alg:QOPT} to any instance $I_j$ $(j \geq i - 1)$ in which both $p$ and $\mu'(q) = \qopt(I_{i-1})(q)$ remain,
$q$ proposes to $\qopt(I_j)(q) = \qopt(I_{i-1})(q)$ in a strictly later iteration step than is rejected by $p~(\succ_q \mu'(q) = \qopt(I_j)(q))$,
which is no earlier than $p$ accepts the proposal from $\qopt(I_j)(p) = \qopt(I_{i-1})(p)$.
\end{proof}

Thus, we have proved Claim~\ref{claim:spe=qopt}, which completes the proof of Theorem~\ref{thm:qopt}.

Note that Claim~\ref{claim:spe=qopt} also leads to the following corollary,
which corresponds to a special case when $F = \emptyset$ holds for a given instance $I$ in Theorem~\ref{thm:qopt}.

\begin{corollary}\label{cor:qopt}
For any matching instance $I$ such that for every pair $(p, q) \in \qopt(I)$ the worker $q$ is on the top of the firm $p$'s list,
there exists a position-based offering order $\sigma\in\Sigma_I$ such that $\spe(I,\sigma)=\qopt(I)$.
\end{corollary}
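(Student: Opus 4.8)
The plan is to run the construction of Section~\ref{sec:construction} on the given instance $I$ and to check that, under the extra hypothesis, it can be arranged so as to output a position-based order; the corollary then follows immediately from Claims~\ref{claim:F} and \ref{claim:spe=qopt}, which are already established.

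First I would pin down the set $F = \{(p,q)\in E \mid q \succ_p \qopt(I)(p)\}$ under the hypothesis. If a firm $p$ is matched in $\qopt(I)$, then $\qopt(I)(p)$ is on the top of $p$'s list, so no pair of $\delta_I(p)$ can lie in $F$; if $p$ is unmatched in $\qopt(I)$, then $\qopt(I)(p) = p$ and every pair of $\delta_I(p)$ lies in $F$. Hence $F$ consists exactly of the offers incident to the firms unmatched in $\qopt(I)$ (and in particular $F = \emptyset$ precisely when every such firm has an empty acceptable set, the trivial situation alluded to in the text). The consequence I want to extract is that $F$ contains no offer incident to a firm matched in $\qopt(I)$.

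Next, in the first phase I would arrange the offers of $F$ not arbitrarily but so that, for each firm $p$ unmatched in $\qopt(I)$, the offers of $\delta_I(p)$ form a contiguous block listed according to $\succ_p$, these blocks appearing in any fixed order of the unmatched firms. Since $F$ contains no offer of a matched firm, every firm's offers are then contiguous in the resulting $\sigma$: the unmatched firms' blocks come first, and then, exactly as in Section~\ref{sec:construction}, the matched firms' offers appear grouped by firm in the order $p_1, \dots, p_k$ fixed by $\pi$. Thus $\sigma \in \Sigma_I$ is induced by the position order that lists the unmatched firms (in the chosen order) followed by $p_1, \dots, p_k$, so $\sigma$ is position-based; moreover the order $\sigma' \in \Sigma_{I'}$ formed by the offers of $\sigma$ after $F$ is exactly the one constructed in Section~\ref{sec:construction}. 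Claim~\ref{claim:spe=qopt} gives $\spe(I', \sigma') = \qopt(I')$, and Claim~\ref{claim:F} then yields $\spe(I, \sigma) = \qopt(I)$.

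The only genuinely non-routine step is the second one: arguing that the hypothesis confines $F$ to the offers of firms unmatched in $\qopt(I)$. This is precisely what prevents any single firm's offers from being split between the two phases of the construction, and hence what makes $\sigma$ realizable by a position order; everything afterward is a direct appeal to the already-proved Claims~\ref{claim:F} and \ref{claim:spe=qopt}.
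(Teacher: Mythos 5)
Your proof is correct, and it uses the same machinery as the paper (the two-phase construction of Section~\ref{sec:construction} plus Claims~\ref{claim:F} and~\ref{claim:spe=qopt}), but it is noticeably more careful than the paper's own justification. The paper presents the corollary as ``the special case when $F=\emptyset$'' of Theorem~\ref{thm:qopt}, which is not accurate: the hypothesis only guarantees $\delta_I(p)\cap F=\emptyset$ for firms $p$ matched in $\qopt(I)$, whereas every firm $p$ unmatched in $\qopt(I)$ contributes all of $\delta_I(p)$ to $F$, since $\qopt(I)(p)=p$ sits at the bottom of $\succ_p$. Your second step --- observing that under the hypothesis $F$ is exactly the set of offers incident to unmatched firms, so that no firm's offers are split across the two phases, and then arranging the first phase as per-firm contiguous blocks so that the whole order $\sigma$ is induced by the position order listing the unmatched firms followed by $p_1,\dots,p_k$ --- is precisely the step needed to make the combined order position-based when $F\neq\emptyset$. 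The subsequent appeal to Claim~\ref{claim:F} is legitimate: its proof only uses that the offers of $F$ all precede the rest and that each $(p,q)\in F$ has $\qopt(I)(q)\succ_q p$ by stability, neither of which depends on how $F$ is internally ordered. So your argument establishes the corollary in full, and in doing so repairs the imprecision in the paper's one-line derivation.
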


We conclude this section with the following conjecture, which generalizes this corollary.
The offering order $\sigma$ constructed in Section~\ref{sec:construction} looks somewhat artificial,
and if one can achieve the same property by a position-based order,
it is further natural in the sense of designing markets.

\begin{conjecture}
For any matching instance $I$, there exists a position-based offering order $\sigma\in\Sigma_I$
such that $\spe(I,\sigma)=\qopt(I)$.
\end{conjecture}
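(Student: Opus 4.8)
The plan is to use, as the candidate order, the same position order $\pi$ over $P$ that is built in Section~\ref{sec:construction} from the $Q$-oriented DA --- i.e.\ set $\pi(i)\coloneq p_i$ where $e_i=(p_i,q_i)\in\qopt(I_{i-1})$, $p_i\in\alpha(I_{i-1})$, $I_i\coloneq I_{i-1}/e_i$, and list the $\qopt(I)$-unmatched firms afterwards --- but now to take $\sigma\in\Sigma_I$ to be the \emph{plain} position-based order induced by $\pi$, letting every firm make all of its offers (including those in $F=\{(p,q)\in E\mid q\succ_p\qopt(I)(p)\}$) consecutively inside its own block, instead of front-loading $F$ as in Theorem~\ref{thm:qopt}. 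The conjecture then asserts $\spe(I,\sigma)=\qopt(I)$ for this $\sigma$, which would strengthen Corollary~\ref{cor:qopt} by removing the top--top hypothesis.

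First I would mirror the inductive scheme of Section~\ref{sec:correctness}. Let $(I_{i-1},\sigma_{i-1})$ be the subgame reached on the SPE path once $\pi(1),\dots,\pi(i-1)$ have each completed their blocks. One should still get $\qopt(I_{i-1})=\qopt(I)\setminus\{e_1,\dots,e_{i-1}\}$: deleting an $F$-offer does not change the $Q$-DA run at all (the corresponding worker, who strictly prefers her $\qopt$-partner by stability, never proposes to that firm), so $\alpha(\cdot)$ and $\qopt(\cdot)$ are preserved, and Lemma~\ref{lem:first_match} handles the contractions of the $e_j$'s. Firm $\pi(i)$ now makes its block: the offers in $F_i\coloneq F\cap\delta_{I_{i-1}}(\pi(i))$ first, then $e_i=(\pi(i),q_i)$. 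The goal is to show that each $(\pi(i),q)\in F_i$ is disposed of correctly --- note that by stability $q$ is neither $\qopt$-unmatched nor equal to $q_i$, so $q$ is the $\qopt$-partner of another firm: either one processed earlier (then $q$ has already left, and the offer is simply skipped) or one still pending (then $q$ must be shown to reject optimally, because the continuation game still delivers her at least $\qopt(I_{i-1})(q)\succ_q\pi(i)$) --- after which $q_i$ accepts $e_i$ exactly as in Claim~\ref{cl:accept}, and we recurse on $(I_i,\sigma_i)=\bigl((I_{i-1}\setminus F_i)/e_i,\ \sigma_{i-1}\setminus F_i/e_i\bigr)$.

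The hard part is precisely the "$q$ rejects an $F_i$-offer" step, and it is where the plain position-based setting genuinely departs from Section~\ref{sec:correctness}. There, \emph{all} of $F$ is deleted up front, so in the post-$F$ game every firm has its $\qopt$-partner on the top of its list and Lemma~\ref{lem:top_offer} can be invoked freely to certify $\spe\succeq_Q\qopt$ on the relevant workers; that is exactly what powers both Claim~\ref{cl:accept} and the rejection of over-shooting offers. Here, while we are inside $\pi(i)$'s block the later firms $\pi(i+1),\dots$ still carry their own $F$-offers, so they do \emph{not} have their $\qopt$-partner on top, and Lemma~\ref{lem:top_offer} is unavailable for them. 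Making the argument go through seems to require either a nested induction that strips $\pi(i)$'s $F$-offers off one at a time, or a single invariant maintained across the interleaved "$F$-stages" of all the firms' blocks, which still guarantees that a worker rejecting an $F$-offer can anticipate at least her $\qopt$ partner. I would also flag that the obvious shortcut fails: one cannot simply choose $\pi$ so that for every $(p,q)\in F$ the firm $\qopt(I)(q)$ precedes $p$ (which would make every $F$-offer get skipped), because the induced precedence constraints can be cyclic --- e.g.\ $P=\{p_1,p_2\}$, $Q=\{q_1,q_2\}$ with $p_1\colon q_2\succ q_1$, $p_2\colon q_1\succ q_2$, $q_1\colon p_1\succ p_2$, $q_2\colon p_2\succ p_1$, where $\qopt(I)=\{(p_1,q_1),(p_2,q_2)\}$ and both orderings of $p_1,p_2$ violate a constraint, yet $\pi=(p_1,p_2)$ still works because $q_2$ voluntarily rejects $p_1$'s opening offer, anticipating that $q_1$ will then take $p_1$ and thereby free $p_2$. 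Hence a correct proof must reason directly about such voluntary rejections occurring inside a firm's block, and making that reasoning rigorous and uniform is the heart of the conjecture.
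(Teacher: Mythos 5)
This statement is posed in the paper only as a \emph{conjecture}: the authors prove Theorem~\ref{thm:qopt} with a non-position-based order (front-loading the set $F$ of offers that over-shoot the firms' $\qopt$-partners) and obtain the position-based case only under the extra hypothesis $F=\emptyset$ (Corollary~\ref{cor:qopt}); the fully position-based case is explicitly left open, so there is no paper proof to compare against. Your proposal correctly reuses the paper's machinery from Sections~\ref{sec:construction} and~\ref{sec:correctness} (the sequence $I_i$, the sets $\alpha(I_{i-1})$, Lemma~\ref{lem:first_match}, Claim~\ref{cl:accept}) and correctly locates where the general case breaks: inside firm $\pi(i)$'s block, a worker $q$ receiving an offer $(\pi(i),q)\in F$ whose $\qopt$-partner is a later firm $\pi(j)$ must be shown to reject, and the certificate used in the paper --- Lemma~\ref{lem:top_offer} applied to the post-$F$ instance, where every matched firm has its $\qopt$-partner on top of its list --- is unavailable because $\pi(j)$ still carries its own $F$-offers above $q$. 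But having located the obstruction you do not overcome it: you write that the argument ``seems to require either a nested induction \dots{} or a single invariant'' without supplying either, so the proposal is a well-informed restatement of the difficulty rather than a proof.

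Concretely, the step that fails is the assertion that ``the continuation game still delivers her at least $\qopt(I_{i-1})(q)\succ_q\pi(i)$.'' In that continuation, before $\pi(j)$ ever reaches $q$ it makes its own $F$-offers to workers it prefers to $q$; whether those workers reject depends recursively on \emph{their} continuations, and your own $2\times 2$ example shows these dependencies can be cyclic, so no ordering of the blocks resolves them by simple precedence. A complete proof must establish, by some global induction over the interleaved $F$-stages of all blocks, that every worker facing an $F$-offer anticipates at least her own $\qopt$-partner --- which is essentially the statement being proved, now required for arbitrary subgames reached mid-block. Until that circularity is broken (or a counterexample is produced), the conjecture remains open; your sketch does not settle it.
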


\section{Impossibility to Achieve Socially Desirable Matchings by SPEs}\label{sec:imp}
\newcommand{\SPE}{\mathcal{SPE}}
\newcommand{\SM}{\mathcal{SM}}
\newcommand{\PE}{\mathcal{PE}} % Pareto-efficient
\newcommand{\FCM}{\mathcal{FCM}} % First-choice maximal
\newcommand{\popt}{{\rm POPT}}

In this section, we give some impossibility results.
We show that the SPE matchings for a matching instance may not support the firm-optimal stable matching,
any (firm-side and worker-side) Pareto-efficient matching,
and any (firm-side and worker-side) first-choice maximal matching,
which are defined successively in Definitions~\ref{def:POPT}--\ref{def:FCM}.

Let $I = (P, Q, E, {\succ})$ be a matching instance.
Recall that, for matchings $\mu, \mu' \in \mathcal{M}_I$,
we write $\mu \succeq_Q \mu'$ if $\mu(q) \succeq_q \mu'(q)$ $(\forall q \in Q)$,
and write $\mu \succ_Q \mu'$ if $\mu \neq \mu'$ in addition.
Similarly, we write $\mu \succeq_P \mu'$ if $\mu(p) \succeq_p \mu'(p)$ $(\forall p\in P)$,
and write $\mu \succ_P \mu'$ if $\mu \neq \mu'$ in addition.
Also similar to Definition~\ref{def:QOPT}, we can define the firm-optimal stable matching.

\begin{definition}[Optimality (cf.~Definition~\ref{def:QOPT})]\label{def:POPT}
The set of all stable matchings in $I$ forms a distributive lattice with respect to the partial order $\succeq_P$.
A unique maximal element in this distributive lattice is
said to be \emph{firm-optimal} 
and denoted by $\popt(I)$.
That is, $\popt(I)(p) \succeq_p \mu(p)$ for any stable matching $\mu$ in $I$ and any firm $p \in P$.
\end{definition}

\begin{definition}[Pareto-efficiency]\label{def:Pareto}
  A matching $\mu\in\mathcal{M}_I$ is called \emph{firm-side} (resp., \emph{worker-side}) \emph{Pareto-efficient} if
  no matching $\mu' \in \mathcal{M}_I \setminus \{\mu\}$ satisfies $\mu\succ_P\mu'$ (resp., $\mu\succ_Q\mu'$).
  We denote the set of firm-side (resp., worker-side) Pareto-efficient matchings by $\PE^P(I)$ (resp., $\PE^Q(I)$).
\end{definition}

\begin{definition}[First-choice maximality~\cite{DMS2018}]\label{def:FCM}
  For a matching instance $I=(P,Q,E,\succ)$,
  let $E_P\coloneqq\{(p,q)\in E\mid \text{$q$ is on the top of $p$'s list}\}$ and $E_Q\coloneqq\{(p,q)\in E\mid \text{$p$ is on the top of $q$'s list}\}$.
  A matching $\mu \in \mathcal{M}_I$ is said to be \emph{firm-side} (resp., \emph{worker-side}) \emph{first-choice maximal} if
  $|\mu\cap E_P|\ge |\mu'\cap E_P|$ (resp., $|\mu\cap E_Q|\ge |\mu'\cap E_Q|$) for all $\mu'\in\mathcal{M}_I$.
  We denote the set of firm-side (resp., worker-side) first-choice maximal matchings by $\FCM^P(I)$ (resp., $\FCM^Q(I)$).
\end{definition}

For a matching instance $I$,
let $\SPE(I)$ be the set of matchings supported by SPEs of the sequential matching games with $I$, i.e., $\SPE(I)\coloneqq\{\spe(I,\sigma)\mid \sigma\in\Sigma_I\}$.
Note that, as we saw in Section~\ref{sec:imp_qopt},
the worker-optimal stable matching is always leadable as the outcome of an SPE, i.e., $\qopt(I)\in\SPE(I)$ for every $I$.

We are ready to state our impossibility results that there exist matching instances
in which the following properties cannot be achieved by any SPE matching:
the firm-optimality under stability, the (firm-side and worker-side) Pareto-efficiency, and the (firm-side and worker-side) first-choice maximality.
\begin{proposition}\label{prop:spe_notallsm}
  There exists a matching instance $I$ such that $\popt(I)\not\in \SPE(I)$.
\end{proposition}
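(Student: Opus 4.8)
The plan is to exhibit a small matching instance in which the worker-optimal and firm-optimal stable matchings differ and in which the firm preference lists are short enough that the tractable-case result from Section~\ref{sec:restrict} pins down all SPE matchings. Concretely, take $P=\{p_1,p_2\}$, $Q=\{q_1,q_2\}$, $E=P\times Q$, with preferences
\begin{align*}
  &p_1\colon q_1 \succ_{p_1} q_2 && q_1\colon p_2 \succ_{q_1} p_1\\
  &p_2\colon q_2 \succ_{p_2} q_1 && q_2\colon p_1 \succ_{q_2} p_2.
\end{align*}
Running Algorithm~\ref{alg:QOPT} from the worker side gives $\qopt(I)=\{(p_1,q_2),(p_2,q_1)\}$, and running the symmetric firm-oriented procedure gives $\popt(I)=\{(p_1,q_1),(p_2,q_2)\}$; a direct check of the two possible blocking pairs shows both matchings are stable, so in particular $\popt(I)\ne\qopt(I)$.

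Next I would observe that every firm in $I$ has a preference list of length (at most) two, so for every consistent offering order $\sigma\in\Sigma_I$ the game $(I,\sigma)$ is a $(2,\infty)$-\SPEM instance. By Corollary~\ref{cor:2inf}, $\spe(I,\sigma)=\qopt(I)$ for all $\sigma\in\Sigma_I$, hence $\SPE(I)=\{\qopt(I)\}$. Since $\popt(I)\ne\qopt(I)$, we conclude $\popt(I)\notin\SPE(I)$, which is exactly the assertion of the proposition.

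The argument has essentially no difficult step: the only thing to verify with care is that this two-by-two ``conflicting preferences'' instance genuinely has $\popt(I)\ne\qopt(I)$ (classical) and that its firm lists have length at most two, so that Corollary~\ref{cor:2inf} applies. If one prefers an argument that does not invoke Corollary~\ref{cor:2inf}, one can instead enumerate the two essentially distinct consistent offering orders---those beginning with $(p_1,q_1)$ and those beginning with $(p_2,q_2)$---and apply Proposition~\ref{prop:operations_spe} (backward induction) or Lemma~\ref{lem:top_offer} directly to each to verify that the SPE matching equals $\qopt(I)$ in both cases; but citing Corollary~\ref{cor:2inf} is the cleanest route.
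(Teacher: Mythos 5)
Your proof is correct, and it reaches the conclusion by a genuinely different route from the paper's. The paper (Example~\ref{ex:spe_notallsm}) uses the same two-by-two ``conflicting preferences'' instance, up to relabeling the workers, but establishes $\SPE(I)=\{\qopt(I)\}$ by brute force: it notes that by symmetry the first offer may be assumed to be $p_1$'s offer to its top choice, enumerates the three consistent offering orders with that first offer, and computes the SPE of each game tree explicitly (Fig.~\ref{fig:spe_notallsm}). You instead observe that every firm's list has length two, so every $(I,\sigma)$ with $\sigma\in\Sigma_I$ is a $(2,\infty)$-\SPEM instance, and Corollary~\ref{cor:2inf} immediately gives $\spe(I,\sigma)=\qopt(I)$ for all $\sigma$, hence $\SPE(I)=\{\qopt(I)\}\not\ni\popt(I)$. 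Your route is shorter and avoids any case analysis, at the cost of importing the machinery of Section~\ref{sec:restrict} (SFDA and Theorem~\ref{thm:restricted_3}); the paper's enumeration is self-contained and also supplies the explicit trees it reuses for the firm-side parts of Propositions~\ref{prop:nonPareto} and~\ref{prop:nonFCM}, though for those the only fact actually needed is again $\SPE(I)=\{\qopt(I)\}$, which your argument also delivers. Your identification of $\qopt(I)$ and $\popt(I)$ and the verification that they differ are correct, so there is no gap.
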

\begin{proof}
  See Example~\ref{ex:spe_notallsm}.
\end{proof}

\begin{proposition}\label{prop:nonPareto}
  There exist matching instance $I$ and $I'$ such that $\SPE(I)\cap \PE^P(I)=\emptyset$ and $\SPE(I')\cap \PE^Q(I')=\emptyset$.
\end{proposition}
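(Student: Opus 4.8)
The plan is to prove both claims by exhibiting small explicit matching instances and verifying, by a finite case analysis over all consistent offering orders, that none of the SPE matchings is firm-side (respectively worker-side) Pareto-efficient; this follows the pattern of the other impossibility results in this section, where the proof is deferred to a concrete example. Since $\Sigma_{I}$ is finite for a fixed instance and $\spe(I,\sigma)$ is computable via the backward-induction recursion of Proposition~\ref{prop:operations_spe}, the verification is in principle mechanical, so the real work is in choosing instances small enough to display while still defeating \emph{every} order simultaneously.

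For the worker-side instance $I'$ I would start from the classical fact that $\qopt(I')$ itself can fail to be worker-side Pareto-efficient: take three firms and three workers arranged in a ``trading cycle'', where each worker $q_i$ most prefers the firm lying just below its $\qopt$-partner, so that the (unstable) matching obtained by rotating all three assignments is weakly better for every worker and strictly better for at least one. By Theorem~\ref{thm:qopt} we have $\qopt(I')\in\SPE(I')$, which immediately yields one non-Pareto-efficient SPE matching together with a dominating matching $\mu^{*}$. The substantive step is to show that every other outcome $\spe(I',\sigma)$ is also worker-side dominated. I would cut down the case analysis using Lemma~\ref{lem:top_offer} (certain ``top'' pairs are forced into the SPE matching) and Lemma~\ref{lem:blocking_pair} (which constrains the blocking pairs an SPE matching may have, hence which matchings can arise as SPE outcomes at all), leaving only a handful of candidate matchings, and then exhibit for each a worker-side dominating matching --- often $\mu^{*}$ itself. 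If this pruning is not decisive, a direct enumeration of the finitely many consistent offering orders, with $\spe$ computed by the recursion, closes the remaining cases.

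The firm-side instance $I$ is handled in the same spirit but not by symmetry, because the game is played only by the workers, so the firm side behaves adversarially and does not enjoy a stable matching ``for free'' as an SPE outcome the way the worker side gets $\qopt$ from Theorem~\ref{thm:qopt}. Here the target is an instance in which \emph{every} firm-side Pareto-efficient matching lies outside $\SPE(I)$ --- a strictly stronger requirement than $\popt(I)\notin\SPE(I)$, which Proposition~\ref{prop:spe_notallsm} already provides. I would either augment the instance witnessing Proposition~\ref{prop:spe_notallsm} and additionally check that the remaining firm-side Pareto-efficient matchings are also excluded from $\SPE(I)$, or construct a fresh small example and enumerate, again using Lemmas~\ref{lem:top_offer} and~\ref{lem:blocking_pair} to collapse the set of orders down to a few outcome matchings. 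It is also plausible that a single instance witnesses both statements, as the ``rural hospital'' propositions were witnessed by the combined instance built from Examples~\ref{ex:rural_hospital} and~\ref{ex:rural_hospital2}, though presenting two separate examples is equally acceptable.

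The step I expect to be the main obstacle is precisely the universal quantifier over offering orders: finding \emph{one} SPE matching that is not Pareto-efficient is easy, but engineering a small instance whose \emph{entire} SPE-matching set, ranging over the possibly large family $\Sigma_I$, consists only of Pareto-dominated matchings, while keeping the argument humanly checkable, is the delicate part. The mitigations are to keep the instances to a handful of firms and workers, to apply Lemmas~\ref{lem:top_offer} and~\ref{lem:blocking_pair} aggressively so that only a few candidate SPE matchings survive, and to display a dominating matching for each survivor.
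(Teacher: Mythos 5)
Your overall strategy---exhibit small concrete instances and determine the whole set $\SPE(\cdot)$ by case analysis over consistent offering orders---is exactly the paper's, and your firm-side plan coincides with what the paper actually does: it reuses the $2\times 2$ instance of Example~\ref{ex:spe_notallsm}, where $\SPE(I)=\{\qopt(I)\}$ while $\PE^P(I)=\{\popt(I)\}$, so the strengthening beyond Proposition~\ref{prop:spe_notallsm} is a one-line check.

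The gap is on the worker side, where the entire content of the proof is the example itself and you leave it unconstructed. Moreover, the family you point to (a $3\times 3$ ``trading cycle'' in which $\qopt$ is worker-dominated) is genuinely at risk of failing: the paper's own Example~\ref{ex:intro} has essentially this cyclic shape, and there the SPE matching $\{(p_1,q_1),\,(p_2,q_3),\,(p_3,q_2)\}$ gives $q_2$ and $q_3$ their first choices, strictly Pareto-improves on the unique stable matching for the workers, and is itself worker-side Pareto-efficient. So starting from ``$\qopt$ is dominated'' and invoking Theorem~\ref{thm:qopt} gives you one bad SPE outcome, but does nothing to rule out some other order whose SPE outcome is precisely the efficient rotation; in cyclic instances the workers can often coordinate on that rotation. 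The paper sidesteps this with an asymmetric $2$-firm, $3$-worker instance (Example~\ref{ex:nonPareto}): a third worker $q_3$ is placed second on both firms' lists and prefers $p_1$, so any worker who rejects her stable partner hoping to trade up needs the other firm's offer to fall through to her, and $q_3$ intercepts it first. A short direct argument (no enumeration of orders) then shows $\SPE(I')$ is the singleton $\{\{(p_1,q_1),\,(p_2,q_2)\}\}$, which is worker-dominated by the swap $\{(p_1,q_2),\,(p_2,q_1)\}$. Your plan needs such an interception gadget---or at minimum a fully specified and verified instance---before it constitutes a proof.
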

\begin{proof}
See Example~\ref{ex:spe_notallsm} for the firm-side and Example~\ref{ex:nonPareto} for the worker-side.
\end{proof}

\begin{proposition}\label{prop:nonFCM}
  There exist matching instances $I$ and $I'$ such that $\SPE(I)\cap \FCM^P(I)=\emptyset$ and $\SPE(I')\cap \FCM^Q(I')=\emptyset$.
\end{proposition}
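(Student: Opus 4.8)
The plan is to reduce both parts to a statement about the worker-optimal stable matching. By Corollary~\ref{cor:2inf}, if every firm's preference list has length at most two then $\spe(I,\sigma)=\qopt(I)$ for \emph{every} offering order $\sigma\in\Sigma_I$, hence $\SPE(I)=\{\qopt(I)\}$. So it suffices to exhibit (i) a matching instance $I$ in which all firms' lists have length at most two and $\qopt(I)\notin\FCM^P(I)$, and (ii) a similar instance $I'$ with $\qopt(I')\notin\FCM^Q(I')$. Since the worker-oriented deferred acceptance algorithm always matches a worker to her favourite whenever doing so is conflict-free, the two cases genuinely call for different examples, and I would build the firm-side example around a cyclic preference pattern.

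For the firm side I would take $I$ with $P=\{p_1,p_2,p_3\}$, $Q=\{q_1,q_2,q_3\}$ and preferences
\begin{align*}
  &p_1\colon q_1\succ q_3, && p_2\colon q_2\succ q_1, && p_3\colon q_3\succ q_2,\\
  &q_1\colon p_2\succ p_1, && q_2\colon p_3\succ p_2, && q_3\colon p_1\succ p_3.
\end{align*}
All lists have length two, and $E_P=\{(p_1,q_1),(p_2,q_2),(p_3,q_3)\}$. Running Algorithm~\ref{alg:QOPT}, every worker is accepted on her first proposal, so $\qopt(I)=\{(p_1,q_3),(p_2,q_1),(p_3,q_2)\}$, which contains no pair of $E_P$; yet $E_P$ is itself a feasible matching of $I$. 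Hence $\qopt(I)\notin\FCM^P(I)$, and $\SPE(I)\cap\FCM^P(I)=\{\qopt(I)\}\cap\FCM^P(I)=\emptyset$.

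For the worker side I would take $I'$ on the same $P$ and $Q$ with
\begin{align*}
  &q_1\colon p_1\succ p_3, && q_2\colon p_1\succ p_2, && q_3\colon p_2\succ p_3,\\
  &p_1\colon q_1\succ q_2, && p_2\colon q_2\succ q_3, && p_3\colon q_1\succ q_3,
\end{align*}
so again all firms' lists have length two, and $E_Q=\{(p_1,q_1),(p_1,q_2),(p_2,q_3)\}$. In Algorithm~\ref{alg:QOPT}, $q_1$ and $q_2$ both propose to $p_1$, which keeps $q_1$; the rejected $q_2$ then displaces $q_3$ from $p_2$, and $q_3$ ends at $p_3$, giving $\qopt(I')=\{(p_1,q_1),(p_2,q_2),(p_3,q_3)\}$. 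This matching meets $E_Q$ only in $(p_1,q_1)$, while $\{(p_1,q_2),(p_2,q_3),(p_3,q_1)\}\in\mathcal{M}_{I'}$ meets $E_Q$ in the two pairs $(p_1,q_2),(p_2,q_3)$. Hence $\qopt(I')\notin\FCM^Q(I')$, so $\SPE(I')\cap\FCM^Q(I')=\emptyset$.

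The only remaining work is the routine verification of the two executions of Algorithm~\ref{alg:QOPT} and of the feasibility and first-choice counts of the competitor matchings; I do not anticipate any conceptual obstacle, because Corollary~\ref{cor:2inf} collapses the quantification over all offering orders to a single matching. (A slicker alternative would be a single instance witnessing both failures at once, but that seems to require a more delicate preference pattern, so the two-example route is the safe plan.)
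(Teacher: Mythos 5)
Your proposal is correct, but it reaches the conclusion by a different mechanism than the paper. The paper pins down $\SPE(I)$ directly: for the firm side it uses a $2\times 2$ instance (Example~\ref{ex:spe_notallsm}) and enumerates the game trees of all offering orders to show $\SPE(I)=\{\qopt(I)\}$, and for the worker side it uses a $2$-firm, $3$-worker instance (Example~\ref{ex:nonPareto}) with an ad hoc contradiction argument, the latter being necessary because there the firms' lists have length three. You instead restrict to instances in which every firm's list has length at most two, so that Corollary~\ref{cor:2inf} immediately gives $\spe(I,\sigma)=\qopt(I)$ for \emph{every} $\sigma\in\Sigma_I$ and hence $\SPE(I)=\{\qopt(I)\}$ with no case analysis over orders; the whole burden then shifts to exhibiting $(2,\cdot)$ instances where $\qopt$ fails first-choice maximality on the relevant side. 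Both of your instances check out: in the cyclic $3\times 3$ firm-side example the worker-proposing deferred acceptance terminates in one round with $\qopt(I)=\{(p_2,q_1),(p_3,q_2),(p_1,q_3)\}$, which is disjoint from $E_P=\{(p_1,q_1),(p_2,q_2),(p_3,q_3)\}$ even though $E_P$ itself is a feasible matching; in the worker-side example $\qopt(I')=\{(p_1,q_1),(p_2,q_2),(p_3,q_3)\}$ meets $E_Q$ in one pair while $\{(p_1,q_2),(p_2,q_3),(p_3,q_1)\}$ meets it in two. What your route buys is uniformity and brevity (no tree enumeration, no bespoke contradiction argument); what it costs is reliance on the nontrivial Corollary~\ref{cor:2inf}, whereas the paper's examples are self-contained and, in the firm-side case, double as witnesses for Propositions~\ref{prop:spe_notallsm} and~\ref{prop:nonPareto} as well.
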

\begin{proof}
See Example~\ref{ex:spe_notallsm} for the firm-side and Example~\ref{ex:nonPareto} for the worker-side.
\end{proof}

\begin{example}[Proof of Proposition~\ref{prop:spe_notallsm} and the firm-side parts of Propositions~\ref{prop:nonPareto} and~\ref{prop:nonFCM}]\label{ex:spe_notallsm}
Let us consider a matching instance $I=(P,Q,E,\succ)$ with two firms $P=\{p_1,p_2\}$ and two workers $Q=\{q_1,q_2\}$
where $E=P\times Q$ and the preferences are
\begin{align*}
  &p_1:~ q_2 \succ_{p_1} q_1 && q_1:~ p_1 \succ_{q_1} p_2\\
  &p_2:~ q_1 \succ_{p_2} q_2 && q_2:~ p_2 \succ_{q_2} p_1.
\end{align*}
In this instance, there are two stable matchings $\qopt(I)=\{(p_1,q_1),\,(p_2,q_2)\}$ and $\popt(I)=\{(p_1,q_2),\,(p_2,q_1)\}$. 

By symmetry, without loss of generality, we may assume that the first offer is $(p_1,q_2)$.
Then, there are three possible offering orders $\sigma_1$, $\sigma_2$, and $\sigma_3$:
\begin{align*}
\bigl(\sigma_1(1),\,\sigma_1(2),\,\sigma_1(3),\,\sigma_1(4)\bigr)&=\bigl((p_1,q_2),\,(p_1,q_1),\,(p_2,q_1),\,(p_2,q_2)\bigr),\\
\bigl(\sigma_2(1),\,\sigma_2(2),\,\sigma_2(3),\,\sigma_2(4)\bigr)&=\bigl((p_1,q_2),\,(p_2,q_1),\,(p_1,q_1),\,(p_2,q_2)\bigr),\\
\bigl(\sigma_3(1),\,\sigma_3(2),\,\sigma_3(3),\,\sigma_3(4)\bigr)&=\bigl((p_1,q_2),\,(p_2,q_1),\,(p_2,q_2),\,(p_1,q_1)\bigr).
\end{align*}
It is easy to compute that $\spe(I,\sigma_i)=\qopt(I)$ for $i=1,2,3$ as shown in Fig.~\ref{fig:spe_notallsm}.
Hence, $\SPE(I)=\{\qopt(I)\}$ and $\popt(I)\not\in\SPE(I)$.
In addition, $\SPE(I)\cap\PE^P(I)=\SPE(I)\cap\FCM^P(I)=\emptyset$ because $\PE^P(I)=\FCM^P(I)=\{\popt(I)\}$.
\end{example}

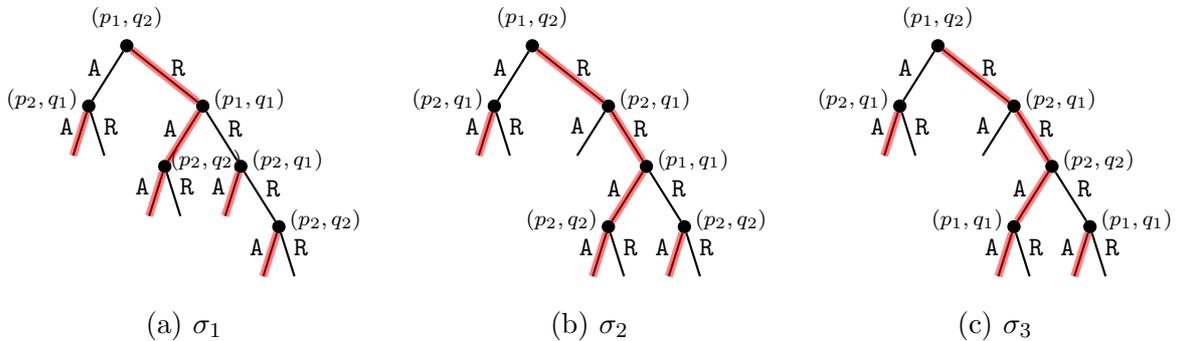
\begin{figure}[htbp]
\begin{minipage}{.33\textwidth}
\centering
\begin{tikzpicture}[thick,xscale=0.25,yscale=0.8,
    p/.style={circle,fill=black,inner sep=0pt,outer sep=0pt,minimum size=5pt}]
  % 1
  \node[p,label={[]{\scriptsize$(p_1,q_2)$}}] at (3,4)  (1-1) {}; 
  % 2
  \node[p,label={[left]{\scriptsize$(p_2,q_1)$}}] at (1,3)  (2-1) {}; 
  \node[p,label={[right]{\scriptsize$(p_1,q_1)$}}] at (7,3)  (2-2) {}; 
  % 3
  \node[] at (0,2)  (3-1) {}; 
  \node[] at (2,2)  (3-2) {}; 
  \node[p,label={[right,xshift=-2pt]{\scriptsize$(p_2,q_2)$}}] at (5,2)  (3-3) {}; 
  \node[p,label={[right]{\scriptsize$(p_2,q_1)$}}] at (9,2)  (3-4) {}; 
  % 4
  \node[] at (4,1)  (4-1) {}; 
  \node[] at (6,1)  (4-2) {}; 
  \node[] at (8,1)  (4-3) {}; 
  \node[p,label={[right]{\scriptsize$(p_2,q_2)$}}] at (11,1)  (4-4) {}; 
  % 5
  \node[] at (10,0)  (5-1) {}; 
  \node[] at (12,0)  (5-2) {}; 

  % accept
  \foreach \u/\v in {1-1/2-1,2-1/3-1,2-2/3-3,3-3/4-1,3-4/4-3,4-4/5-1}
    \draw (\u) -- (\v) node [pos=.5,font=\small,xshift=-5pt,yshift=3pt] {\texttt{A}};
  % reject
  \foreach \u/\v in {1-1/2-2,2-1/3-2,2-2/3-4,3-3/4-2,3-4/4-4,4-4/5-2}
    \draw (\u) -- (\v) node [pos=.5,font=\small,xshift=5pt,yshift=3pt] {\texttt{R}};
  % SPE
  \foreach \u/\v in {1-1/2-2,2-1/3-1,2-2/3-3,3-3/4-1,3-4/4-3,4-4/5-1}
    \draw[line width=3pt,red,opacity=.5] (\u) -- (\v);  
\end{tikzpicture}
(a) $\sigma_1$
\end{minipage}%
\begin{minipage}{.34\textwidth}%\sigma_2
\centering
\begin{tikzpicture}[thick,xscale=0.25,yscale=0.8,
    p/.style={circle,fill=black,inner sep=0pt,outer sep=0pt,minimum size=5pt}]
  % 1
  \node[p,label={[]{\scriptsize$(p_1,q_2)$}}] at (3,4)  (1-1) {}; 
  % 2
  \node[p,label={[left]{\scriptsize$(p_2,q_1)$}}] at (1,3)  (2-1) {}; 
  \node[p,label={[right]{\scriptsize$(p_2,q_1)$}}] at (7,3)  (2-2) {}; 
  % 3
  \node[] at (0,2)  (3-1) {}; 
  \node[] at (2,2)  (3-2) {}; 
  \node[] at (5,2)  (3-3) {}; 
  \node[p,label={[right]{\scriptsize$(p_1,q_1)$}}] at (9,2)  (3-4) {}; 
  % 4
  \node[p,label={[left]{\scriptsize$(p_2,q_2)$}}] at (7,1)  (4-1) {}; 
  \node[p,label={[right]{\scriptsize$(p_2,q_2)$}}] at (11,1)  (4-2) {}; 
  % 5
  \node[] at (6,0)  (5-1) {}; 
  \node[] at (8,0)  (5-2) {}; 
  \node[] at (10,0)  (5-3) {}; 
  \node[] at (12,0)  (5-4) {}; 

  % accept
  \foreach \u/\v in {1-1/2-1,2-1/3-1,2-2/3-3,3-4/4-1,4-1/5-1,4-2/5-3}
    \draw (\u) -- (\v) node [pos=.5,font=\small,xshift=-5pt,yshift=3pt] {\texttt{A}};
  % reject
  \foreach \u/\v in {1-1/2-2,2-1/3-2,2-2/3-4,3-4/4-2,4-1/5-2,4-2/5-4}
    \draw (\u) -- (\v) node [pos=.5,font=\small,xshift=5pt,yshift=3pt] {\texttt{R}};
  % SPE
  \foreach \u/\v in {1-1/2-2,2-1/3-1,2-2/3-4,3-4/4-1,4-1/5-1,4-2/5-3}
    \draw[line width=3pt,red,opacity=.5] (\u) -- (\v);  
\end{tikzpicture}
(b) $\sigma_2$
\end{minipage}%
\begin{minipage}{.33\textwidth}
\centering
\begin{tikzpicture}[thick,xscale=0.25,yscale=0.8,
    p/.style={circle,fill=black,inner sep=0pt,outer sep=0pt,minimum size=5pt}]
  % 1
  \node[p,label={[]{\scriptsize$(p_1,q_2)$}}] at (3,4)  (1-1) {}; 
  % 2
  \node[p,label={[left]{\scriptsize$(p_2,q_1)$}}] at (1,3)  (2-1) {}; 
  \node[p,label={[right]{\scriptsize$(p_2,q_1)$}}] at (7,3)  (2-2) {}; 
  % 3
  \node[] at (0,2)  (3-1) {}; 
  \node[] at (2,2)  (3-2) {}; 
  \node[] at (5,2)  (3-3) {}; 
  \node[p,label={[right]{\scriptsize$(p_2,q_2)$}}] at (9,2)  (3-4) {}; 
  % 4
  \node[p,label={[left]{\scriptsize$(p_1,q_1)$}}] at (7,1)  (4-1) {}; 
  \node[p,label={[right]{\scriptsize$(p_1,q_1)$}}] at (11,1)  (4-2) {}; 
  % 5
  \node[] at (6,0)  (5-1) {}; 
  \node[] at (8,0)  (5-2) {}; 
  \node[] at (10,0)  (5-3) {}; 
  \node[] at (12,0)  (5-4) {}; 

  % accept
  \foreach \u/\v in {1-1/2-1,2-1/3-1,2-2/3-3,3-4/4-1,4-1/5-1,4-2/5-3}
    \draw (\u) -- (\v) node [pos=.5,font=\small,xshift=-5pt,yshift=3pt] {\texttt{A}};
  % reject
  \foreach \u/\v in {1-1/2-2,2-1/3-2,2-2/3-4,3-4/4-2,4-1/5-2,4-2/5-4}
    \draw (\u) -- (\v) node [pos=.5,font=\small,xshift=5pt,yshift=3pt] {\texttt{R}};
  % SPE
  \foreach \u/\v in {1-1/2-2,2-1/3-1,2-2/3-4,3-4/4-1,4-1/5-1,4-2/5-3}
    \draw[line width=3pt,red,opacity=.5] (\u) -- (\v);  
\end{tikzpicture}
(c) $\sigma_3$
\end{minipage}
\caption{The tree representation of the games in Example~\ref{ex:spe_notallsm} with offering orders $\sigma_1$, $\sigma_2$, and $\sigma_3$. The bold red edges indicate the SPEs.}\label{fig:spe_notallsm}
\end{figure}

\begin{example}[Proof of the worker-side parts of Propositions~\ref{prop:nonPareto} and~\ref{prop:nonFCM}]\label{ex:nonPareto}
Let us consider a matching instance $I=(P,Q,E,\succ)$ with two firms $P=\{p_1,p_2\}$ and three workers $Q=\{q_1,q_2,q_3\}$
where $E=P\times Q$ and the preferences are
\begin{align*}
  &p_1:~ q_1 \succ_{p_1} q_3 \succ_{p_1} q_2 && q_1:~ p_2 \succ_{q_1} p_1\\
  &p_2:~ q_2 \succ_{p_2} q_3 \succ_{p_2} q_1 && q_2:~ p_1 \succ_{q_2} p_2\\
  &                                          && q_3:~ p_1 \succ_{q_3} p_2.
\end{align*}
We claim that $\SPE(I)=\bigl\{\{(p_1,q_1),\,(p_2,q_2)\}\bigr\} ~(=\{\qopt(I)\})$ by contradiction.
Suppose that $\spe(I,\sigma^*)\ne\{(p_1,q_1),\,(p_2,q_2)\}$ for some $\sigma^*\in\Sigma_I$.
Then, at least one of the offers $(p_1,q_1)$ and $(p_2,q_2)$ is rejected on the path according to the SPE of the game $(I,\sigma^*)$.
If $q_1$ chooses \REJECT for the offer $(p_1,q_1)$, then she must match with a firm better than $p_1$ (i.e., $p_2$) in the SPE matching.
This implies that $q_2$ also chooses \REJECT for the offer $(p_2, q_2)$ on the SPE path, and hence $q_2$ must match with $p_1$ in the SPE matching.
However, this is impossible because $q_3$ would choose \ACCEPT for the offer $(p_1,q_3)$.
By similar arguments, we can obtain a contradiction for the case when $q_2$ chooses \REJECT for the offer $(p_2,q_2)$.
Hence, $\SPE(I)=\bigl\{\{(p_1,q_1),\,(p_2,q_2)\}\bigr\}$.

Here, we have
\begin{align*}
\PE^Q(I)&=\bigl\{\{(p_1,q_2),\,(p_2,q_1)\},\,\{(p_1,q_3),\,(p_2,q_1)\},\,\{(p_1,q_2),\,(p_2,q_3)\},\,\{(p_1,q_3),\,(p_2,q_2)\}\bigr\},\\
\FCM^Q(I)&=\bigl\{\{(p_1,q_2),\,(p_2,q_1)\},\,\{(p_1,q_3),\,(p_2,q_1)\}\bigr\}.
\end{align*}
Thus, we obtain $\SPE(I)\cap\PE^Q(I)=\SPE(I)\cap\FCM^Q(I)=\emptyset$.
\end{example}

It should be noted that, by combining Examples~\ref{ex:spe_notallsm} and~\ref{ex:nonPareto},
one can construct a matching instance $I$ such that $\SPE(I)\cap(\{\popt(I)\}\cup \PE^P(I)\cup \PE^Q(I)\cup \FCM^P(I)\cup \FCM^Q(I))=\emptyset$.

\subsection*{Acknowledgments}
The authors thank anonymous reviewers of EC'18 for their valuable comments.
This work was supported 
by JSPS KAKENHI Grant Numbers JP16H06931, JP16K16005, and JP18K18004,
by JST ACT-I Grant Number JPMJPR17U7,
and by JST CREST Grant Number JPMJCR14D2.

% Bibliography
\bibliographystyle{plain}
\bibliography{sequential_matching}

\end{document}